\theoremstyle{remark}
\theoremstyle{plain}
\newtheorem{theorem}{Theorem}
\theoremstyle{definition}
\newtheorem{definition}{Definition}
\newtheorem{eg}[theorem]{Example}
\newtheorem{claim}{Claim}
\newtheorem{proposition}{Proposition}
\newcommand*{\LABELREVERSE}{{non-simple intersecting}}
\newcommand*{\LABELNONSIMPLE}{{non-simple non-intersecting}}
\newcommand{\munich}{Max Planck Institute for Physics, Garching bei M\"unchen, Germany}
\newcommand{\davis}{Center for Quantum Mathematics and Physics (QMAP), 
University of California, Davis, USA}
\newcommand{\la}{\langle}
\newcommand{\ra}{\rangle}
\newcommand{\ang}[1]{\langle #1\rangle}
\newcommand{\sq}[1]{\left[ #1\right]}
\begin{document}

\preprint{MPP-2025-44}

\title{All-loop Leading Singularities of Wilson Loops}

\author[a]{Taro V. Brown,}
\author[b]{Johannes M. Henn,}
\author[b]{Elia Mazzucchelli,}
\author[a]{Jaroslav Trnka}
\affiliation[a]{\davis}
\affiliation[b]{\munich}
\emailAdd{tvbrown@ucdavis.edu}
\emailAdd{henn@mpp.mpg.de}
\emailAdd{eliam@mpp.mpg.de}
\emailAdd{trnka@ucdavis.edu}

\abstract{
We study correlators of null, $n$-sided polygonal Wilson loops with a Lagrangian insertion in the planar limit of the ${\cal N}=4$ supersymmetric Yang-Mills theory. This finite observable is closely related to loop integrands of maximally-helicity-violating amplitudes in the same theory, and, conjecturally, to all-plus helicity amplitudes in pure Yang-Mills theory.
The resulting function has been observed to have an expansion in terms of functions of uniform transcendental weight, multiplied by certain rational prefactors, called leading singularities. In this work we prove several conjectures about the leading singularities: we classify and compute them at any loop order and for any number of edges of the Wilson loop, and show that they have a hidden conformal symmetry.
This is achieved by leveraging the geometric definition of the loop integrand via the Amplituhedron. The leading singularities can be seen as maximal codimension residues of the integrand, and the boundary structure of the Amplituhedron geometry restricts which iterative residues are accessible. Combining this idea with a further geometric decomposition of the Amplituhedron in terms of so-called negative geometries allows us to identify the complete set of leading singularities.}

\setcounter{tocdepth}{2}

\maketitle
\graphicspath{{figs/}}

\newpage

\section{Introduction}
\label{sec:introduction}

Scattering amplitudes in the planar limit of the ${\cal N}=4$ supersymmetric Yang-Mills theory (sYM) have been an amazing playground for testing new theoretical ideas. They have played an important role in the discovery of hidden mathematical structures in the heart of scattering amplitudes, and the formulation of new efficient methods used later in the context of QCD and gravitational wave calculations \cite{Henn:2020omi,Bern:2022jnl,Travaglini:2022uwo}. One of the most important features of ${\cal N}=4$ sYM amplitudes is the duality with Wilson loops \cite{Alday:2008yw,Henn:2009bd}. In particular, the $n$-particle scattering amplitude is dual to null polygonal Wilson loop with $n$ cusps. This duality underlies the dual conformal and Yangian symmetry \cite{Drummond:2008vq,Berkovits:2008ic,Drummond:2009fd} of ${\cal N}=4$ sYM amplitudes, which is a cornerstone of many other advances in this field, including the symbol bootstrap \cite{Caron-Huot:2020bkp}, cluster algebra structures \cite{Goncharov:2010jf}, connection to cells in the positive Grassmannian and on-shell diagrams \cite{Arkani-Hamed:2009ljj,Arkani-Hamed:2012zlh,Arkani-Hamed:2014bca,Paranjape:2022ymg,Brown:2022wqr}, and the Amplituhedron \cite{Arkani-Hamed:2013jha,Arkani-Hamed:2013kca,Arkani-Hamed:2017vfh,Damgaard:2019ztj,Ferro:2022abq}.

The Amplituhedron provides a novel, geometric definition of the loop integrand of planar sYM theory. (For recent reviews, see references \cite{Herrmann:2022nkh,De:2024bpk}, as well as the special volume \cite{Ranestad:2025qay} on Positive Geometry.) 
Starting from the kinematic data of the amplitudes (as well as additional auxiliary variables that encode the helicity information), i.e. the external and loop momenta, the Amplituhedron space is defined via certain inequalities. Conjecturally, this space admits a unique canonical form with logarithmic singularities at its boundaries, and the latter coincides with the loop integrand. While this has been proven rigorously only at one loop,
the construction is consistent with unitarity and on-shell recursion relations at all loop orders \cite{Arkani-Hamed:2013kca}. Remarkably, the latter are outputs of this definition, not inputs. This radically new picture for scattering amplitudes has led to many new insights and results for amplitudes in sYM and ABJM theories \cite{Franco:2014csa,Dennen:2016mdk,Arkani-Hamed:2018rsk,Herrmann:2020qlt,Dian:2022tpf,He:2023rou,Arkani-Hamed:2023epq,Ferro:2024vwn,De:2024bpk,Dian:2024hil}. The Amplituhedron has also been extensively studied from a mathematical perspective and is connected to recent developments in many disciplines, including cluster algebras or tropical geometry \cite{Galashin:2018fri,Lukowski:2020dpn,Parisi:2021oql,Even-Zohar:2023del,Akhmedova:2023wcf,Even-Zohar:2024nvw,Lam:2024gyg,Parisi:2024psm,Galashin:2024ttp}.

Scattering amplitudes in gauge theories are divergent in the infrared (IR) because of the presence of massless particles. This requires introducing a regulator, such as dimensional regularization with $D=4 - 2\epsilon$, which in principle breaks or at least obscures some of the nice features mentioned above. Nevertheless, the general structure of IR divergences in gauge theory is well understood, especially in the planar limit. In fact, studies in sYM theory have helped to elucidate these structures, and in even provided exact results. As an example, the leading infrared divergence of the amplitude, the cusp anomalous dimension, is predicted (in the planar limit) from integrability by the Beisert-Eden-Staudacher formula to arbitrary values of coupling \cite{Beisert:2006ez}. Moreover, the finite part of the amplitudes is constrained by dual conformal symmetry, which fixes its functional form \cite{Alday:2007hr,Drummond:2007au}
at four- and five-particles to agree with the Bern-Dixon-Smirnov conjecture \cite{Bern:2005iz}, and  for $n\ge 6$ constrains it to be a function of dual conformal cross-ratios \cite{Drummond:2008vq}, which has been intensively studied \cite{Caron-Huot:2020bkp}. 

These remarkable results relied in part on symmetry and integrability properties, as well as on certain bootstrap assumptions. We find it highly desirable to connect these advances to the Amplituhedron definition of integrands at any loop order. 
The necessity to introduce dimensional regularization at intermediate stages of calculations present a major obstacle to this. Indeed, the results for finite, four-dimensional parts of amplitudes only appear after carefully dealing with regulator-dependent objects. 
Ideally one would like to connect more directly predictions for observables, by which we mean finite, regulator-independent quantities, to the underlying structures.

One line of research with that philosophy is to consider four-dimensional scattering amplitudes on the Coulomb branch of sYM theory. The pattern of vacuum expectation values for (some of) the scalar fields in the theory introduces certain mass terms in the Lagrangian. As argued in reference \cite{Alday:2009zm}, this allows to define scattering amplitudes that are infrared finite in four dimensions, preserve dual conformal symmetry, and in the limit of massless particles reproduce the known massless amplitudes in a controlled way. 
Moreover, Coulomb branch amplitudes are conjectured to be related to a mass-deformed version of the Amplituhedron \cite{Arkani-Hamed:2023epq,Flieger:2025ekn}.

Here we follow a different approach, which consists in focusing on an observable that is closely related to scattering amplitudes.
A natural variant of the null polygonal Wilson loops, which are dual to amplitudes, are correlation functions of such Wilson loops with local operators. Such observables are natural in the AdS/CFT correspondence. Moreover, when normalizing those correlators by the (vacuum expectation value of the) Wilson loop itself, the result is finite in four dimensions. 
A particularly interesting choice of operator insertion is that of the stress-tensor multiplet, which contains the Lagrangian of the theory. We denote this finite ratio by $F_n$ \cite{Alday:2011pf,Alday:2011ga}.

This observable has a direct connection to scattering amplitudes. Thanks to the Wilson loop / scattering amplitudes duality, its integrand can be obtained from the Amplituhedron. The trick is to consider a derivative in the coupling of the logarithm of the amplitude / Wilson loop. The latter produces a Lagrangian insertion that is integrated over Minkowski space. In other words, the integrand of $F_n$ at $L-1$ loops is obtained from that of the (logarithm of the) $L$-loop amplitude.

An important comment is that carrying out the $L-1$ loop integrations yields a finite function, which depends on the external kinematics, as well as on the Lagrangian insertion point. 
The infrared divergences of the amplitudes appear if one thinks about performing the integration over the Lagrangian insertion point. In fact, this connection allowed the authors of ref. \cite{Henn:2019swt} to determine the first non-planar correction to the cusp anomalous dimension in sYM and QCD, at the four-loop level. In summary, $F_n$ is a finite observable, defined starting from the Amplituhedron, and can be thought of as a finite version of scattering amplitudes.

This object has been studied at both weak and strong couplings. The four-point case is known up to three loops in perturbation theory \cite{Alday:2012hy,Alday:2013ip,Henn:2019swt}, and at strong coupling \cite{Alday:2011ga}. The five-point case has recently been evaluated to two loops \cite{Chicherin:2022zxo}. Moreover, the authors of ref. \cite{Arkani-Hamed:2021iya} proposed an Amplituhedron-inspired geometric expansion of $F_n$ in terms of so-called `negative geometries'. The latter can be thought of as a certain partial triangulation or tiling of the Amplituhedron, with special properties. In this setup, it is possible to perform calculations at higher-loop orders and perform partial resummations, cf. \cite{Arkani-Hamed:2021iya,Brown:2023mqi,Chicherin:2024hes,Glew:2024zoh}, not only in the sYM theory but also in the ABJM theory \cite{He:2022cup,Henn:2023pkc,Lagares:2024epo}.

The perturbative results for $F_n$ suggest that its $L$-loop contribution is a function of uniform transcendental weight $2L$ \cite{Arkani-Hamed:2010pyv,Henn:2013pwa}, 
with prefactors that are algebraic in the kinematic data. The latter algebraic expressions are called leading singularities in the literature \cite{Cachazo:2008vp}. This structure is expected to be a consequence of the underlying positive geometry definition of (the loop integrand of) $F_n$. In that picture, the independent set of maximal residues of the integrand (obtained by replacing the Minkowski space integration by contour integrals) corresponds to the set of leading singularities.
Knowing what types of transcendental functions and which leading singularities may appear in a given observable is crucial information for bootstrap approaches. 

Recent studies revealed a number of surprising properties of $F_n$ that we review presently. First, a qualitative relationship is that $F_n$ and Yang-Mills scattering amplitudes depend on the same number of variables and, in fact, on the same kinematic data.
This connection can be understood if one considers that the dual conformal symmetry of $F_n$ can be used, without loss of generality, to send the location of the Lagrangian insertion point to infinity.
In that frame, $F_n$ depends only on the $n$ null momenta, as a scattering amplitude in a theory without any special symmetries.

Second, there is a remarkable conjectured quantitative relationship.
The authors of ref. \cite{Chicherin:2022bov} 
conjectured, that the maximal transcendental part of $(L+1)$-loop pure Yang-Mills all-plus helicity amplitudes coincides with $F_n$ at $L$ loops. This conjecture is supported by all data computed so far, i.e. $n$-point one-loop and two-loop all-plus amplitudes, as well as the three-loop four-point all-plus amplitude.
This correspondence involves the agreement of complicated multivariable transcendental functions.

Third, $F_n$ was found to have a uniform sign when evaluated within the Amplituhedron kinematic region \cite{Chicherin:2024hes}. This is a highly non-trivial property that in general comes about as a cancellation several terms, and which involved both leading singularities and transcendental functions.
(Previous analogous observation were made for finite parts of scattering amplitudes \cite{Arkani-Hamed:2014dca,Dixon:2016apl}.) 
Finally, the recent paper \cite{Henn:2024qwe} suggested that positivity may just the tip of the iceberg, 
and that in fact a class of \emph{completely monotonous functions}
(for which the function and all derivatives have positivity properties) 
may be relevant to functions coming from Positive Geometry.

Fourth, another unexpected feature concerns the leading singularities of $F_n$. 
In reference \cite{Chicherin:2022bov} it was found that all tree-level and one-loop $n$-point leading singularities have a hidden momentum-space conformal symmetry, which is typical of scattering amplitudes. Moreover, all leading singularities found could be expressed in terms of a simple Grassmannian formula, inspired by similar formulas for scattering amplitudes \cite{Arkani-Hamed:2009ljj,Arkani-Hamed:2009kmp,Arkani-Hamed:2012zlh}.

In this paper, we focus on the problem of leading singularities and we will succeed to classify all of them and determine their form to all-loop orders and all multiplicities. This provides all-order information about $F_n$ and scattering amplitudes in the planar ${\cal N}=4$ sYM theory. This could be used as an input in bootstrap methods \cite{Dixon:2011pw,Dixon:2011nj,Caron-Huot:2019vjl} (see also reviews \cite{Caron-Huot:2020bkp,Henn:2020omi}). 

As a tool, we use the Amplituhedron description \cite{Arkani-Hamed:2013jha} of the loop integrand as the differential form on a certain positive geometry provided by the sign flip definition \cite{Arkani-Hamed:2017vfh}. 
Compared to the amplitudes themselves, an advantage is that, being finite, $F_n$ can directly be defined in four dimensions (while for scattering amplitudes subtle regularization effects might need to be considered). The finiteness of $F_n$ can be made manifest in the context of an Amplituhedron-inspired, geometric expansion \cite{Arkani-Hamed:2021iya}.
In general, the building blocks in that expansion do not correspond to traditional Feynman diagrams, but have the advantage of being infrared finite. 

Classifying all leading singularities is also the first step to relate the geometric information to the integrated quantity. This has been first pointed out in the context of amplitudes in $\mathcal{N}=4$ sYM in a series of papers
\cite{Dennen:2016mdk,Prlina:2017azl,Prlina:2017tvx}. The underlying principle is that the geometric structure of the Amplituhedron's boundary guides us through the Landau analysis, and allows to obtain the singular loci of the (integrated) amplitude in a much more efficient way compared to the standard Landau analysis. This is because the geometry encodes the information of the numerator of the full integrand, whose vanishing in turn prevents the appearance of certain singularities in the integrated result. Therefore, the procedure proposed in the references is to start with Landau diagrams associated to leading singularities, and to discard all the sub-leading singularities thereof that in some sense are not compatible with the Amplituhedron's geometry. For the case of amplitudes, leading singularities correspond to maximal codimension boundaries of the Amplituhedron, and are easy to classify \cite{Arkani-Hamed:2013kca}. This has to be contrasted with our setting, where leading singularities correspond to higher-dimensional boundaries of the geometry, because we keep the dependence on an unintegrated loop (i.e., the Lagrangian insertion). This feature drastically complicates the determination of all possible leading singularities, but we reach a conclusive answer for the Wilson loop of Lagrangian insertion $F_n$. This therefore opens the door for carrying out considerations analogous to those in the references, i.e. to use the geometric information to extract the singular loci of $F_n$, or more generally, to integrals associated to individual negative geometries. We return to this point in another paper.

Leveraging the sign-flip definition of the MHV Amplituhedron \cite{Arkani-Hamed:2017vfh}, we analyze which `cuts' (residues) of the loop integrand of $F_n$, and more generally of integrals coming from `negative' geometries \cite{Arkani-Hamed:2021iya} are accessible. Even though one can think as usual of LS as maximal residues of the corresponding complexified integrand, in our setting we favour the geometric interpretation of LS as residues of a canonical form on a boundary of a positive geometry. The geometry then dictates which boundary, i.e. which residue, is allowed and which one is not. In this way we prove that every $n$-point $L$-loop LS of a negative geometry can be expressed as a linear combination of canonical functions of one-loop $n$-point Amplituhedron geometries in $AB$ with some extra conditions. The negative geometry expansion provides a perfect framework for approaching the problem of leading singularities to all loops and all points. 

The paper is organized as follows. The next section \ref{sec:lightning_review} provides a lightning review of the key quantities and definitions relevant to our paper.
The main results, as well as an outline of their derivation is presented in section \ref{sec:mainderivation}. (Certain technical parts of the proof are delegated to four Appendices. 
We conclude and provide an outlook in section \ref{sec:outlook}.

\section{Lightning review}
\label{sec:lightning_review}

\subsection{Kinematic data for scattering amplitudes}

We endow $\mathbb{R}^4$ with the Minkowski inner product, and take $n$ massless particles $p_i^2 = 0$ for $i=1, \dots,n$. These satisfy momentum conservation, i.e. $\sum_i p_i = 0$. For planar scattering amplitudes, it is common to use spacetime coordinates in the dual momentum space, where region momenta $x_i \in \mathbb{R}^4$ are related to cyclically ordered external momenta $p_i \in \mathbb{R}^4$, via $p_i = x_i - x_{i-1}$. Then, momentum conservation translates in dual momentum space to the identification $x_{n+1} = x_1$, i.e. to having a null-polygon in the dual omentum space. Yet another useful set of coordinates are \textit{spinor-helicity} variables, in which we express the momenta $p^{\dot{\alpha} \alpha}_i = \tilde{\lambda}^{\dot{\alpha}}_i \lambda^{\alpha}_i$ in terms of $\tilde{\lambda}_i, \lambda_i \in \mathbb{C}^2$, and similarly for the dual momentum coordinates.

In this paper we favor \textit{momentum twistor} variables, as they constitute (a part of) the geometric space in which the Amplituhedron lives \cite{Arkani-Hamed:2013jha}. These variables originate from the twistor correspondence \cite{Penrose}, which relates points in Minkowski spacetime to lines in twistor space. The kinematic data for $n$ massless particles can then be efficiently encoded by $n$ complex vectors $Z_i \in \mathbb{C}^4$. They are related to the previous coordinates by $Z_i = (\lambda_i^\alpha , x_i^{\dot{\alpha} \alpha} \lambda_{i,\alpha})$. A remarkably beautiful example of an amplitude expressed in spinor-helicity variables is the \textit{Parke-Taylor} formula \cite{Parke:1986gb},
\begin{equation}\label{PT}
    {A}_n^{\rm tree} = \, \langle ij  \rangle^4 \, {\rm PT}_n \,,
\qquad 
    {\rm PT}_n = \frac{1}{\langle 12 \rangle \cdots \langle n-1 n \rangle \la n1 \ra} \,,
\end{equation}
for the tree-level (color-orderd) MHV amplitude of the scattering of $n$ gluons in both $\mathcal{N}=4$ sYM and QCD, where the gluons $i$ and $j$ have negative helicity.
Here $\langle a b \rangle := \lambda_a^\alpha \lambda_b^\beta \varepsilon_{\alpha \beta}$. 
Similarly, the $\overline{{\rm MHV}}$ amplitude is given by \eqref{PT} by replacing $\la ij \ra \rightarrow [ij]:= \tilde{\lambda}_i^\alpha \tilde{\lambda}_j^\beta \varepsilon_{\alpha \beta}$.

We now restrict to planar $\mathcal{N}=4$ sYM theory, which exhibits a dual conformal symmetry. In momentum twistor variables, this symmetry is implemented by the action of ${\rm SL}(4)$, and the $Z_i$'s transform in the corresponding fundamental representation.
The (up to scalar unique) ${\rm SL}(4)$-invariant multilinear form is then given by 
\begin{equation}\label{four_bracket}
    \langle ijkl \rangle := \det(Z_i Z_j Z_k Z_l ) = \epsilon_{IJKL} Z^{I}_i Z^{J}_j Z^{K}_k Z^{L}_l \,,
\end{equation}
where $\epsilon_{IJKL}$ is the four-dimensional Levi-Civita tensor and in the determinant's argument we stack together the vectors $Z_i$'s to form a $4 \times 4$ matrix. Moreover, scattering amplitudes in planar $\mathcal{N}=4$ sYM are invariant under the action of the \textit{little group} $(\mathbb{C}^{*})^{n}$, which acts on the $Z_i$'s by scaling. This implies that we can regard the $Z_i$'s as points in the projective space~$\mathbb{P}^{3}$.

\subsection{Loop integrands}

In order to describe integrands for MHV amplitude's  $L$-loop corrections \cite{Arkani-Hamed:2010pyv}, we need to introduce (dual momentum) \textit{loop variables} $y_\ell \in \mathbb{R}^4$ for $\ell = 1, \dots , L$. This is achieved by associating a line $AB_\ell$ in $\mathbb{P}^3$ to $y_\ell$, where the notation means that $A_\ell$ and $B_\ell$ are points in $ \mathbb{P}^3$ and $AB_\ell$ is the unique line passing through them. 
This formulation naturally yields to the \textit{Grassmannian} ${\rm Gr}(k,n)$, the space of $k$-dimensional subspaces of $\mathbb{C}^{n}$. In fact, we can identify $AB_\ell$ as elements of ${\rm Gr}(2,4)$, represented by a $2 \times 4$ complex matrix modulo ${\rm GL}(2)$ transformations. 

In the planar limit of $\mathcal{N}=4$ sYM, the perturbative expansion of the $n$-point MHV amplitude, normalized by its tree-level value of eq. (\ref{PT}), reads 
\begin{equation}
    A_n = 1+ \sum_{L \geq 1} (g^2)^{L} A_n^{(L)} \,,
\end{equation}
where the coupling is
\begin{equation}\label{coupling_const}
    g^2= \frac{g_{YM}^2 N_c }{16 \pi^2} \,.
\end{equation}
The $n$-point $L$-loop contribution can be written as the integral 
\begin{equation}\label{ampl_def_integrand}
A_n^{(L)} = \int_{AB_1} \dots \int_{AB_L} \, \mathbf{\Omega}_n^{(L)} \,.
\end{equation}
The integrand in eq. \eqref{ampl_def_integrand} is a differential form that can be written as
\begin{equation}\label{ampl_form}
     \mathbf{\Omega}_n^{(L)} = \prod_{\ell=1}^{L} \, \langle AB_\ell \, d^{2}A_\ell \rangle \, \langle AB_\ell \, d^{2}B_\ell \rangle \, \, \Omega_n^{(L)}  \,,
\end{equation}
where $\Omega_n^{(L)}$ is a rational function in twistor coordinates in the external momenta $Z_i$ and in the loop variables $AB_\ell$. The explicit expressions for arbitrary $n$ is rather complicated for higher $L$. The state-of-the art is the two-loop N$^k$MHV integrands (for arbitrary $k$) \cite{Bourjaily:2015jna} and three-loop MHV integrands \cite{Arkani-Hamed:2010zjl}.

As a concrete example, let us discuss the four-point one-loop MHV amplitude's integrand in $\mathcal{N}=4$ sYM, whose only contributing diagram is a massless box. The integrand in both dual momentum coordinates and momentum twistors reads
\begin{equation}\label{box_integrand}
    \Omega_4^{(1)} = \frac{(x_1-x_3)^2 (x_2 -x_4)^2}{(y-x_1)^2 (y-x_2)^2 (y-x_3)^2 (y-x_4)^2} = \frac{\langle 1234 \rangle ^ 2}{\langle AB 12 \rangle \langle AB 23 \rangle \langle AB 34 \rangle \langle AB 41 \rangle} \,,
\end{equation}
where the integration measure is given by
\begin{equation}
     d^4y= \langle AB \, d^{2}A \rangle \, \langle AB \, d^{2}B \rangle \,.
\end{equation}
Then, the four-point one-loop MHV amplitude $A^{(1)}_4$ is obtained by integrating \eqref{box_integrand} over the dual loop momentum $y$, or equivalently over the loop twistor $AB$. More details on the Minkowski integration contour in momentum twistors can be found in \cite{Hodges:2010kq}.

Using generalized unitarity \cite{Arkani-Hamed:2010pyv}, we can express the integrand of the $n$-point one-loop amplitude in eq. (\ref{ampl_def_integrand}) as
\begin{equation}
    \Omega_{n}^{(1)} = \sum_{i=2}^{n-3} \sum_{j=i{+}2}^{n-1} \frac{\la AB(i{-}1ii{+}1)\cap(j{-}1jj{+}1)\ra}{\la ABi{-}1i\ra\la ABii{+}1\ra\la ABj{-}1j\ra\la ABjj{+}1\ra\la ABn1\ra}  \,,
\end{equation}
where we use the short-hand notation  
\begin{equation}\label{plane_intersection}
\begin{aligned}
        \langle AB (a_1 b_1 c_1) \cap (a_2 b_2 c_2) \rangle &= \langle A a_1 b_1 c_1 \rangle \langle B a_2 b_2 c_2 \rangle - \langle B a_1 b_1 c_1 \rangle \langle A a_2 b_2 c_2 \rangle \\ &= \langle AB a_1 b_1 \rangle \langle c_1 a_2 b_2 c_2 \rangle + \langle AB b_1 c_1 \rangle \langle a_1 a_2 b_2 c_2 \rangle \\ & \quad  + \langle AB c_1 a_1 \rangle \langle b_1 a_2 b_2 c_2 \rangle \,,
\end{aligned}
\end{equation} 
for the intersection of two planes $(a_i \, b_i \, c_i)$ in $\mathbb{P}^3$.

It is also useful to introduce the integrand of the logarithm of the amplitude, denoted by $\widetilde{\mathbf{\Omega}}_n^{(L)}$,
\begin{equation}\label{log_ampl}
({\rm log}A_n)^{(L)} = \int_{AB_1} \dots \int_{AB_L} \, \widetilde{\mathbf{\Omega}}_n^{(L)} \,.
\end{equation}
See e.g. ref. \cite{Henn:2019swt} for detailed examples at $n=4$.

\subsection{Amplituhedron and Kermits}
\label{subsec:Amplituhedron and Kermits}

The remarkable discovery from \cite{Arkani-Hamed:2013jha} is that the integrand $\mathbf{\Omega}_n^{(L)}$ is the \textit{canonical form} of a geometric object $\mathcal{A}^{(L)}_n$, called the \textit{Amplituhedron}. In the following we refer to $\Omega_n^{(L)}$ in \eqref{ampl_form} as the \textit{canonical function}. More precisely, eq. \eqref{ampl_form} is the unique logarithmic top-form with specific little group weights and prescribed pole locations dictated by the geometry of $\mathcal{A}^{(L)}_n$. 
This encapsulates two claims: that $\mathcal{A}^{(L)}_n$ is in fact a positive geometry according to the definition in \cite{Arkani_Hamed_2017}, and that its canonical form \eqref{ampl_form} computes the full integrand in planar $\mathcal{N}=4$ sYM. The first statement has been rigorously proven only for the case of $L=1$ in \cite{Ranestad:2024svp}, while the second one is supported by several computations consistent with the BCFW recursion, as e.g. for $n=4$ and $L=2,3$ in \cite{Arkani-Hamed:2013kca}. The BCFW recursion has been recently formalized i the context of cluster algebras and tiles for the $m=4$ Amplituhedron at tree level \cite{Even-Zohar:2023del}.

\subsubsection{The Amplituhedron} 
Let us now introduce the geometry of the Amplituhedron, following \cite{Arkani_Hamed_2018}. We parameterize the external kinematics by the momentum twistors $Z_i$. More precisely, the momenta of the $n \geq 4$ massless particles are encoded in a real $4 \times n$ matrix $Z$, whose columns are the $Z_i$'s. Then, $Z$ represents a point in the configuration space ${\rm Gr}(4,n)/(\mathbb{C}^{*})^{n}$. 

The first condition on the geometry is the \textit{external positivity}, which requires $Z$ to be an element in the positive Grassmannian, i.e.
\begin{equation}\label{external_positivity}
    \langle ijkl    \rangle > 0 \,, \qquad \forall \, i<j<k<l \,. 
\end{equation}
Then, the \textit{$n$-point $L$-loop {\rm MHV} Amplituhedron} $\mathcal{A}_{n}^{(L)}$, where we omit the other parameters\footnote{More generally, the (loop) Amplituhedron depends on additional nonnegative integer parameters $m$ and $k$, other than $n$ and $L$. The parameter $m$ is related to the dimension of the momentum twistors, and the physically relevant case is that of $m=4$. The parameter $k$ is then related to the helicity sector; e.g. for MHV amplitudes the relevant parameter is $k=0$. There is an isomorphism between the Amplituhedron for parameters $L=1$, $m=4$, $k=0$ and $L=0$, $m=2$, $k=2$. This is relevant when we later relate to the mathematical literature on the $m=2$ Amplituhedron.} $m=4$ and $k=0$, is a semialgebraic set in the product of Grassmannians ${\rm Gr}_{\mathbb{R}}(2,4)^{L}$, parameterised by $L$ loop variables $AB_\ell \in {\rm Gr}_{\mathbb{R}}(2,4)$, and carved out by two type of inequalities. First, the \textit{loop positivity} conditions for every $\ell = 1, \dots , L$:
\begin{equation}\label{loop_positivty}
\begin{aligned}
    \langle AB_\ell \, 12 \rangle > 0 \,, \dots, \langle AB_\ell \, n-1n \rangle > 0 \,, \ \langle AB_\ell \, 1 n \rangle > 0 \,, 
\end{aligned}
\end{equation}
and the sequence $(\langle AB_\ell \, 12 \rangle , \dots, \langle AB_\ell \, 1n \rangle)$ has exactly two sign flips, ignoring zeros.
Second, the \textit{mutual positivity} conditions between loops, which read
\begin{equation}\label{mutual_positivity}
    \langle AB_\ell \, AB_{\ell'} \rangle > 0 \,, \quad \forall \, \ell \neq \ell' \,.
\end{equation}

\begin{figure}[t]
\centering
\begin{tikzpicture}[scale = 0.5]
        
    \begin{scope}

    \node at (-7.5,0) {$[a_1 b_1 c_1;a_2 b_2 c_2] = $};
    
    \draw[thick] (0,0) circle(3cm);

    \coordinate (i) at (-100:3cm);
    \coordinate (j) at (185:3cm);
    \coordinate (k) at (110:3cm);
    \coordinate (l) at (70:3cm);
    \coordinate (n) at (-60:3cm);
    \coordinate (m) at (0:3cm);
    
    \node at (i) [below ] {$a_1$};
    \node at (j) [left] {$b_1$};
    \node at (k) [above ] {$c_1$};
    \node at (l) [above ] {$a_2$};
    \node at (m) [right ] {$b_2$};
    \node at (n) [below ] {$c_2$};

    \fill[black, opacity=0.3] (i) -- (j) -- (k) -- cycle;
    
    \fill[black, opacity=0.3]  (l) -- (n) -- (m) -- cycle;

    \draw[line width=0.4mm, red, dashed] (i) -- (j);
    \draw[line width=0.4mm, red, dashed] (l) -- (m);
    \draw[line width=0.4mm, red, dashed] (j) -- (k);
    \draw[line width=0.4mm, red, dashed] (m) -- (n);
    \draw[line width=0.4mm, red, thick] (i) -- (k);
    \draw[line width=0.4mm, red, thick] (l) -- (n);

     \foreach \p in {i,j,k,l,m,n} {
        \fill (\p) circle(3pt);
    }
    \end{scope}


    \begin{scope}[xshift = 14 cm]

    \node at (-6,0) {$[abcd] = $};
    
    \draw[thick] (0,0) circle(3cm);

    \coordinate (i) at (-85:3cm);
    \coordinate (j) at (185:3cm);
    \coordinate (k) at (110:3cm);
    \coordinate (l) at (0:3cm);
    
    \node at (i) [below ] {$a$};
    \node at (j) [left] {$b$};
    \node at (k) [above ] {$c$};
    \node at (l) [right ] {$d$};

    \fill[black, opacity=0.3] (i) -- (j) -- (k) -- (l) -- cycle;

    \draw[line width=0.4mm, red, dashed] (i) -- (j);
    \draw[line width=0.4mm, red, dashed] (j) -- (k);
    \draw[line width=0.4mm, red, dashed] (k) -- (l);
    \draw[line width=0.4mm, red, dashed] (i) -- (l);

     \foreach \p in {i,j,k,l} {
        \fill (\p) circle(3pt);
    }
    \end{scope}

\end{tikzpicture}
\caption{Pictorial representation of bicolored subdivisions of type $(2,n)$ associated to Kermit forms at $n$ points, where the circle represents an $n$-gon. }
\label{Figure_Yangian invariants}
\end{figure}
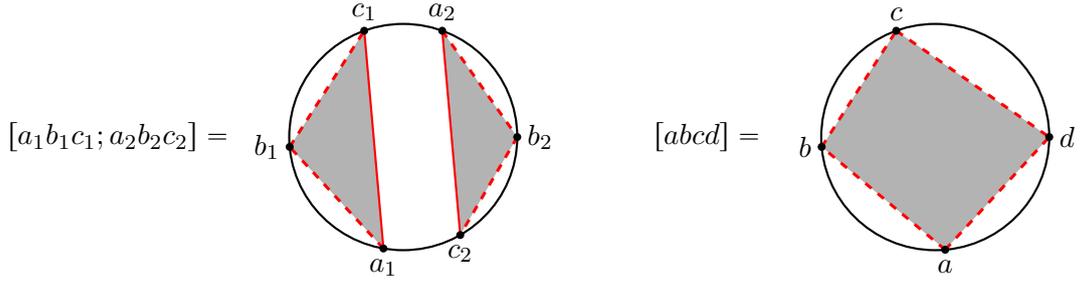
\subsubsection{Kermits} In order to compute the canonical form \eqref{ampl_form} one can tile the Amplituhedron into smaller pieces, which simple canonical form. If, for instance, the geometry was a polytope, one would triangulate it by simplices, whose form is straightforward to compute \cite{Arkani_Hamed_2018}. In the context of Amplituhedron, the objects analogous to simplices are called \textit{Kermits}, for $L=1$.
Then, the canonical function can be expressed as \cite{Arkani_Hamed_2017}
\begin{equation}\label{omega_in_Kermits_wr1}
    \Omega_n^{(1)} = \sum_{j=i{+}2}^{n{-}1} \sum_{i=2}^{n{-}3} \, [1ii{+}1;1jj{+}1] \,,
\end{equation}
where the terms in the sum are called \textit{Kermit forms}, and are more generally given by
\begin{equation}\label{six_invariant}
    [a_1 b_1 c_1 ; a_2 b_2 c_2] = \frac{\langle AB  (a_1 b_1 c_1) \cap (a_2 b_2 c_2) \rangle ^{2}}{\langle AB a_1 b_1 \rangle \langle AB b_1 c_1 \rangle \langle AB a_1 c_1 \rangle \langle AB a_2 b_2 \rangle \langle AB b_2 c_2 \rangle \langle AB a_2 c_2 \rangle } \,,
\end{equation}
with the special case
\begin{equation}\label{four_invariant}
    [abcd] := [a b c ; b c d] = -\frac{\langle abcd \rangle ^{2}}{\langle AB ab \rangle \langle AB bc \rangle \langle AB cd \rangle \langle AB da \rangle } \,.
\end{equation}

Kermits are in bijection with a class of combinatorial objects called \textit{bicolored subdivisions of type} $(2,n)$, which are pairs of non-overlapping black triangles inside an $n$-gon, labeled by their vertices $\{a_i,b_i,c_i\}$ lying on the $n$-gon, for $i=1,2$ \cite{Parisi:2021oql}. Such triangles either share at most one vertex, or they share one face, and therefore form a quadrilateral $\{a, b, c, d\}$. 
In the former case, the associated Kermit has six codimension-one boundaries and its form is given by eq. \eqref{six_invariant}, while in the latter case it has four codimension-one boundaries and its form is given in \eqref{four_invariant}. This combinatorial characterization of Kermits and of their forms appears in Figure \ref{Figure_Yangian invariants}. Then, \eqref{omega_in_Kermits_wr1} is a special case of 
\begin{equation}\label{omega_in_kermits}
    \Omega_n^{(1)} =  \sum_{\Delta_1, \, \Delta_2 \,\subset \, T} [\Delta_1; \Delta_2] \,, 
\end{equation}
where the notation means that we sum over all non-overlapping triangles $\Delta_1 = \{a_1,b_1,c_1\}$ and $\Delta_2 = \{a_2,b_2,c_2\}$ with arcs in a triangulation $T$ of the $n$-gon.
The fact that \eqref{omega_in_kermits} is true for any triangulation $T$ of the $n$-gon, reflects the fact that there are many ways of tiling $\mathcal{A}^{(1)}_n$ into Kermits. The triangulation involving all arcs passing through vertex 1 yields \eqref{omega_in_Kermits_wr1}.

\subsection{The integrand of the Wilson loop with Lagrangian insertion}

Let us introduce the Wilson loop with single Lagrangian insertion in planar ${\cal N}=4$ sYM. More precisely, the observable we consider is
\begin{equation}\label{WL}
    F_n(x_1, \dots , x_n;y) = \frac{1}{\pi^2} \frac{\langle W_n(x_1, \dots , x_n) \, \mathcal{L}(y) \rangle}{\langle W_n(x_1, \dots , x_n) \rangle} \,,
\end{equation}
where $W_n$ is the $n$-sided null Wilson loop in the fundamental representation of the gauge group, with $n$ cusps at the points $x_i$ (which are dual to $Z_i Z_{i+1}$). The Lagrangian $\mathcal{L}$ is inserted at an arbitrary point $y$, which is dual to $AB$. The observable \eqref{WL} is finite, since the cusp divergences of the Wilson loop (dual to infrared divergences of scattering amplitudes) are canceled by taking the ratio in \eqref{WL}. We write the perturbative expansion of eq. \eqref{WL} as
\begin{equation}
    F_n = \sum_{L \geq 0} (g^2)^{L+1} F_n^{(L)} \,,
\end{equation}
in the coupling \eqref{coupling_const}. 

From the duality between amplitudes and Wilson loops (as interpreted at the level of loop integrands, see e.g. \cite{Eden:2010ce}) it follows that,
\begin{equation}\label{WL_int}
  F_n^{(L)} = \int_{AB_1} \dots \int_{AB_{L}} \, \widetilde{\mathbf{\Omega}}_n^{(L+1)} \,,
\end{equation}
where 
$\widetilde{\mathbf{\Omega}}_n^{(L)}$ is the integrand of the logarithm of the amplitude, cf. (\ref{log_ampl}),
and 
$AB:= (AB)_{L+1}$ is not being integrated over.
Therefore, $F_n^{(L)}$ can be thought of as the integrand of the $(L+1)$-loop contribution to (the logarithm of) the scattering amplitude \eqref{log_ampl}, where all but one integrations have been carried out. In particular, the Born-level contribution is given by the rational function
\begin{equation}
    F^{(0)}_n = \widetilde{\Omega}^{(1)}_n  \,,
\end{equation}
and moreover $\widetilde{\Omega}^{(1)}_n=\Omega^{(1)}_n$, with $\Omega^{(1)}_n$ given in  eq. \eqref{omega_in_kermits}.

In \cite{Arkani-Hamed:2021iya} it was observed that an Amplituhedron-like description exists also for the integrand $\widetilde{\mathbf{\Omega}}_n^{(L)}$ in eq. \ref{WL_int} and \eqref{log_ampl}. In fact, $\widetilde{\mathbf{\Omega}}_n^{(L)}$, which is a rational function in the kinematic variables, can be decomposed by an expansion in so called \textit{negative geometries}~\cite{Arkani-Hamed:2021iya}.

\subsection{Negative geometry expansion} Let us now briefly review the geometric procedure to obtain the integrand $\widetilde{\mathbf{\Omega}}_n^{(L)}$ in eq. \eqref{log_ampl} and \eqref{WL_int}, discovered in \cite{Arkani-Hamed:2021iya}. In the following we denote by $\widetilde{\Omega}_n^{(L)}$ the rational function obtained from $\widetilde{\mathbf{\Omega}}_n^{(L)}$ by stripping-off the measure, analogous to eq. \eqref{ampl_form}. 

Let us start with the simple observation that the product of two one-loop Amplituhedra $\mathcal{A}^{(1)}_n$ can be triangulated into two pieces, one of which is the two-loop Amplituhedron $\mathcal{A}^{(2)}_n$, and the other one has an analogous definition as $\mathcal{A}^{(2)}_n$, but where the mutual positivity condition $\langle AB_1 AB_2 \rangle > 0 $ is replaced by a negativity condition $\langle AB_1 AB_2 \rangle < 0 $. The latter geometry is an example of what we call a \textit{negative geometry}. We write this relation schematically as
\begin{equation}
\label{eq:linkrel}
	\begin{tabular}{cc}
	 \includegraphics[scale=.86]{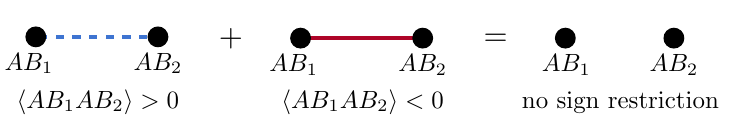} 
	\end{tabular}
\end{equation}
We represent the $L$-loop Amplituhedron $\mathcal{A}^{(L)}_n$ by a complete graph with $L$ nodes and dashed edges. Then, by repeatedly applying this relation, one can write
\begin{equation}\label{omega_in_neg_geom}
    \Omega^{(L)}_n = \sum_{\Gamma} (-1)^{E(\Gamma)} \, \Omega_{n,\Gamma}^{(L)} \,.
\end{equation}
Here the summation runs over all graphs $\Gamma$ with $L$ nodes and any number of plain edges, with $E(\Gamma)$ denoting the number of edges. The rational function $\Omega_{n,\Gamma}^{(L)}$ is the canonical function of the geometry defined as the $L$-fold product of one-loop Amplituhedra 
$\mathcal{A}^{(1)}_n$, with the extra conditions $\langle AB_\ell AB_{\ell'} \rangle < 0 $ for each edge $\ell - \ell'$ in $\Gamma$. 

In order to obtain $\widetilde{\Omega}_n^{(L)}$, which is the integrand for the logarithm of the amplitude \eqref{log_ampl}, we take the logarithm of \eqref{omega_in_neg_geom}. By the exponential formula for generating functions, the summation in \eqref{omega_in_neg_geom} is reduced to \textit{connected graphs},
\begin{equation}\label{omega_tilde_in_neg_geom}
    \widetilde{\Omega}^{(L)}_n = \sum_{\Gamma \, \text{connected}} (-1)^{E(\Gamma)} \, \Omega_{n,\Gamma}^{(L)} \,.
\end{equation}
In the definition of $F_n^{(L)}$, one loop variable is left unintegrated. We denote the latter by $AB$, and it is a marked node in $\Gamma$.
The summation can then be taken over connected graphs with a marked node, where each graph is be normalized by a symmetry factor associated to its automorphism group.

As an example, let us look at $n=4$. Then $\widetilde{\Omega}^{(1)}_4$ is equal to the canonical function of the four-point one-loop Amplituhedron \eqref{box_integrand}.
At two loops, we have 
\begin{equation}\label{n4_L2_int}
    \widetilde{\Omega}^{(2)}_4 =\frac{{-}\langle 1234 \rangle^3 (\langle AB 13 \rangle \langle CD 24 \rangle  + \langle AB 24 \rangle  \langle CD 13 \rangle ) }{\langle AB 12\rangle \langle AB 23\rangle \langle AB 34\rangle \langle AB 14\rangle \langle ABCD \rangle \langle CD 12\rangle \langle CD 23\rangle \langle CD 34\rangle \langle CD 14\rangle} \,,
\end{equation}
where $CD$ is the second loop variable.

\subsection{Leading singularities}

In the following we will be interested in the \textit{leading singularities} (LS) of the Wilson loop with Lagrangian insertion \eqref{WL_int}. More generally, we consider integrals in momentum twistor space of the same form as \eqref{WL_int}, involving a rational integrand. For such an integral, we define a LS to be a maximal residue obtained by integrating the integrand along any contour homotopic to a topological torus enclosing $4L$ poles in the loop variables $AB_1$, \dots, $AB_{L}$. 

From the geometric point of view, a LS of \eqref{WL_int} is the residue of the canonical function $\widetilde{\Omega}_{n}^{(L+1)}$ on any four-dimensional boundary obtained by localizing all the variables $AB_1, \dots, AB_{L}$ and keeping $AB$ unconstrained. 

The following remark is important for what follows. Generally, there can be many homologically inequivalent contours of the form described above. However, some can yield the same leading singularity. Correspondingly, different ordered residues of $\widetilde{\Omega}_{n}^{(L+1)}$ can yields the same result.
\begin{definition}\label{definition of LS}
We distinguish between:
\begin{itemize}
    \item \textit{leading singularity value}, or simply leading singularity, i.e. the algebraic function in $Z_i$ and $AB$ defined as a (maximal) residue, and
    \item \textit{leading singularity configuration}, which refers to a solution for the loop lines $AB_1$, \dots, $AB_{L}$ where the integrand has a pole of (maximal) order $4L$.
\end{itemize}
\end{definition}

We associate an integral to each negative geometry in the expansion \eqref{omega_tilde_in_neg_geom}. More precisely, to a given connected graph $\Gamma$ with ${L+1}$ nodes and a marked node, corresponding to the loop variable $AB=AB_0$, we associate the integral
\begin{equation}\label{F_integral}
    \mathcal{F}_{n,\Gamma}^{(L)} := \int_{AB_1}  \dots \int_{AB_{L}} \widetilde{\mathbf{\Omega}}_n^{(L+1)} \,,
\end{equation}
where $\widetilde{\mathbf{\Omega}}_n^{(L+1)}$ is the canonical form of the $n$-point negative geometry associated to $\Gamma$. In the following, we will also consider LS of individual negative geometries \eqref{F_integral}.

Let us give simple examples where one can see the distinction between LS value and LS configuration. We consider $F_4^{(1)}$, and show that two distinct LS configurations in $CD$ yield the same residue in \eqref{n4_L2_int}, i.e. the same LS value. For that,
performing an analysis as in ref. \cite{Arkani-Hamed:2010pyv}, it is possible to compute all maximal residues in $CD$ of \eqref{n4_L2_int}, i.e. where $CD$ is fully localized.
For example, consider the configurations $CD=13$ and $CD=24$, which correspond to zero-dimensional boundaries of the one-loop Amplituhedron in $CD$.
For the first case, we compute the contour integral
\begin{equation}
    \oint_{CD=13} \frac{\la 1234 \ra   \langle AB 13 \rangle \langle CD 24 \ra }{ \langle ABCD \rangle \langle CD 12\rangle \langle CD 23\rangle \langle CD 34\rangle \langle CD 14\rangle} = \pm \, 1 \,,
\end{equation}
where the sign depends on the orientation of the integration cycle. 
(An explanation on how to take residues of rational functions in momentum twistor variables can be found in Appendix A in \cite{Arkani-Hamed:2010pyv}.)
Taking the same contour integral on \eqref{n4_L2_int}, one obtains
\begin{equation}
    \oint_{CD=13} \widetilde{\Omega}^{(2)}_4 = \pm \, \Omega^{(1)}_4 \,.
\end{equation}
The same computation goes through if instead we localize  $CD=24$. This shows that the two one-loop LS configurations where $CD=13$ and $CD=24$ yield the same (up to sign) LS value $\Omega^{(1)}_4$.

\subsection{The structure of the Wilson loop in perturbation theory}

One of the motivations of this work is to leverage the geometric definition of the loop integrand to make predictions about the integrated answer.
\begin{tcolorbox}[colback=white]
On general grounds, we expect the Wilson loop to have the following decomposition,
\begin{equation}\label{F_decomposition}
    F_n^{(L)} = \sum_s \Omega_{n,s} \, f_{n,s}^{(L)} \,,
\end{equation}
where the functions $f_{n,s}^{(L)}$ are pure of transcendental degree $2L$, and $\Omega_{n,s}$ are algebraic prefactors. Note that the basis of the latter may depend on the loop order $L$.
\end{tcolorbox}

For integrals that evaluate to multi-polylogarithmic functions, a decomposition of the form of eq. \eqref{F_decomposition} follows from the structure of the differential equation they satisfy \cite{Henn:2013nsa}. Closely related to this, from the point of view of the integrand, such a decomposition follows from the possibility of finding a basis of local integrands with unit leading singularities \cite{Arkani-Hamed:2010pyv}, against which the full integrand can be decomposed. 
This suggests that the set of leading singularities of the integrand of $F_n$ corresponds to the set of algebraic prefactors $\Omega_{n,s}$.

Eq. (\ref{F_decomposition}) is also the starting point for many bootstrap approaches, where knowledge about the class of functions $f$, as well as the prefactors $\Omega$, is crucial. 
The main goal of the present paper is to determine a basis of LS valid at any loop order.

Let us illustrate explicit examples of the decomposition \eqref{F_decomposition} and of LS at $n=4,5$. 
\begin{eg}
At $n=4$ and $L=1$, eq. \eqref{n4_L2_int} can be decomposed into $\Omega_4^{(1)}$ multiplied by an integrand involving $CD$ and therefore $\Omega_4^{(1)}$ is the only LS. Hence,
\begin{equation}
    F^{(1)}_4 = \Omega_4^{(1)} \, f^{(1)}_4 \,.
\end{equation} 
The transcendental function $f^{(1)}_4$ is computed to be
\begin{equation}
    f^{(1)}_4 = \int_{CD} \widetilde{\Omega}_4^{(2)} =  \log^2(z) + \pi^2  \,, \quad \text{with} \quad  z = \frac{\langle AB 12 \rangle \langle AB 34 \rangle }{\langle AB 14 \rangle \langle AB 23 \rangle } \,.
\end{equation}
In fact, we will prove that at all loop orders there exists only one LS, given by $\Omega_4^{(1)} = [1234]$. This matches with the known explicit result for $F_4^{(L)}$ up to $L=3$ \cite{Alday:2012hy,Alday:2013ip,Henn:2019swt}.
\end{eg}

\begin{eg}
  The pentagonal Wilson loop $F_5^{(L)}$ has been computed up to $L=2$ in ref.~\cite{Chicherin:2022zxo}. At one loop, there are five LS given by
\begin{equation}\label{R5,13}
    \Omega_{5}(13) = [1234]+ [123;145] \,,
\end{equation}
together with its four cyclic shifts. At two loops, there are six linearly independent LS. One can complete the set above to a basis by adding the tree level contribution
\begin{equation}
    \Omega_5^{(1)} = [1234] + [123;145] + [1345] \,.
\end{equation}
\end{eg}
In the examples above, the LS can be expressed as linear combinations of Kermit forms. This is a general feature that has been conjectured in \cite{Chicherin:2022bov}, which we prove in this paper.

\begin{table}[t!]
    \centering
    \begin{tabular}{ |c|c|c|c|c|c|c|c|c|c|c| } 
     \hline
      & $n$ & 5 & 6 & 7 & 8 & 9 & 10 & 11 & 12  \\ 
      \hline
       $L=1$ & $\tfrac{n(n-3)}{2}$ & 5 & 9 & 14 & 20 & 27 & 35 & 44 & 54 \\
      $L \geq 2$ & $\frac{(n-1)(n-2)^{2}(n-3)}{12}$ & 6 & 20 & 50 & 105 & 196 & 336 & 540 & 825 \\
     \hline
    \end{tabular}
    \caption{The number of linearly independent leading singularities of $F_n^{(L)}$, as conjectured in ref. \cite{Chicherin:2022bov}, and proven in the present work.}
    \label{count}
\end{table}

\section{Main results and their derivation }
\label{sec:mainderivation}

The main results of this paper are summarized in Subsection \ref{subsec:main_results}. The other subsections are dedicated to the the explaining how our proof works: Subsection \ref{subsec: taxonomy of LS} concerns the classification of leading singularities into different families, Subsection \ref{subsec: short outline of the proof} gives a short summary of the proof's structure and Subsection \ref{subsec: exclusion of incompatible LS} is about the Exclusion Result, Proposition \ref{proposition exclusion result}, one of the main ingredients in the proof.

\subsection{Main results}
\label{subsec:main_results}

The main result of this paper is the proof of the following conjecture from ref.~\cite{Chicherin:2022bov}.
\begin{tcolorbox}[colback=black!5!white]
\begin{claim}\label{claim 1}
    All leading singularities of $F_n^{(L)}$ for $n \geq 4$ and $L \geq 1$ can be expressed as linear combinations of Kermit forms \eqref{six_invariant}. 
\end{claim}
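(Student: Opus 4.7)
The plan is to prove Claim~\ref{claim 1} by leveraging the negative geometry expansion \eqref{omega_tilde_in_neg_geom}. Since taking residues commutes with linear combinations, it suffices to classify all leading singularities of each individual negative geometry $\Omega_{n,\Gamma}^{(L+1)}$ and then to show that each such LS belongs to the span of Kermit forms in $AB$. I would begin by unpacking what a LS of $\Omega_{n,\Gamma}^{(L+1)}$ looks like: the marked node $AB$ is left unintegrated while the other $L$ loop lines $AB_1,\dots,AB_L$ must be localized on a $4L$-dimensional cut. The poles of $\Omega_{n,\Gamma}^{(L+1)}$ are of two kinds, external ones of the form $\langle AB_\ell\,i\,i{+}1\rangle$, and mutual ones $\langle AB_\ell\,AB_{\ell'}\rangle$ whenever $\ell{-}\ell'$ is an edge of $\Gamma$. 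Any LS configuration prescribes a choice of four independent such factors for each $\ell$.

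Second, I would set up the taxonomy of Subsection~\ref{subsec: taxonomy of LS}: for each $\ell$ record how $AB_\ell$ is localized --- by how many external lines, how many mutual-line constraints involving $AB_{\ell'}$, and how many involving the marked node $AB$. The combinatorial data of these cut configurations propagates along the edges of $\Gamma$, and after localization every $AB_\ell$ is expressed as a line intersecting certain planes built from external $Z_i$'s and possibly from $AB$. Once every $AB_\ell$ is fixed in this way, what remains is a rational function of $AB$ alone, and the goal becomes identifying it with a combination of Kermit forms.

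Third, the main obstacle is the Exclusion Result of Proposition~\ref{proposition exclusion result}: not every combinatorially admissible cut is geometrically allowed. Some iterated residues vanish, or the residue theorem forces them to cancel in sums over cyclically related configurations, either because the corresponding stratum lies on the ``wrong side'' of one of the sign-flip or mutual positivity conditions defining the negative geometry, or because the chosen pole factors fail to be algebraically independent on the proposed cut. To control this I would use the sign-flip characterization of the one-loop Amplituhedron \cite{Arkani-Hamed:2017vfh} to track, at every step of the iterated residue, which boundary strata survive, and then translate the mutual conditions $\langle AB_\ell\,AB_{\ell'}\rangle\gtrless 0$ into restrictions on which adjacent cuts in $\Gamma$ can be taken simultaneously. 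This case analysis, organized along the structure of $\Gamma$ and carried out inductively in $L$, is the technical heart of the proof and is the step I expect to be most involved.

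Finally, once the surviving cut configurations are catalogued, evaluating each residue produces the canonical function of a one-loop geometry in the remaining variable $AB$: a one-loop Amplituhedron region possibly intersected with additional sign-flip or linear conditions imposed by the propagated mutual constraints. For such regions, the triangulation into Kermits goes through exactly as in the unconstrained one-loop case \eqref{omega_in_kermits}, with the extra conditions merely selecting a sub-collection of non-overlapping triangle pairs $(\Delta_1,\Delta_2)$; the resulting canonical function is a $\mathbb{Z}$-linear combination of the six-bracket invariants \eqref{six_invariant}. Summing these contributions over all $\Gamma$ in \eqref{omega_tilde_in_neg_geom} then writes every LS of $F_n^{(L)}$ as a combination of Kermit forms, completing the proof.
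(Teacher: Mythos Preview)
Your overall architecture contains a genuine gap that would make the argument fail. You propose to ``classify all leading singularities of each individual negative geometry $\Omega_{n,\Gamma}^{(L+1)}$ and then to show that each such LS belongs to the span of Kermit forms.'' This is false: individual negative geometries do have leading singularities that are \emph{not} linear combinations of Kermit forms. The paper exhibits explicit counterexamples in Section~\ref{subsec:examplesofleadingsingularitiesvalues}, e.g.\ the simple incompatible configuration $AB_1=13$, $AB_2=24$ at $n=5$ (eq.~\eqref{LSvaluesimpleincompatibleexample}), and the non-simple configuration for the three-loop star (eq.~\eqref{LS_star}). Both are nonzero LS of specific negative geometries, neither lies in the Kermit span, and neither is conformally invariant in the frame $AB\to I_\infty$. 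So Claim~\ref{claim 1} cannot be proved graph by graph.

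What actually makes the claim true is a cancellation mechanism \emph{between} different graphs in the expansion \eqref{omega_tilde_in_neg_geom}, and this is where you misread Proposition~\ref{proposition exclusion result}. The Exclusion Result is not the statement that ``not every combinatorially admissible cut is geometrically allowed'' inside a fixed $\Gamma$; rather, if a configuration is incompatible, i.e.\ some $\langle AB_\ell\,AB_{\ell'}\rangle<0$, then for the edge $e=\ell{-}\ell'$ the graphs $\Gamma$ and $\Gamma\setminus\{e\}$ carry the \emph{same} LS value with opposite signs $(-1)^{E(\Gamma)}$, so they cancel in the sum for $F_n^{(L)}$. The surviving contributions come from graphs where removing $e$ disconnects $\Gamma$, and those reduce to lower-loop compatible configurations. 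Your description (``cancel in sums over cyclically related configurations'', ``stratum lies on the wrong side'') misses this edge-deletion pairing, which is the key idea. Without it you cannot discard incompatible configurations, and then your final paragraph also breaks: the extra conditions on the $AB$-geometry need not correspond to non-crossing arcs, so the resulting region is not in general tiled by Kermits (again, eq.~\eqref{LSvaluesimpleincompatibleexample} is a concrete instance).
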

\end{tcolorbox}
From Claim \ref{claim 1} we deduce the following important corollaries.
\begin{itemize}
    \item The linear space generated by all LS at fixed $n$ coincides with the space generated by all Kermit forms from $L \geq 2$. In particular, the dimension of this space saturates at $L=2$. The count is presented in Table \ref{count}; the proof of this last part is in Appendix~\ref{subapp:LS count}.
    \item In the frame where $AB$ is mapped to the \textit{infinity twistor} $I_\infty := \epsilon^{\dot{\alpha}\dot{\beta}}$, i.e. for $AB \rightarrow I_\infty$, all LS of $F_n^{(L)}$ multiplied by the Parke-Taylor factor \eqref{PT} are conformally invariant, see Proposition \ref{proposition conf inv} in Subsection~\ref{subsec: conformal invariance}.
\end{itemize}

Claim \ref{claim 1} itself is a consequence of the following Claim \ref{claim 2}, which we state here with the intent of connecting to the mathematics literature on the $m=2$ Amplituhedron \cite{Parisi:2021oql,Lukowski:2019sxw}.

\begin{tcolorbox}[colback=black!5!white]
\begin{claim}\label{claim 2}
Every LS of $F_n^{(L)}$ for $n \geq 4$ and $L \geq 1$ is the sum of canonical functions of a one-loop Amplituhedra $\mathcal{A}^{(1)}_n$ parametrised by $AB$, with additional constraints on the signs of twistor brackets $\langle AB i_r j_r \rangle $, for a collection of non-crossing arcs $\{(i_rj_r)\}$ of an $n$-gon, see Figure \ref{LS_summary}.
\end{claim}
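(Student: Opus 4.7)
The plan is to establish the statement for each connected negative geometry in the expansion \eqref{omega_tilde_in_neg_geom} separately, then combine them by linearity. Fix a connected graph $\Gamma$ on $L+1$ nodes with marked node $AB$, and let $\widetilde{\mathbf{\Omega}}^{(L+1)}_{n,\Gamma}$ be its canonical form on loop variables $(AB_1,\ldots,AB_L,AB)$, characterised by loop positivity for each $AB_\ell$, mutual positivity $\langle AB_\ell\, AB_{\ell'}\rangle > 0$ for non-edges of $\Gamma$, and mutual negativity $\langle AB_\ell\, AB_{\ell'}\rangle < 0$ for edges. A LS is a $4L$-fold residue localising $AB_1,\ldots, AB_L$ onto zero-dimensional strata, leaving a rational function of $AB$ and the $Z_i$'s; by linearity of the residue operation, Claim \ref{claim 2} for $F_n^{(L)}$ follows from the analogous statement for each $\widetilde{\mathbf{\Omega}}^{(L+1)}_{n,\Gamma}$.

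I would then proceed by induction on $L$, peeling off one internal loop variable at a time. At each stage, the zero-dimensional boundaries accessible to $AB_\ell$ are classified by the sign-flip definition of the one-loop Amplituhedron subject to the already-imposed sign conditions: they correspond to configurations in which $AB_\ell$ is pinned to a line $(i_\ell j_\ell)$ through two external twistors. The crucial observation is that when $AB_\ell$ is sent to $(i_\ell j_\ell)$, every mutual bracket $\langle AB_\ell\, AB_{\ell'}\rangle$ collapses to $\langle AB_{\ell'}\, i_\ell j_\ell\rangle$. Consequently, each edge of $\Gamma$ incident to node $\ell$ turns into a new sign condition on the corresponding neighbouring loop, of the form $\langle AB_{\ell'}\, i_\ell j_\ell\rangle \gtrless 0$, with sign dictated by whether the edge was an equality of positivity or of negativity type. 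In particular, the marked loop $AB$ inherits an arc condition from every edge of $\Gamma$ connecting it to a localised node.

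Iterating $L$ times, one lands on a four-dimensional region in $AB$-space defined by the standard loop positivity for $AB$ together with a collection of inherited arc conditions $\langle AB\, i_r j_r\rangle \gtrless 0$. The residue there is precisely the canonical function of this region, and summing over all accessible termination points of the iterative procedure produces the LS as a sum of canonical functions of one-loop $n$-point Amplituhedra with additional arc-bracket sign conditions, as required. The fact that the full LS value is triangulation-independent reflects the freedom in the order in which internal loops are peeled off.

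The main obstacle is the \emph{Exclusion Result} (Proposition \ref{proposition exclusion result}): not every sequence of iterative residues is geometrically accessible, since the sign-flip count together with the mutual positivity/negativity and the previously introduced arc constraints can forbid certain localisations of $AB_\ell$. I expect the hardest step to be establishing that the arcs surviving in the final expression are \emph{non-crossing}. Geometrically, a pair of crossing arcs on the $n$-gon would impose constraints on $AB$ that the two-sign-flip condition in \eqref{loop_positivty} cannot accommodate simultaneously; ruling these out requires a careful combinatorial analysis of how the arcs accumulate through the induction and of which arc configurations are compatible with the sign-flip pattern defining $\mathcal{A}^{(1)}_n$. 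Once the Exclusion Result and the non-crossing property are in place, the induction closes and Claim \ref{claim 2} follows.
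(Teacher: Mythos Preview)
Your proposal has two genuine gaps that undermine the strategy.

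\textbf{First, the statement is false for individual negative geometries.} You propose to prove Claim~\ref{claim 2} for each connected graph $\Gamma$ separately and then sum. But the paper exhibits explicit leading singularities of single negative geometries that are \emph{not} sums of canonical functions with arc constraints: see eq.~\eqref{LSvaluesimpleincompatibleexample} for the simple incompatible configuration $AB_1=13$, $AB_2=24$ at $n=5$, and eq.~\eqref{LS_star} for a non-simple configuration of the three-loop star. These LS values carry spurious poles such as $\langle AB\,(124)\cap(135)\rangle$ and are not conformally invariant. The whole point of Proposition~\ref{proposition exclusion result} is that such contributions cancel \emph{between different graphs} $\Gamma$ and $\Gamma\setminus\{e\}$ in the sum \eqref{omega_tilde_in_neg_geom}; the argument is intrinsically about the full $\widetilde\Omega^{(L+1)}_n$, not about a single $\Omega^{(L+1)}_{n,\Gamma}$. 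Your reading of the Exclusion Result as a statement about which residues are ``geometrically accessible'' within one geometry is not what the paper proves.

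\textbf{Second, not every loop line localises to $AB_\ell = i_\ell j_\ell$.} Your inductive step asserts that the zero-dimensional boundaries for $AB_\ell$ are all of the form $(i_\ell j_\ell)$. This is true for the one-loop Amplituhedron in isolation, but once mutual conditions $\langle AB_\ell\, AB_{\ell'}\rangle = 0$ or $\langle AB_\ell\, AB\rangle = 0$ are in play, there are further solutions. The paper's taxonomy (Definition~\ref{definition 1}) distinguishes \emph{simple} configurations from two kinds of \emph{non-simple} ones. Non-simple non-intersecting configurations, e.g.\ $AB_3 = (124)\cap(135)$ in Fig.~\ref{fig:non-simple LS}(a), arise from Schubert problems among loop lines and can even be non-rational. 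Non-simple intersecting configurations, e.g.\ $AB_1 = i\,(j+\alpha\,(j{+}1))$ with $\alpha$ depending on $AB$ as in eq.~\eqref{alpha}, are essential: they are the source of the \emph{signed} arcs (both $\langle AB\,ij\rangle>0$ and $\langle AB\,ij\rangle<0$ pieces) in the final answer. Your induction never produces positive-sign arcs, so it cannot reproduce the geometries of Fig.~\ref{LS_summary}. Handling these cases is the content of Proposition~\ref{lemma 1} (all non-simple non-intersecting configurations are incompatible) and Proposition~\ref{proposition reverse compatible are triang} (non-simple intersecting compatible configurations yield signed partial triangulations), proved in Appendices~\ref{app: proof lemma 1} and~\ref{app: proof prop 2}.

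Finally, your heuristic for non-crossing---that crossing arcs are incompatible with the two-sign-flip condition on $AB$---is incorrect: the region $\{\langle AB\,13\rangle<0,\ \langle AB\,24\rangle<0\}\cap\mathcal A^{(1)}_5$ is nonempty and has the nonzero canonical function \eqref{LSvaluesimpleincompatibleexample}. Non-crossing comes instead from the compatibility condition of Definition~\ref{definition 2} together with the cross-graph cancellation of Proposition~\ref{proposition exclusion result}.
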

\end{tcolorbox}

\begin{figure}[t!]
\centering
\begin{tikzpicture}[scale = 0.5]
    \draw[thick] (0,0) circle(3cm);

    \coordinate (i) at (-130:3cm);
    \coordinate (j) at (110:3cm);
    \coordinate (k) at (60:3cm);
    \coordinate (l) at (-50:3cm);
    
    \coordinate (a) at (100:3cm);
    \coordinate (b) at (90:3cm);
    \coordinate (c) at (75:3cm);
    \coordinate (d) at (-60:3cm);
    \coordinate (e) at (-80:3cm);
    \coordinate (f) at (-90:3cm);
    \coordinate (g) at (-110:3cm);

    \node at (i) [below left] {$i$};
    \node at (j) [above left] {$j$};
    \node at (k) [above right] {$k$};
    \node at (l) [below right] {$l$};

    \draw[line width=0.4mm, red, thick] (i) -- (j);
    \draw[line width=0.4mm, red, dashed] (k) -- (l);
    
    \draw[line width=0.4mm, red,  dashed] (a) -- (f);
    \draw[line width=0.4mm, red,  thick] (c) -- (d);

     \foreach \p in {i,j,k,l,a,c,d,f} {
        \fill (\p) circle(2.5pt);
    }

\end{tikzpicture}

\centering
\caption{Representation of a one-loop Amplituhedron geometry with extra conditions ${\langle AB ij \rangle < 0}$, if the arc $(ij)$ of the $n$-gon (represented by a circle) is plain, and $\langle AB kl \rangle > 0$ if dashed. Every LS of the Wilson loop with Lagrangian insertion is the sum of canonical functions of geometries like this one. We evaluate the latter in Section \ref{subsec:from part triang to Kermits} 
as a combination of Kermit forms.}
\label{LS_summary}
\end{figure}
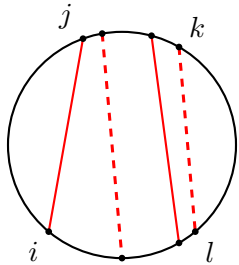

\subsection{Taxonomy of leading singularity configurations}
\label{subsec: taxonomy of LS}

In this paper we study leading singularities of $F_n^{(L)}$ from a geometric viewpoint. Thanks to the Amplituhedron-like construction given by the negative geometry expansion \eqref{omega_tilde_in_neg_geom}, every LS configuration corresponds uniquely to a configuration of $L+1$ lines $AB$, $AB_\ell$ and $n$ points $Z_i$ in three-dimensional projective space $\mathbb{P}^3$, with $\ell =1, \dots,L$ and $i=1, \dots, n$. In our setting, the data $AB$ and $Z_i$ are fixed, and constitute the external kinematics. The $Z_i$ satisfy external positivity \eqref{external_positivity}, and $AB$ lies in the one-loop Amplituhedron $\mathcal{A}^{(1)}_n$. Different LS configurations are obtained by fully localizing each loop line $AB_\ell$, i.e. by fixing each of its four degrees of freedom, by imposing boundary conditions related to $AB$ and the $Z_i$'s according to the loop-Amplituhedron's definition. In general, the location of a loop line $AB_\ell$ in a LS configuration depends on $AB$ and the $Z_i$'s. In particular, a bracket of the form $\langle AB_\ell \, AB_{\ell'} \rangle$ is a function of $AB$, where $AB$ lies in the one-loop Amplituhedron $\mathcal{A}^{(1)}_n$.

In this subsection we organize all LS configurations of relevant to individual negative geometries \eqref{F_integral} into different cases, as follows. This will be useful for our proof.
%
\begin{definition}\label{definition 1}
We divide all LS configurations of eq. \eqref{F_integral} into the three families: 
    \begin{enumerate}
    \item \textit{Simple configurations}: every loop line $AB_\ell$, for $\ell = 1, \dots, L$, localizes to ${AB_\ell = i_\ell j_\ell}$, with the indices $i_\ell < j_\ell$ taking values in $\{1,\dots,n\}$, see Figure \ref{fig:simple_LS}).
    \item \textit{Non-simple configurations}: the configuration is defined as not being simple, i.e. at least one loop line if different from $ij$. We further subdivide this case according to the relative configuration of the loop lines $AB_\ell$, and the distinguished loop line $AB$.\\
    a) If $\la AB \, AB_\ell \ra \neq 0 $ for every $\ell = 1, \dots, L$, we call such configurations
    {\it \LABELNONSIMPLE}. \\
    b) If instead at least one of the loop lines intersects the distinguished line, i.e. $\la AB_\ell \, AB \ra = 0$ for at least one $\ell$, then we call such configuration {\it \LABELREVERSE}.
\end{enumerate}
\end{definition}

Note that simple LS configurations are analogous to maximal codimension boundaries of the 
loop Amplituhedron, see \cite{Arkani-Hamed:2013kca}. In particular, we can represent each such configuration where $AB_\ell = i_\ell j_\ell$ by a collection of $L$ arcs $(i_\ell j_\ell)$ on an $n$-gon. Note that the arcs are allowed to coincide. On the other hand, {\LABELNONSIMPLE} LS are peculiar of negative geometries, 
i.e. they essentially arise because of replacing mutual positivity conditions with mutual negativity conditions between loops. Lastly, {\LABELREVERSE} configurations, arise because of the presence of the unintegrated loop $AB$, which has a distinguished role compared to all other loop lines.

\begin{definition}\label{definition 2}
    We partition LS configurations into the following two classes: 
\begin{enumerate}[label=(\alph*)]
    \item \textit{Compatible configurations}: every pair of loop lines in the configurations satisfies ${\langle AB_\ell \, AB_{\ell'} \rangle \geq 0}$, for every $AB \in \mathcal{A}^{(1)}_n$.
    \item \textit{Incompatible configurations}: all others, i.e. for which at least one pair of loops satisfy ${\langle AB_\ell \, AB_{\ell'} \rangle < 0}$, for at least one point $AB \in \mathcal{A}^{(1)}_n$.
\end{enumerate}
\end{definition}
This terminology is motivated by the fact that incompatible LS configurations can be discarded when considering LS of the Wilson loop $F_n^{(L)}$, as will be sown in Proposition \ref{proposition exclusion result}. Our notion of compatibility is also tightly related to that of cluster algebras of type $A_n$, which is essentially encoded in the non-crossing of arcs on an $n$-gon. In fact, a simple LS configuration is compatible, if and only if it is represented by a collection of non-crossing arcs, i.e. a \textit{partial triangulation}, of the $n$-gon, see Figure \ref{fig:simple_LS}. This follows immediately from external positivity \eqref{external_positivity}, and is summarized as follows.
\begin{tcolorbox}[colback=black!5!white]
   Every simple compatible LS configuration is associated to a partial triangulation of the $n$-gon, as Fig. \ref{fig:simple_LS}(a).
\end{tcolorbox}

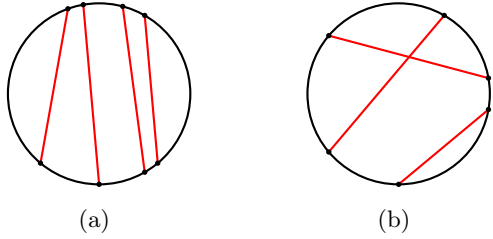
\begin{figure}[t]
\centering
\subfigure[]{
\begin{tikzpicture}[scale = 0.4]
    \draw[thick] (0,0) circle(3cm);

    \coordinate (i) at (-130:3cm);
    \coordinate (j) at (110:3cm);
    \coordinate (k) at (60:3cm);
    \coordinate (l) at (-50:3cm);
    
    \coordinate (a) at (100:3cm);
    \coordinate (b) at (90:3cm);
    \coordinate (c) at (75:3cm);
    \coordinate (d) at (-60:3cm);
    \coordinate (e) at (-80:3cm);
    \coordinate (f) at (-90:3cm);
    \coordinate (g) at (-110:3cm);

    \draw[line width=0.4mm, red, thick] (i) -- (j);
    \draw[line width=0.4mm, red, thick] (k) -- (l);
    
    \draw[line width=0.4mm, red,  thick] (a) -- (f);
    \draw[line width=0.4mm, red,  thick] (c) -- (d);

     \foreach \p in {i,j,k,l,a,c,d,f} {
        \fill (\p) circle(2.5pt);
    }

\end{tikzpicture}
}
\hspace{1cm}
\subfigure[]{
\begin{tikzpicture}[scale = 0.4]
    \begin{scope}[xshift = 8.5cm]
    \draw[thick] circle(3cm);


    \coordinate (x) at (-140:3cm);
    \coordinate (y) at (60:3cm);
    \coordinate (w) at (140:3cm);
    \coordinate (z) at (10:3cm);
    \coordinate (r) at (-10:3cm);
    \coordinate (s) at (-90:3cm);

    \draw[line width=0.4mm, red,  thick] (x) -- (y);
    \draw[line width=0.4mm, red,  thick] (w) -- (z);
    \draw[line width=0.4mm, red,  thick] (r) -- (s);

    \foreach \p in {x,y,w,z,r,s} {
        \fill (\p) circle(2.5pt);
    }

    \end{scope}
\end{tikzpicture}
}

\centering
\caption{Representations of two simple LS configurations, in which loop lines correspond to arcs of the $n$-gon. By definition \ref{definition 2}, (a) is a compatible configuration, as all the arcs are non-crossing, while (b) is an incompatible configuration.}
\label{fig:simple_LS}
\end{figure}

\subsection{Short outline of the proof}
\label{subsec: short outline of the proof}

Now that we subdivided all leading singularity configurations in different categories, we are ready to present an outline of the proof of Claim \ref{claim 1} and Claim \ref{claim 2}.
The proof is achieved via the following three steps.

{\it{\underline{Step 1: The exclusion result.}}} The first ingredient is the following, which guarantees that when considering LS values of $F_n^{(L)}$, all incompatible LS configurations are irrelevant.

\begin{tcolorbox}[colback=black!5!white]\begin{proposition}[Reduction Result]\label{proposition exclusion result}
   Incompatible leading singularity configurations of $F_n^{(L)}$ either evaluate to zero, or reduce to a sum of LS values of lower-loop compatible LS configurations. 
\end{proposition}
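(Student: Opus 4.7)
The plan is to compute the residue of $\widetilde{\mathbf{\Omega}}_n^{(L+1)}$ at an incompatible leading singularity configuration $\mathcal{C}$ by leveraging the negative geometry expansion \eqref{omega_tilde_in_neg_geom}. Since $\mathcal{C}$ is incompatible, there exist two loops $\ell,\ell'$ such that $\langle AB_\ell AB_{\ell'}\rangle$ is non-vanishing at $\mathcal{C}$ (and in particular of the wrong sign for at least one $AB\in\mathcal{A}_n^{(1)}$). The first step is to single out such a \emph{bad pair} and to organise the sum over connected graphs around the corresponding edge $e=(\ell,\ell')$.

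The second step is a pairing argument. Partition the connected graphs $\Gamma$ on $L+1$ nodes into those containing $e$ and those not. For each $\Gamma$ such that both $\Gamma$ and $\Gamma\setminus e$ are connected, the sign factors $(-1)^{E(\Gamma)}$ make the two summands differ by an overall sign, so one hopes that the residues of $\Omega_{n,\Gamma}^{(L+1)}$ and $\Omega_{n,\Gamma\setminus e}^{(L+1)}$ at $\mathcal{C}$ coincide and thereby cancel. The key input here is an identity relating the canonical functions of two negative geometries differing by a single mutual-negativity constraint: because the bracket of the bad pair is regular at $\mathcal{C}$, the pole at $\langle AB_\ell AB_{\ell'}\rangle=0$ that distinguishes the two geometries plays no role in the residue, producing the required cancellation.

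The third step handles graphs for which $e$ is a bridge, so that $\Gamma\setminus e$ is not connected and has no partner in the sum. Here $\Gamma\setminus e$ factorises into two connected components, and combined with the $1/\langle AB_\ell AB_{\ell'}\rangle$ factor associated to the bridge, one can evaluate the residue on one of the two subdiagrams independently. The outcome is either identically zero or, when only one subgraph survives non-trivially, a combination of LS values for negative geometries of strictly lower loop order, whose configurations one may assume compatible by an induction on $L$.

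The hard part will be formalising the residue identity in the pairing step: $\Omega_{n,\Gamma}^{(L+1)}$ and $\Omega_{n,\Gamma\setminus e}^{(L+1)}$ do not differ by a simple multiplicative factor, since imposing $\langle AB_\ell AB_{\ell'}\rangle<0$ genuinely alters the boundary structure of the associated positive geometry. Overcoming this will likely require a bridge-type recursion for canonical forms of negative geometries, or a careful use of the sign-flip characterisation to argue that, once all loop variables are localised to $\mathcal{C}$, the poles distinguishing $\Gamma$ and $\Gamma\setminus e$ are invisible to the residue. With that identity in hand, both the cancellation in the paired case and the inductive reduction in the bridge case follow cleanly.
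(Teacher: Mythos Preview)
Your strategy matches the paper's: pair connected graphs by the presence/absence of the bad edge $e=(\ell,\ell')$, cancel via the sign $(-1)^{E(\Gamma)}$ whenever $\Gamma\setminus e$ remains connected, and treat the bridge case separately by reducing to a lower-loop configuration. Two points where your writeup diverges from what is actually needed. First, the step you flag as ``hard'' is in fact immediate: since the configuration $\mathcal{C}$ lies away from the hypersurface $\langle AB_\ell AB_{\ell'}\rangle=0$, that boundary plays no role in the maximal residue, so $\Omega_{n,\Gamma}^{(L+1)}$ and $\Omega_{n,\Gamma\setminus e}^{(L+1)}$ have identical residues at $\mathcal{C}$ --- no bridge recursion or sign-flip analysis required; this is exactly your own parenthetical intuition, and the paper dispatches it in one sentence plus a footnote. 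Second, in the bridge case the paper does \emph{not} invoke a $1/\langle AB_\ell AB_{\ell'}\rangle$ factorisation of the integrand; rather it observes that when $\Gamma\setminus e=\Gamma_1\sqcup\Gamma_2$ with $AB\in\Gamma_1$, every loop line in $\mathcal{L}_2$ is localised independently of $AB$ (the only path to $AB$ in $\Gamma$ was through the now-irrelevant edge $e$), so the residue equals that of $\mathcal{F}_{\Gamma_1}$ at the restricted configuration $\mathcal{L}_1$, and one iterates to strip off all bad pairs. Finally, one detail you omit: the cancellation argument is a priori valid only at those $AB\in\mathcal{A}_n^{(1)}$ for which $\langle AB_\ell AB_{\ell'}\rangle<0$; the paper closes by extending to all of $\mathcal{A}_n^{(1)}$ via analyticity of the LS value in $AB$.
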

\end{tcolorbox}
 
It is therefore sufficient to restrict ourselves to considering compatible LS configurations. Moreover, we can discard all {\LABELNONSIMPLE} LS configurations, thanks to the following Lemma, whose proof is found  in Appendix~\ref{app: proof lemma 1}.
\begin{tcolorbox}[colback=black!5!white]
\begin{proposition}[Classification result]\label{lemma 1}
    All {\LABELNONSIMPLE} leading singularity configurations are incompatible.
\end{proposition}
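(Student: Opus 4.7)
The plan is to show that any non-simple non-intersecting LS configuration produces a pair of loop lines whose mutual bracket is strictly negative. Since none of the localization conditions involves $AB$, the lines $AB_\ell$ are determined purely by external data and by their mutual intersections; each $\langle AB_\ell\, AB_{\ell'}\rangle$ is therefore a fixed rational function of the $Z_i$'s, independent of the position of $AB$ in $\mathcal{A}^{(1)}_n$. Incompatibility thus reduces to a statement about the signs of these Plücker brackets under external positivity $\langle ijkl\rangle>0$.

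First, I would catalogue the available residues. The only poles of $\widetilde{\mathbf{\Omega}}_{n,\Gamma}^{(L+1)}$ are external ($\langle AB_\ell\, ii{+}1\rangle=0$), mutual ($\langle AB_\ell\, AB_{\ell'}\rangle=0$ for $\Gamma$-edges), and distinguished ($\langle AB\, AB_\ell\rangle=0$). A simple localization $AB_\ell=ij$ uses exactly four external poles (the consecutive segments around $i$ and $j$), so a non-simple $AB_\ell$ must consume fewer than four external poles and hence at least one mutual pole; the non-intersecting hypothesis rules out the distinguished poles, so the extra residue condition must be of the form $\langle AB_\ell\, AB_{\ell'}\rangle=0$ with $\ell'\in\{1,\dots,L\}$. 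I would encode these mutual residues as edges of an intersection graph $H$ on $\{1,\dots,L\}$ and focus on the connected component $C$ of $H$ containing some non-simple $AB_\ell$; by construction $C$ has at least one edge. Propagating the localization through $C$ one line at a time, three external conditions confine each non-simple line to one of a few explicit algebraic branches (typically ``through a vertex $Z_i$ meeting a fixed line'' or ``in the plane $(ijk)$ through one of its vertices''), after which the mutual intersection with a neighbour in $C$ selects a unique line per branch. The resulting Plücker coordinates are explicit rational functions of the $Z_i$'s with signs controlled by external positivity.

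Finally, I would exhibit a pair $(AB_a, AB_b)$ in $C$ not joined by an $H$-edge with $\langle AB_a\, AB_b\rangle<0$. Heuristically, without an anchor at $AB$ the chain of mutual intersections winds around the $n$-gon in a manner that forces a crossing of loop lines, and external positivity translates this crossing into a sign violation for some mutual bracket. The main obstacle is establishing this sign violation uniformly across all intersection topologies; a natural strategy is induction on $|C|$, with the base case handled by a direct Plücker identity (for $|C|=2$ one verifies that three external conditions plus a mutual intersection with a simple neighbour often force the putative non-simple line to collapse to a simple one, reducing the analysis to $|C|\ge 3$), and the inductive step showing that extending $C$ by one more non-simple line either reproduces the previous sign pattern or introduces the desired negative bracket. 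A complementary combinatorial route, which I would pursue in parallel, translates $H$ into a crossing diagram on the $n$-gon so that non-simple intersection topologies correspond to crossing arc families, yielding the incompatibility directly from combinatorics and matching the non-crossing arc picture underlying Claim~\ref{claim 2}.
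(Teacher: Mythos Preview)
Your overall direction is right: the brackets $\langle AB_\ell\,AB_{\ell'}\rangle$ are $AB$-independent in this setting, a non-simple line must use mutual poles, and the goal is to exhibit a negative mutual bracket. However, the proposal has a genuine gap at the technical core.

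The paper's proof rests on a concrete Schubert computation that you do not carry out and whose outcome you do not anticipate. One first reduces, by induction on the number of loops, every loop line that the non-simple $CD$ intersects to an \emph{arc} $(i_rj_r)$ of the $n$-gon. This is the clean inductive structure: the inductive hypothesis is the proposition itself at lower $L$ (compatible non-intersecting configurations are simple), so the neighbours of $CD$ may be taken to be of the form $Z_{i_r}Z_{j_r}$. Then one analyses a line $CD\in\mathcal{A}^{(1)}_n$ meeting four such arcs. Solving $\langle CD\,i_1j_1\rangle=\langle CD\,i_2j_2\rangle=0$ with the loop-positivity and sign-flip conditions forces $i_1<j_1\le i_2<j_2$ (non-crossing) and positive affine parameters; imposing the remaining two intersection conditions then has \emph{no} solution inside $\mathcal{A}^{(1)}_n$ unless either some arcs coincide at a vertex (in which case $CD$ degenerates to an arc) or at least one pair of arcs crosses, i.e.\ $\langle i_rj_r\,i_sj_s\rangle<0$. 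That negative bracket is precisely the incompatibility. This explicit case analysis is the heart of the argument; without it your induction has neither a base case nor a step.

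Two specific issues with your sketch: (i) you assume ``three external conditions'' on the non-simple line, but the relevant case is when $CD$ meets \emph{two or more} other loop lines, and the paper shows that one mutual intersection together with three external conditions already forces $CD$ to be an arc (so no incompatibility arises there); (ii) your induction on $|C|$ is not well-posed---you say the $|C|=2$ case ``often'' collapses to a simple line, which is not a proof, and you never state an inductive hypothesis that survives adding a line. Replacing this with induction on $L$ and the neighbours-are-arcs reduction, followed by the Schubert analysis above, closes the gap.
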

    
\end{tcolorbox}

{\it{\underline{Step 2: Non-simple intersecting compatible LS.}}} 
Thanks to the result of step 1, we can restrict our attention to compatible LS configurations only, which can be either simple or {\LABELREVERSE}. 
The former are easy to understand: they are associated to partial triangulations of the $n$-gon, see Figure \ref{fig:simple_LS}. 
The second main ingredient provides a better understanding of
{\LABELREVERSE} compatible LS, via the following Proposition.
\begin{tcolorbox}[colback=black!5!white]
\begin{proposition}[Signed Partial Triangulations]\label{proposition reverse compatible are triang}
    All {\LABELREVERSE}, compatible leading singularities correspond to unions of \textit{signed partial triangulations} of an $n$-gon.
\end{proposition}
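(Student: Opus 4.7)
The plan is to start with a non-simple intersecting compatible LS configuration and analyze each loop line case by case. Fix a particular loop line $AB_\ell$ with $\langle AB_\ell\, AB \rangle = 0$. Since $AB_\ell$ must be fully localized (four independent conditions in $\mathrm{Gr}(2,4)$), and this intersection condition accounts for one of them, the remaining three must come from saturating positivity inequalities of the Amplituhedron, i.e.\ vanishing of brackets of the form $\langle AB_\ell\, i_r j_r \rangle$ where $(i_r j_r)$ are edges or diagonals of the $n$-gon. The first step is therefore to enumerate all geometric configurations in which a line in $\mathbb{P}^3$ intersects $AB$ and simultaneously three such planes or edges, and show that each such localization can be naturally written as a line passing either through a single $Z_i$ and a specific point on $AB$, or through the intersection of $AB$ with a plane $(i,j,k)$.

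Next, I would associate to every such $AB_\ell$ an arc $(i,j)$ on the $n$-gon together with a sign. The sign is the sign of an auxiliary bracket $\langle AB\, i j\rangle$ (positive or negative) that is forced by the requirement that $AB_\ell$, once localized, still satisfies the residual positivity/negativity conditions inherited from the negative geometry graph $\Gamma$. Concretely, for each standard type of intersecting configuration one computes the bracket $\langle AB_\ell\, pq\rangle$ on neighbouring edges and reads off which sign of $\langle AB\, ij\rangle$ is compatible with the Amplituhedron constraints. This is precisely the data of a decorated arc as in Figure \ref{LS_summary} (plain vs.\ dashed).

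With the arc-with-sign assignment in hand, I would then use the compatibility hypothesis $\langle AB_\ell\, AB_{\ell'}\rangle \geq 0$ on all of $\mathcal{A}^{(1)}_n$ to prove the non-crossing property. Here the key input is the same geometric/algebraic identity that underlies the simple compatible case: if two arcs $(i,j)$ and $(k,l)$ cross on the $n$-gon, then by external positivity \eqref{external_positivity} the associated bracket $\langle AB_\ell\, AB_{\ell'}\rangle$ must change sign as $AB$ varies over $\mathcal{A}^{(1)}_n$, contradicting compatibility. A mild generalization is needed because the brackets now involve intersecting lines; but using the Plücker-type expansions already exploited in eq.~\eqref{plane_intersection} one can reduce each pairing to a combination of four-brackets of $Z_i$'s with a controlled dependence on $AB$.

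Combining the three steps, every non-simple intersecting compatible LS configuration is a union of such signed non-crossing arcs, i.e.\ a signed partial triangulation of the $n$-gon. The main obstacle I anticipate is the case analysis in the first step: intersecting loop lines can degenerate in several subtly different ways (for instance a line through a single $Z_i$ meeting $AB$ transversally, versus a line lying in a plane through an edge of the polygon and cutting $AB$), and enumerating them while consistently tracking the induced sign of $\langle AB\, ij\rangle$ requires care. Once this book-keeping is set, the non-crossing/compatibility step is a direct extension of the argument already in place for the simple compatible case described around Figure \ref{fig:simple_LS}.
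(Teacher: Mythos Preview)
Your outline captures the right spirit for step 3 (non-crossing from compatibility), but there is a genuine gap in step 1 that undermines the whole scheme. You assume that once the condition $\langle AB_\ell\, AB\rangle = 0$ is imposed, the remaining three conditions localizing $AB_\ell$ are of the form $\langle AB_\ell\, i_r j_r\rangle = 0$ with $i_r,j_r$ external indices. This is false in general: the remaining conditions may be \emph{mutual} cuts $\langle AB_\ell\, AB_{\ell'}\rangle = 0$ with other loop lines, which are themselves localized by further mutual cuts, and so on. This produces iterated intersection points of arbitrary complexity; for instance at $L=6$ one encounters $AB_5 = P[jk,i]\,P[bc,a]$ and $AB_6 = d\,\bigl(AB\cap(d\,AB_5)\bigr)$, where $P[jk,i] = (jk)\cap(iAB)$. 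A finite case-by-case enumeration of how one line meets $AB$ and three external edges cannot cover these towers. The paper handles this by introducing a notion of \emph{length} of an intersection point (the minimal number of intersection symbols needed to build it from the $Z_i$ and $AB$) and proving by induction on length that (i) only points of the specific iterated forms $(PQ)\cap(ABR)$ and $AB\cap(PQR)$ survive in a compatible configuration, and (ii) all resulting constraints on $AB$ reduce, via Pl\"ucker relations, to fixed signs of products of brackets $\langle AB\, ij\rangle$.

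A second, related omission: you treat only loop lines that intersect $AB$, but a non-simple intersecting configuration may also contain loop lines $CD$ with $\langle AB\, CD\rangle \neq 0$. These contribute to the $AB$-geometry through the mutual negativity condition $\langle AB\, CD\rangle < 0$, and one must again show (by the same induction on length) that this bracket factors into a product of $\langle AB\, ij\rangle$'s. Finally, your step 2 suggests associating a \emph{single} signed arc to each $AB_\ell$, but in fact each loop line contributes a sign constraint on a \emph{product} of such brackets, which is why the result is a \emph{union} of signed partial triangulations rather than a single one. The non-crossing argument you sketch is then carried out in the paper not by a direct Pl\"ucker reduction, but by exhibiting, for any two arcs $(i_1 j_1),(i_2 j_2)$ appearing in the constraints, two chains of loop lines connecting $i_t$ to $j_t$ and using compatibility along the chains together with an inductive transitivity lemma to force $\langle i_1 j_1 i_2 j_2\rangle \geq 0$.
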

\end{tcolorbox} 
The proof of this proposition is given in Appendix \ref{app: proof prop 2}. 

{\it{\underline{Step 3: From signed triangulations to Kermits.}}} 
As a result of steps 1 and 2, we identified every relevant LS configuration of $F_n^{(L)}$ with a signed partial triangulation.
Lastly, we associate to each such configuration a one-loop Amplituhedron geometry, with extra conditions \eqref{geom of traing}. 
The LS value is given by the canonical function of this geometry. 
The third main ingredient is the following, which shows that every such geometry can be tiled into Kermits.

\begin{tcolorbox}[colback=black!5!white]
\begin{proposition}[From Triangulations to Kermits]\label{proposition from triang to kermits}
    For every signed partial triangulation $T$ of the $n$-gon, the geometric space \eqref{geom of traing} can be tiled into Kermits.
\end{proposition}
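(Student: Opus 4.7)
The plan is to prove the proposition by extending the signed partial triangulation $T$ to a full triangulation $\widetilde T$ of the $n$-gon, then applying the Kermit tiling of the one-loop Amplituhedron $\mathcal{A}^{(1)}_n$ associated with $\widetilde T$ from eq.~\eqref{omega_in_kermits}, and finally retaining exactly those Kermits whose interior satisfies the signed constraints of $T$. Concretely, given $T$, I would first enlarge it to a full triangulation $\widetilde T \supseteq T$ by adding any maximal set of non-crossing arcs; these additional arcs carry no sign constraint and serve merely as auxiliary combinatorial scaffolding. The Kermit expansion with respect to $\widetilde T$ tiles $\mathcal{A}^{(1)}_n$ into cells indexed by pairs of non-overlapping triangles of $\widetilde T$.

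The central input will be the observation that, on the interior of each Kermit cell $[\Delta_1;\Delta_2]$ coming from the $\widetilde T$-expansion, the sign of the bracket $\langle AB\, ab\rangle$ is constant for every arc $(ab)$ of $\widetilde T$. In particular, the signs of all the $T$-arcs are fully determined once one fixes the Kermit cell, so each Kermit is either entirely contained in the region \eqref{geom of traing}, when those signs match the labels of $T$, or entirely disjoint from it, otherwise. Selecting the former and summing their canonical forms will yield a tiling of \eqref{geom of traing} by Kermits, producing its canonical function as an explicit linear combination of Kermit forms.

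The hardest step will be establishing the fixed-sign property just mentioned. My strategy would begin from the sign-flip definition \eqref{loop_positivty}: the chain $(\langle AB\, 12\rangle,\ldots,\langle AB\, 1n\rangle)$ has exactly two sign flips, and the positions of these flips label the standard Kermit cells $[1\, i\, i{+}1;\,1\, j\, j{+}1]$ associated to the fan-triangulation $\widetilde T=\{(1k)\}_{k=3}^{n-1}$. To transport the sign control from this special triangulation to an arbitrary $\widetilde T$, I would argue by induction on the arcs, using three-term Schouten-type identities that express a bracket $\langle AB\, ab\rangle$ as a signed linear combination of brackets already controlled at the previous step, with coefficients given by positive four-brackets $\langle ijkl\rangle$ (by external positivity \eqref{external_positivity}). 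This step can be informed by analogous sign-fixing arguments in the literature on tilings of the $m=2$ Amplituhedron \cite{Parisi:2021oql,Lukowski:2019sxw}, adapted to the present $m=4$, $k=0$ setting via the isomorphism mentioned earlier in the paper.

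Finally, I would verify that the compatible Kermits have disjoint interiors and match consistently along their codimension-one boundaries. This is automatic given the analogous statement for the unrestricted tiling \eqref{omega_in_kermits} once the fixed-sign property is in place, since our selection does not introduce new boundaries internal to the region. Assembling these ingredients gives the desired Kermit tiling of \eqref{geom of traing} and completes the proof of Proposition~\ref{proposition from triang to kermits}.
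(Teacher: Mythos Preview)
Your approach is essentially the same as the paper's: both complete $T$ to a full triangulation $\overline{T}$, decompose $\mathcal{A}_n(T)$ according to the signs of the $\overline{T}$-arcs, and identify each piece with a Kermit. The only difference is that the paper invokes the bijection between full sign patterns and Kermits (the ``area'' rule for bicolored triangulations of type $(2,n)$) directly from \cite{Parisi:2021oql}, whereas you sketch an independent proof of the fixed-sign property via sign-flip and Schouten identities; your sketch is reasonable but the paper simply cites this step.
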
 
\end{tcolorbox}
This follows from the complete understanding of tiles and tilings for the $m=2$ Amplituhedron \cite{Parisi:2021oql}. Together, this shows that every LS value of $F_n^{(L)}$ can be expressed as a sum of Kermit forms, which is the content of Claim \ref{claim 1}.

Lastly, we prove a conjecture from ref. \cite{Chicherin:2022bov} about conformal properties of leading singularities of $F_n^{(L)}$ in the frame where the Lagrangian insertion point is mapped to infiity.
\begin{tcolorbox}[colback=black!5!white]
\begin{proposition}[Conformal Invariance]\label{proposition conf inv}
In the frame where $AB \rightarrow I_\infty$, all leading singularities of $F_n^{(L)}$, or equivalently, all Kermit forms \eqref{six_invariant}, multiplied by the Parke-Taylor factor \eqref{PT} are conformally invariant.
\end{proposition}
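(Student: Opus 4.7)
The plan is to leverage Claim~\ref{claim 1}, which expresses every LS of $F_n^{(L)}$ as a $\mathbb{C}$-linear combination of Kermit forms \eqref{six_invariant}. By linearity of the conformal action, this reduces the proposition to showing that a single product ${\rm PT}_n \cdot [a_1 b_1 c_1; a_2 b_2 c_2]|_{AB \to I_\infty}$ is conformally invariant in momentum space. Each Kermit form is an $SL(4)$-covariant rational function of the external twistors $Z_i$, so dual conformal symmetry, combined with the manifest weight structure of ${\rm PT}_n$, yields invariance under Poincaré, Lorentz and dilation generators in momentum space for free. The only non-trivial content is therefore invariance under the momentum-space special conformal generator
\begin{equation*}
K_{\alpha\dot\alpha} = \sum_{i=1}^{n} \frac{\partial^2}{\partial \lambda_i^\alpha \, \partial \tilde\lambda_i^{\dot\alpha}}\,.
\end{equation*}

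To verify $K_{\alpha\dot\alpha}$-invariance I would first transcribe the Kermit form into spinor-helicity variables in the frame $AB \to I_\infty$. There $\la AB ij \ra \to \la ij \ra$, so the denominator of \eqref{six_invariant} becomes purely holomorphic in spinors, while the numerator, expanded as in \eqref{plane_intersection}, reduces to
\begin{equation*}
\la a_1 b_1\ra \la c_1\, a_2 b_2 c_2\ra + \la b_1 c_1\ra \la a_1\, a_2 b_2 c_2\ra + \la c_1 a_1\ra \la b_1\, a_2 b_2 c_2\ra\,.
\end{equation*}
Using the identity $\la i, i{+}1, j, j{+}1\ra = \la i, i{+}1\ra \la j, j{+}1\ra \, x_{ij}^2$ together with Plücker relations to handle non-adjacent labels, the resulting expression depends on the $\tilde\lambda_i$'s only through the region momenta $x_j = \sum_{k \leq j} \lambda_k \tilde\lambda_k$. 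The action of $K_{\alpha\dot\alpha}$ thus reduces to a controlled sum of explicit derivatives, which I expect to organize itself into a total-derivative form by virtue of the antisymmetric structure of the numerator in the labels $\{a_1, b_1, c_1\}$ and $\{a_2, b_2, c_2\}$ of each triangle.

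The main technical obstacle is the combinatorial bookkeeping: for generic placement of the six labels on the cyclic $n$-gon, the non-adjacent four-brackets expand into long chains of $x$-differences, and the $K_{\alpha\dot\alpha}$-action produces many terms indexed by pairs of external particles. I anticipate that the spectator Parke-Taylor brackets associated to the $n-6$ particles outside the Kermit's support telescope away, reducing the check to a finite case analysis on the relative position of the two triangles, in the same spirit as the classical proofs of momentum-space conformal invariance of tree-level $R$-invariants in the BCFW expansion \cite{Arkani-Hamed:2009ljj,Arkani-Hamed:2012zlh}.
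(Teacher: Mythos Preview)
Your setup is sound: reducing to Kermit forms via Claim~\ref{claim 1}, noting that only the special conformal generator $K_{\alpha\dot\alpha}$ is non-trivial, and transcribing \eqref{six_invariant} to spinor-helicity in the frame $AB\to I_\infty$ are all correct first moves and agree with the paper. But the heart of your argument is missing. You write that you ``expect'' the $K$-action to organize into a total derivative by antisymmetry, and that you ``anticipate'' the spectator Parke--Taylor brackets to telescope; neither expectation is substantiated. In fact the four-brackets $\la a_1 a_2 b_2 c_2\ra$ etc.\ depend on the region momenta $x_j$, hence on the $\tilde\lambda$'s of \emph{all} particles between the Kermit labels, so spectators are not inert under $K$. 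Without a concrete identity or structural lemma, the brute-force computation you sketch does not close.

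The paper proceeds quite differently. Rather than attacking a generic Kermit directly, it passes (via the equivalence of the span of Kermits and of simple compatible LS, Subsections~\ref{subsec: formulae for simple compatible LS}--\ref{subsec: Kermit forms in terms of simple compatible LS}) to the functions $\Omega_n$, $\Omega_n(ij)$, $\Omega_n(ij,kl)$ and runs an induction on $n$. The key tools are explicit recursion relations, eqs.~\eqref{MHV_recursion}, \eqref{LS_E_1_recursion}, \eqref{trivial_recursion}, \eqref{LS_E_2_recursion}, which express each $n$-point object in terms of lower-point ones plus a single residual piece $[1,i,i{+}1,n]$. Conformal invariance of this residual is established by a clean factorisation
\[
{\rm PT}_n\cdot [1,i,i{+}1,n]\Big|_{AB\to I_\infty}=\frac{C^{(1)}_{\alpha\beta}\,C^{(2),\alpha\beta}}{\la n1\ra^2}\,,
\]
where $C^{(1)}$ and $C^{(2)}$ depend anti-holomorphically on disjoint sets of particles and are each annihilated by $K$ (this is the computation below eq.~(3.9) of \cite{Henn:2019mvc}). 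The special Kermits $[i{-}1\,i\,i{+}1;j{-}1\,j\,j{+}1]$, which seed the induction, are manifestly conformal since after multiplying by ${\rm PT}_n$ they depend on $\tilde\lambda_i,\tilde\lambda_j$ and $\lambda_k$ for $k\neq i,j$ only. What the paper's approach buys is that one never has to confront the generic six-label case head-on; the recursion funnels everything down to these two explicitly checkable families.
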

\end{tcolorbox}

\subsection{Reduction of incompatible leading singularities}
\label{subsec: exclusion of incompatible LS}

In this section we prove Proposition \ref{proposition exclusion result} (Reduction Result). The idea is that even though individual negative geometries appearing in eq. (\ref{omega_tilde_in_neg_geom}) may have arbitrarily complicated LS, most of them either cancel or simplify in the sum.
Proposition \ref{proposition exclusion result}
implies in particular that we can discard all incompatible LS when considering the full Wilson loop. Hence, only compatible (simple and {\LABELREVERSE}) LS configurations are relevant for $F_n^{(L)}$.

\begin{proof}
Let $\mathcal{L} = \{AB_1, \dots, AB_L\}$ be an incompatible LS configuration of some negative geometry \eqref{F_integral}. Let us consider the set $\mathcal{G}_L$ of connected graphs with $L+1$ labeled nodes, with labels $0,1,\dots,L$ corresponding to the loop lines $AB,AB_1, \dots,AB_L$. Note that \eqref{omega_tilde_in_neg_geom} is in fact a sum over all $\Gamma \in \mathcal{G}_L$. By assumption, there exist $AB_\ell, AB_{\ell'}\in \mathcal{L} $ such that
\begin{equation}\label{mutual_negativity}
    \la AB_\ell \, AB_{\ell'} \ra < 0 \,,
\end{equation}
for at least one point $AB \in \mathcal{A}^{(1)}_n$.
The integral $\mathcal{F}_\Gamma$ in \eqref{F_integral} associated to any graph $\Gamma \in \mathcal{G}_L$ that contains the edge $e:=\ell-\ell'$, yields the same LS value as the integral of $\Gamma \setminus \{e\}$\footnote{In general, multi-loop Amplituhedra have also non-unit residues,
see ref. \cite{Dian_2023}. Nevertheless, the presence of the edge $e$ does not affect the 
normalization of the LS, since by assumption the localization of $AB_{\ell}$ and $AB_{\ell'}$ lies away from the boundary $\langle AB_{\ell} \, AB_{\ell'} \rangle = 0$.}. However, the latter is well-defined only if $\Gamma \setminus \{e\}$ is still connected. Note that the LS value may be equal to zero. This happens if for example $\Gamma$ does not contain an edge $a-b$ corresponding to a multiloop cut $\la AB_a \, AB_b \ra = 0$ needed to build the configuration~$\mathcal{L}$. Nevertheless, if the LS vanishes for $\mathcal{F}_\Gamma$, so it does for $\mathcal{F}_{\Gamma \setminus \{e\}}$. Also the converse is true, if the LS is supported by $\mathcal{F}_\Gamma$, so it is by $\mathcal{F}_{\Gamma \setminus \{e\}}$, since by \eqref{mutual_negativity} the cut associated to $e$ is not involved in the configuration $\mathcal{L}$.

The LS value of $F_n^{(L)}$ corresponding to $\mathcal{L}$ involves the sum \eqref{omega_tilde_in_neg_geom} over all LS values of $\mathcal{F}_\Gamma$ for $\Gamma \in \mathcal{G}_L$, each multiplied by a factor $(-1)^{E(\Gamma)}$, with $E(\Gamma)$ being the number of edges of $\Gamma$. By what we just discussed, the LS value cancels between $\Gamma$ and $\Gamma \setminus \{e\}$, if the latter is a connected graph. Therefore, the sum reduces to all graphs $\Gamma$ such that $\Gamma \setminus \{e\}$ is disconnected. Consider any such $\Gamma \in \mathcal{G}_L$. Then $\Gamma$ consists of two connected subgraphs $\Gamma_1$ and $\Gamma_2$, connected by the single edge $e$. Accordingly, let us partition $\mathcal{L} = \mathcal{L}_1 \cup \mathcal{L}_2$ into sets of loop lines associated to nodes of $\Gamma_1$ and $\Gamma_2$, respectively. Assume without loss of generality that the node labeled by $0$ belongs to $\Gamma_1$, i.e. that $AB$ is in $\mathcal{L}_1$. We claim that the LS value of $\mathcal{F}_\Gamma$ associated to $\mathcal{L}$, is the same as that of $\mathcal{F}_{\Gamma_1}$ associated to $\mathcal{L}_1$. This is equivalent to all loop lines in $\mathcal{L}_2$ being independent of $AB$. This is clear, from the fact that no line in $\mathcal{L}_2$ intersects $AB$, nor it intersects a line intersecting $AB$, because $\Gamma_1$ is connected to $\Gamma_2$ only through $e=\ell -\ell'$ and \eqref{mutual_negativity} holds true. By repeatedly applying this argument, we can remove from $\mathcal{L}$ all pairs of loops satisfying \eqref{mutual_negativity}. We can therefore reduce the incompatible configuration $\mathcal{L}$ to a compatible configuration $\mathcal{L}'$, such that the LS value of $\mathcal{F}_\Gamma$ associated to $\mathcal{L}$, is the same as that of $\mathcal{F}_{\Gamma'}$ associated to $\mathcal{L}'$. By applying this argument to all graphs $\Gamma$ such that $\Gamma \setminus \{e\}$ is disconnected, we prove the claim.

Technically, we proved the result only at a point $AB \in \mathcal{A}^{(1)}_n$ for which \eqref{mutual_negativity} holds true. However, since LS of $F_n^{(L)}$ are analytic in the twistor coordinates of $AB$ and $Z_i$'s (in fact, they are rational), and by continuity of (\ref{mutual_negativity}) in $AB$, our result extends to an open set inside the one-loop Amplituhedron in $AB$. It follows that our result holds true everywhere in $\mathcal{A}^{(1)}_n$.

\end{proof} 

\begin{eg}
The simplest example for which the exclusion result applies happens at $L=2$. Consider $n=5$ and $AB_1=13$ and $AB_2 = 24$.
This LS configuration is simple; moreover it is incompatible, since $\langle 1324 \rangle <0$.
The LS value, which we discuss in subsection \ref{subsec:examplesofleadingsingularitiesvalues},
cancels between the diagrams in the negative geometry expansion \eqref{omega_tilde_in_neg_geom} for $F_5^{(2)}$, which is given by
\begin{equation} \label{eq:2-loop}
\begin{tabular}{cc}
 \includegraphics[width=8cm]{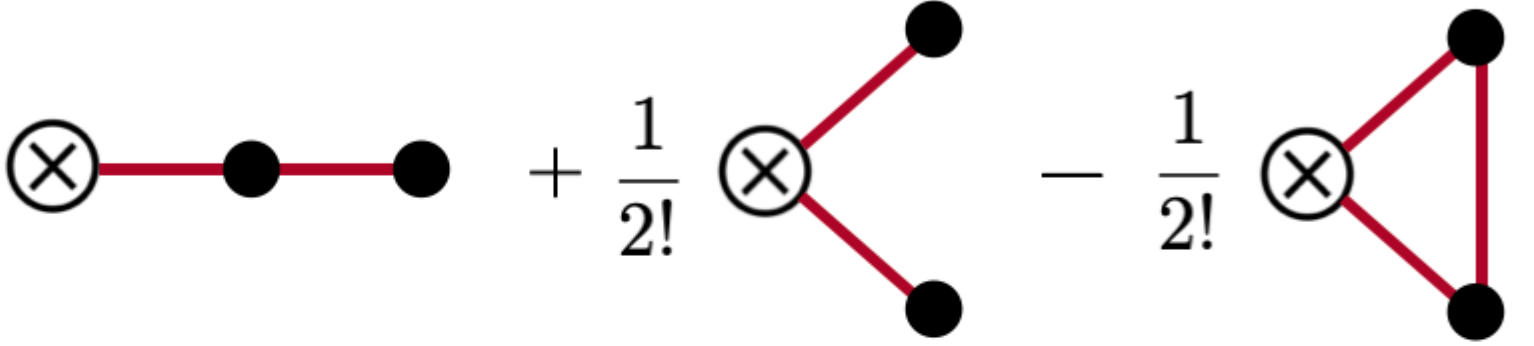}
\end{tabular}
\end{equation}
More precisely, the LS value in eq. \eqref{LSvaluesimpleincompatibleexample} cancels between the ladder with marked point in the middle, and the triangle, as they appear with the same coefficient but with opposite sign. The LS value of this configuration receives contributions only from the ladder with marked point at the extreme, i.e. the first term in eq. \eqref{eq:2-loop}. There are two labeled ladders, and the LS value is equal to $\Omega_5(13) + \Omega_5(24)$, see eq. \eqref{omega_in_Kermits_L1}. This LS is therefore equal to the sum of two LS associated to simple compatible configurations, as expected from Proposition \ref{proposition exclusion result}. 

\end{eg}

\begin{eg}\label{exp: L2 reverse incompatible}
\begin{figure}[t]
\centering
\begin{tikzpicture}[scale = 0.6]

    \coordinate (k) at (1,-2);
    \coordinate (P) at (0.17,0.05);
    \coordinate (Q) at (-1.12,-0.22);
    \coordinate (R) at (-1.25,2.18);
    \coordinate (S) at (-0.72,2.33);
    \coordinate (j) at (-2,2);
    \coordinate (j_plus) at (0,2.5);
    \coordinate (i) at (-1,-3);
    \coordinate (C) at (-1,3);
    \coordinate (D) at (1.4,-3);
    \coordinate (A) at (-2.5,-0.5);
    \coordinate (B) at (2.5,0.5);
    \coordinate (E) at (-1.3,3);
    \coordinate (F) at (-0.95,-3.8);


    \draw[dashed, gray] (j) -- (j_plus);


    \draw[red, thick] (C) -- (D);
    \node[right, red] at (C) {$AB_2$};

    \draw[red, thick] (E) -- (F);
    \node[left, red] at (E) {$AB_1$};

    \draw[teal, thick] (A) -- (B);
    \node[above, teal] at (B) {$AB$};

    \fill (i) circle (2pt);
    \fill (P) circle (2pt);
    \fill (Q) circle (2pt);
    \fill (R) circle (2pt);
    \fill (S) circle (2pt);
    \fill (j) circle (2pt);
    \fill (j_plus) circle (2pt);
    \fill (k) circle (2pt);

    \node[right] at (i) {$i$};
    \node[left] at (j) {$j$};
    \node[right] at (j_plus) {$j+1$};
     \node[right] at (k) {$k$};


    \begin{scope}[xshift = 8cm]
    \node at (4.5,0) {$\bigcup \quad \dots$};
    
    \draw[thick] (0,0) circle(2cm);

    \coordinate (i) at (-100:2cm);
    \coordinate (j) at (130:2cm);
    \coordinate (k) at (100:2cm);
    \coordinate (l) at (-20:2cm);

    \node at (i) [below ] {$i$};
    \node at (j) [above left] {$j$};
    \node at (k) [above ] {$j+1$};
    \node at (l) [below right] {$k$};

    \draw[line width=0.4mm, red, dashed] (i) -- (j);
    \draw[line width=0.4mm, red, thick] (i) -- (k);
    \draw[line width=0.4mm, red, thick] (l) -- (j);
    \draw[line width=0.4mm, red, dashed] (l) -- (k);
    \draw[line width=0.4mm, red, thick] (l) -- (i);

     \foreach \p in {i,j,k,l} {
        \fill (\p) circle(2.5pt);
    }
    \end{scope}

\end{tikzpicture}
\caption{An example of a {\LABELREVERSE} incompatible LS configuration at $L=2$, where the bracket $\la AB_1 \, AB_2 \ra $, which is a function of $AB$, does not have a fixed sign for $AB \in \mathcal{A}^{(1)}_n$.}
\label{fig:L2 reverse incompatible}
\end{figure}
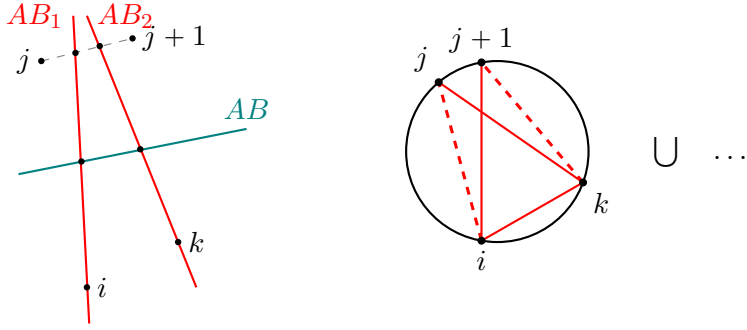

We now give an example that illustrates the relevance of the last paragraph in the proof of Proposition \ref{proposition exclusion result}. Generally, the twistor coordinate $\la AB_1 \, AB_2 \ra$ between two loops $AB_1$ and $AB_2$ in a LS configuration, is a function of $AB$. According to Definition \ref{definition 2}, if such a bracket takes negative values for $AB \in \mathcal{A}^{(1)}_n$, then the LS configuration is incompatible. The simplest example of this is at $L=2$, given by the configuration in Figure \ref{fig:L2 reverse incompatible}. The loops are localized as $ AB_1 = i(j + \alpha (j+1))$ and $AB_2=(j+ \beta \, (j+1))k$, where $\alpha,\beta$ depend on $AB$ as in eq. \eqref{alpha}. Then,
\begin{equation}\label{CDEF}
    \langle CDEF \rangle = (\beta - \alpha) \, \langle ijj+1 k \rangle 
\end{equation}
has no definite sign for $AB \in \mathcal{A}^{(1)}_n$. The last paragraph in the proof of Proposition \ref{proposition exclusion result} shows that the LS value associated to this configuration either vanishes or it reduces to that of compatible LS configurations.

\end{eg}

\section{Leading singularity values}
\label{sec: leading singularity values}

In this section we pass from LS configurations to LS values. More precisely, for every LS configuration relevant for $F_n^{(L)}$, i.e. for every simple and {\LABELREVERSE} compatible configuration, we explain how to associate to it a rational function in twistor variables in the $Z_i$'s and $AB$. In Section \ref{subsec:from part triang to Kermits} we associate a modified one-loop Amplituhedron space \eqref{geom of traing} to every simple and {\LABELREVERSE} compatible LS configuration.
We then show that any such space can be triangulated into Kermits, from which it follows that the associated LS value can be expressed as a linear combination of Kermit forms. In Section \ref{subsec: formulae for simple compatible LS} we give explicit formulae of simple compatible LS in terms of Kermit forms. Conversely, in Section \ref{subsec: Kermit forms in terms of simple compatible LS} we express all Kermit forms in terms of simple compatible LS, which implies in particular that the linear span of the former is the same as that of the latter.

\subsection{From partial triangulations to Kermit forms}
\label{subsec:from part triang to Kermits}

We first associate to each chord diagram a positive geometry leaving in the one-loop Amplituhedron in $AB$.

\begin{definition}
We define a \textit{signed partial triangulation} of an $n$-gon to be a subset $T$ of arcs of the $n$-gon, partitioned into two disjoint sets $T=T^{+} \cup T^{-}$, where $T^{+}$ contains dashed and $T^{-}$ plain arcs, as in Figure \ref{LS_summary}. For an arc $(ij) \in T$, we define $\text{sgn}_T(ij)$ to be $+1$ if $(ij) \in T^{+}$ and $-1$ if $(ij) \in T^{-}$. 
To a signed partial triangulation $T$, we associate a modified one-loop Amplituhedron geometry by
\begin{equation}\label{geom of traing}
    \mathcal{A}_n(T):=\{AB \in \mathcal{A}_{n}^{(1)} \ : \ \text{sgn}(ij) \, \langle AB ij \rangle > 0 \ , \quad \forall \, (ij) \in T\} \,.
\end{equation}
\end{definition}
Proposition \ref{proposition from triang to kermits} says that the space in \eqref{geom of traing} can be tiled into Kermits. In particular, the canonical function $\Omega_n(T)$ of $\mathcal{A}_n(T)$ can be written as a sum of Kermit forms \eqref{six_invariant}. Note that we allow for $\mathcal{A}_n(T) = \emptyset$, in which case $\Omega_n(T) = 0$. We now prove Proposition \ref{proposition from triang to kermits}, for which we rely on results about the $m=2$ Amplituhedron and its tiles \cite{Parisi:2021oql}.

\begin{proof}
Let $T$ be the signed partial triangulation of an $n$-gon. We complete $T$ to any full triangulation $\overline{T} = T \cup T'$, such that $T \cap T' = \emptyset$. Geometrically, $\mathcal{A}_n(T)$ can be written as the disjoint union
\begin{equation}
    \mathcal{A}_n(T) = \bigsqcup_{\overline{T}'} \mathcal{A}_n(\overline{T}') \,,
\end{equation}
where $\overline{T}'$ ranges over all possible signed triangulations obtained from $\overline{T} $ by assigning a sign to each arc in $T'$. It follows by \cite{Parisi:2021oql} that  $\mathcal{A}_n(\overline{T}')$ is nonempty if and only if there exists a bicolored triangulation $\mathcal{T} $ of type $(2,n)$ (depending on $\overline{T}'$) compatible with $\overline{T}'$. This means that each arc $(ij) \in \mathcal{T}$ is compatible with $\overline{T}$ and $(-1)^{\text{area}(ij)} = \text{sgn}_{\overline{T}'}(ij)$, where $\text{area}(ij)$ is the number of black triangles in $\mathcal{T} $ that lie on the left of the arc $(ij)$ when walking from $i$ to $j$.
If this is the case, $\mathcal{A}_n(\overline{T}') $ is in fact a Kermit, completely determined by $\mathcal{T}$. 
\end{proof}

\subsection{Formulae for simple compatible LS}
\label{subsec: formulae for simple compatible LS}

\begin{figure}[t]
\centering
\subfigure[]{
\begin{tikzpicture}[scale = 0.5]

    \draw[thick] (0,0) circle(3cm);

    \coordinate (i) at (-130:3cm);
    \coordinate (j) at (110:3cm);
    \coordinate (k) at (60:3cm);
    \coordinate (l) at (-50:3cm);
    
    \coordinate (a) at (100:3cm);
    \coordinate (b) at (90:3cm);
    \coordinate (c) at (75:3cm);
    \coordinate (d) at (-60:3cm);
    \coordinate (e) at (-80:3cm);
    \coordinate (f) at (-90:3cm);
    \coordinate (g) at (-110:3cm);

    \node at (i) [below left] {$i$};
    \node at (j) [above left] {$j$};
    \node at (k) [above right] {$k$};
    \node at (l) [below right] {$l$};

    \fill[black, opacity=0.3] (i) arc (230:110:3cm) -- (j) -- cycle;
    \fill[black, opacity=0.3] (l) arc (-50:60:3cm) -- (k) -- cycle;

    \draw[line width=0.4mm, red, thick] (i) -- (j);
    \draw[line width=0.4mm, red, thick] (k) -- (l);
    
    \draw[line width=0.4mm, red,  thick] (a) -- (f);
    \draw[line width=0.4mm, red,  thick] (a) -- (g);
    \draw[line width=0.4mm, red,  thick] (b) -- (e);
    \draw[line width=0.4mm, red,  thick] (c) -- (d);

     \foreach \p in {i,j,k,l,a,b,c,d,e,f,g} {
        \fill (\p) circle(2.5pt);
    }

    \node at (170:2.2cm) {$P_1$};
    \node at (10:2.3cm) {$P_2$};

\end{tikzpicture}
}
\hspace{1cm}
\subfigure[]{
\begin{tikzpicture}[scale = 0.5]
    \begin{scope}[xshift = 8.5cm]
    \draw[thick] circle(3cm);


    \coordinate (x) at (-140:3cm);
    \coordinate (y) at (130:3cm);
    \coordinate (w) at (100:3cm);
    \coordinate (z) at (30:3cm);
    \coordinate (r) at (20:3cm);
    \coordinate (s) at (-60:3cm);

    \draw[line width=0.4mm, red,  thick] (x) -- (y);
    \draw[line width=0.4mm, red,  thick] (w) -- (z);
    \draw[line width=0.4mm, red,  thick] (r) -- (s);

    \foreach \p in {x,y,w,z,r,s} {
        \fill (\p) circle(2.5pt);
    }

    \end{scope}

\end{tikzpicture}
}
\centering
\caption{(a) depicts the only type of nonzero LS with $|T| \geq 3$ arcs: the LS depends on the extremal arcs $(ij)$ and $(kl)$ bounding two polygons $P_1 $ and ${P_2}$ only. Eq. (\ref{LS_simple_formulae}) expresses the LS as a sum over Kermit forms, associated to (any) triangulations of $P_1$ and $P_2$, with one triangle in each. In particular, the LS depends only on the arcs $(ij),(kl)$.
(b) shows the other possible configuration for $|T| \geq 3$, for which the associated LS vanishes.}
\label{LS_configurations}
\end{figure}
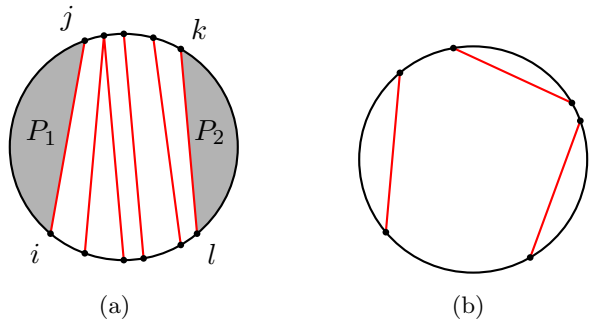

We give explicit formulae for simple compatible LS in terms of Kermit forms, and show that the former saturate at $L=2$. For that, let $T=\{(i_\alpha,j_\alpha)\}_{\alpha}$ be a partial triangulation of an $n$-gon associated to a simple compatible LS configuration, with $|T|$ arcs. We can regard $T$ as a signed partial triangulation by setting $T^{-}=T$ and $T^+= \emptyset$. We show that $\Omega_n(T) \neq 0 $, if and only if $|T| \leq 2$ or there exist two distinct arcs $(ij),(kl) \in T$ with $j \leq k$ such that for every other arc $(ab) \in T$ we have that either $l \leq a \leq i$ or $l \leq b \leq i$ but not both. The geometric meaning of this condition is explained in Figure \ref{LS_configurations}, where the for the picture on the left we have $\Omega_n(T) \neq 0 $ while for that on the right $\Omega_n(T) = 0 $.

Moreover, let us complete $T$ to a full triangulation $\overline{T}$ containing $T$, and denote by $P_1 = \{ i,i+1, \dots,j \}$ and $P_2= \{ k,k+1,\dots,l \}$ the polygons bounded by $(ij)$ and $(kl)$, see Figure \ref{LS_configurations}. Then, the following formula holds true,
\begin{align}\label{LS_simple_formulae}
    \Omega_n(ij,kl) &= \sum_{\substack{\Delta_1, \Delta_2 \subset \overline{T} \\ \Delta_1 \subset P_1 \,, \, \Delta_2 \subset P_2 }} [\Delta_1;\Delta_2] \,,
\end{align}
where the sum is over non-overlapping triangles $\Delta_1 = \{a_1,b_1,c_1\}$ and $\Delta_2 = \{a_2,b_2,c_2\}$ with arcs in $\overline{T}$. We allow $(ij)=(kl)$, for which we write simply $\Omega_n(ij)$ and \eqref{LS_simple_formulae} still makes sense. We also denote by $\Omega_n := \Omega_n(\emptyset ) = \Omega_n^{(1)}$, see \eqref{omega_in_kermits}.

It follows that for $|T| \geq 2$ the leading singularity $\Omega_n(T)$ depends only on the arcs $(ij),(kl)$ specified as above. Therefore all LS saturate at $|T|=2$, i.e. at loop level $L=2$.

\begin{proof}
Let $|T|=2$ and $T=\{(ij),(kl)\}$ be a partial triangulation. Complete $T$ to a full triangulation $\overline{T}$. It is clear that all bicolored triangulations of type $(2,n)$ compatible with some term $\overline{T}'$ in the expansion contain one colored triangle in each polygon $P_1$ and ${P_2}$, therefore (\ref{LS_simple_formulae}) for $|T|=2$ follows. 

Let $|T| \geq 3$. With the assumptions for $\Omega_n(T) \neq 0$, the same argument applies. If on the other hand the assumption is not satisfied, then there exist three arcs $(ij),(kl)$ and $(ab)$ such that w.l.o.g. $j \leq a$ and $b \leq k$. It is then straightforward to see that there does not exist any compatible bicolored triangulation of type $(2,n)$, see Figure \ref{LS_configurations}. 
  
\end{proof}

As an example, all simple compatible LS at $L=1$ are given by
\begin{equation}\label{omega_in_Kermits_L1}
    \Omega_n(ij) = \sum_{\substack{\Delta_1, \Delta_2 \subset \overline{T} \\ \Delta_1 \subset P_1 \,, \, \Delta_2 \subset P_2 }} [\Delta_1;\Delta_2] \,,
\end{equation}
where $P_1=\{i,i+1,\dots,j\}$, $P_2 = \{j,j+1, \dots , i\}$. These are indeed LS values of $F^{(1)}_n$.

\subsection{Kermit forms in terms of simple compatible LS}
\label{subsec: Kermit forms in terms of simple compatible LS}

We now express all Kermit forms in terms of simple compatible LS. This implies in particular that the linear space spanned by Kermit forms agrees with that of simple compatible LS. In particular, the linear space of all LS values of $F^{(L)}_n$ agrees with that of Kermits for $L \geq 2$.

We can express Kermit forms in terms of \eqref{LS_simple_formulae} as follows,
\begin{equation}\label{six_kermit_in_LS}
\begin{aligned}
    [ijk;abc] &= \Omega_n(ik,ac) - \Omega_n(ik,ab) - \Omega_n(ik,bc) - \Omega_n(ij,ac) - \Omega_n(jk,ac) \\
    & \quad + \Omega_n(ij,jk) + \Omega_n(ij,ab) + \Omega_n(ij,bc) + \Omega_n(jk,ab) + \Omega_n(jk,bc) + \Omega_n(ab,bc)   \,,
\end{aligned}
\end{equation}
for every $i<j<k \leq a < b < c \leq i$, and
\begin{equation}\label{four_kermit_in_LS}
\begin{aligned}
     [abcd] &= \Omega_n(ac)- \Omega_n(ab,ac) - \Omega_n(bc,ac) - \Omega_n(ac,cd) - \Omega_n(ac,ad) \\
    & \quad + \Omega_n(ab,bc) + \Omega_n(ab,cd) + \Omega_n(ab,ad) + \Omega_n(bc,cd) + \Omega_n(bc,ad) + \Omega_n(cd,ac)  \,,
\end{aligned}
\end{equation}
for every $a<b<c<d$. These formulae follow from an inclusion/exclusion principle analogous to that of the proof of Proposition \ref{proposition from triang to kermits}.

In the last two sections we showed how to evaluate LS values of $F_n^{(L)}$, and how to pass from signed partial triangulations to rational functions. The LS values can be explicitely expressed in terms of Kermit forms \eqref{six_invariant}, or in terms of simple compatible LS \eqref{LS_simple_formulae}.

\subsection{Conformal invariance of leading singularities at infinity}
\label{subsec: conformal invariance}

In this section we study the conformal properties of leading singularity values of $F_n^{(L)}$. In~\cite{Chicherin:2022bov} the authors argued that LS of $F_n^{(L)}$ are conformally invariant in the frame where the unintegrated loop is mapped to the infinity twistor, $AB \rightarrow I_{\infty}$. This surprising claim was supported by a Grassmann integral formula in the same paper, together with an argument from integrability. Here instead we present an independent proof of this fact, which only makes use of the fact the fact that the linear span of LS of $F_n^{(L)}$ agrees with that of Kermits \eqref{six_invariant}, or equivalently with that os simple compatible LS \eqref{LS_simple_formulae}.

More precisely, we study the action of the conformal generators in the frame $AB \rightarrow I_{\infty}$ given by
\begin{equation}\label{conf_gen}
    \mathbb{K}_{\alpha \dot{\alpha}} = \sum_{i=1}^{n} \frac{\partial^2}{\partial \lambda^{\alpha}_i \partial \tilde{\lambda}^{\dot{\alpha}}_i} \,.
\end{equation}
Let us look at the structure of the Kermit forms in this frame. Starting from \eqref{six_invariant}, substituting the external twistors $Z_i$ by $(\lambda_i, x_i \lambda_i)$ and evaluating the four-brackets as determinants, we compute
\begin{equation}\label{six_infty}
    [a_1 b_1 c_1 ; a_2 b_2 c_2] \Big|_{AB \rightarrow I_\infty} = \frac{\bigl( \langle a_1 b_1 \rangle \langle c_1a_2b_2c_2 \rangle + \langle b_1 c_1 \rangle \langle a_1a_2b_2c_2 \rangle + \langle c_1 a_1 \rangle \langle b_1a_2b_2c_2 \rangle \bigl)^2}{\langle a_1 b_1 \rangle \langle b_1 c_1 \rangle \langle a_1 c_1 \rangle \langle a_2 b_2 \rangle \langle b_2 c_2 \rangle \langle a_2 c_2 \rangle} \,.
\end{equation}
Analogously, eq. (\ref{four_invariant}) becomes
\begin{equation}\label{4_invariant_infty}
        [abcd] \Big|_{AB \rightarrow I_\infty} =  \frac{\langle abcd \rangle ^{2}}{\langle  ab \rangle \langle  bc \rangle \langle  cd \rangle \langle  da \rangle } \,.
\end{equation}
For these calculations we used the identity
\begin{equation}\label{four_bracket_general}
\begin{aligned}
    \langle ijkl \rangle & =  \langle i|x_i x_j | j\rangle \langle k l \rangle - \langle i|x_i x_k | k \rangle \langle j l \rangle + \langle i|x_i x_l | l \rangle \langle j k \rangle \\
    & \quad + \langle j |x_j x_k | k \rangle \langle i l \rangle - \langle j |x_j x_l | l \rangle \langle ik \rangle + \langle k|x_c x_l | l \rangle \langle ij \rangle \,,
\end{aligned}
\end{equation}
with the short-hand notation $\langle i | x y | j \rangle := \lambda_i^{\alpha} x _{\alpha \dot{\alpha}} y^{\dot{\alpha} \beta} \lambda_{j \, \beta}$, and index contractions are implemented with the Levi-Civita tensor. Also, 
\begin{equation}\label{identities}
    \begin{aligned}
        &\langle AB \, ij \rangle \Big|_{AB \rightarrow I_\infty} = \langle i j \rangle \ , \\
        &\ang{AB \, (i{-}1 i i{+}1)\cap (j{-}1 jj{+}1)} \Big|_{AB \rightarrow I_\infty} = \sq{ij}\ang{i-1 i}\ang{ii+1}\ang{j-1j}\ang{jj+1}  \,.
    \end{aligned}
\end{equation}
For the following class of Kermit forms, conformal invariance easy to prove,
\begin{equation}\label{special_conf_inv}
     [i-1ii+1;j-1jj+1] \Big|_{AB \rightarrow I_\infty} = \frac{ \sq{ij}^2 \ang{i-1\, i}\ang{i\,i+1}\ang{j-1\,j}\ang{j\,j+1}  }{ \ang{i-1 i+1}  \ang{j-1 j+1} } \,.
\end{equation}
After multiplication with the Parke-Taylor factor \eqref{PT}, they depend on $\tilde{\lambda}_i, \tilde{\lambda}_j$ and $\lambda_k$ for $k \neq i,j$ and are therefore annihilated by (\ref{conf_gen}). Examples of such Kermit forms are $\Omega_5(13,15)=[123;145]$ and $\Omega_6(13,46)=[123;456]$. 
Proposition \ref{proposition conf inv}, which we prove presently, generalizes the above examples, and establishes conformal invariance of all Kermit forms.

\begin{proof}[Proof of Proposition \ref{proposition conf inv} (Conformal Invariance).]
Recall that by Claim \ref{claim 1}, together with the results from subsections \ref{subsec: formulae for simple compatible LS} and \ref{subsec: Kermit forms in terms of simple compatible LS}, it is equivalent to show conformal invariance of Kermits forms, of all LS of $F^{(L)}_n$, or of only simple compatible LS \eqref{LS_simple_formulae}. Here we focus on the latter, and show that they are annihilated by the conformal generators \eqref{conf_gen}, by induction over the number of points $n \geq 4$.
The base case is $n=4$, where by \eqref{omega_in_kermits} we have that $\Omega_4^{(1)} = [1234]$. In the frame $AB \rightarrow I_{\infty}$, this is a special case of \eqref{special_conf_inv},
\begin{equation}\label{1234}
   {\rm PT}_n \cdot [1234] \Big|_{AB \mapsto I_\infty} = \frac{[24]^2}{\la 13 \ra^2} \,. 
\end{equation}
Eq. \eqref{1234} is clearly annihilated by the conformal generators \eqref{conf_gen}.
In particular, for any $n>4$, any cyclic shift of $\Omega_4^{(1)}$, which takes the form of $[i-1ii+1i+2]$, is conformally invariant. This follows from the fact that \eqref{conf_gen} commute with cyclic shifts of the external particle's labels. 

We now perform a series of auxiliary computations that help building the full recursive proof.
Let us start with $n>4$ and write eq. \eqref{omega_in_Kermits_wr1} as
\begin{equation}\label{MHV_recursion}
\begin{aligned}
    \Omega_n &= \sum_{1<a<b<m}[1aa+1;1bb+1]  + \sum_{1<a<m \leq b <n} [1 a a+1;1bb+1] + \sum_{m \leq a<b<n } [1 a a+1;1bb+1] \\
    &= \Omega_m + \widetilde{\Omega}_{n-m+2} + \Omega_n(1m)  \,,
\end{aligned}
\end{equation}
which holds for every $m$ with $ 4 \leq m \leq n$ (note that by our definitions we have that $\Omega_n(1n)=0$, and we may define $\Omega_r = 0$ for $r<4$). The function $\widetilde{\Omega}_{n-m+2}$ is the same as $\Omega_{n-m+2}$, but with cyclically shifted indices $1 \rightarrow m, \, 2 \rightarrow m+1, \, n-m+2 \rightarrow 1$.

Then, let us consider a simple compatible LS \eqref{LS_simple_formulae} associated to a single arc. For $n=4$ we have $\Omega_4(13)=\Omega_4(24)=\Omega_4$, which we have shown to be conformally invariant. Let $n>4$ and consider w.l.o.g. $\Omega_n(1i)$ for $2<i<n$. We write \eqref{LS_simple_formulae} for $\overline{T} = \{(1j)\}_{j=3,\dots,n-2}$ as
\begin{equation}\label{LS_E_1_recursion}
\begin{aligned}
    \Omega_n(1i) &= \sum_{\substack{1<a<i \\ i \leq b < n}} [1 a a+1;1 b b+1] = \sum_{\substack{1<a<i \\ i \leq b < m}} [1 a a+1;1 b b+1] + \sum_{\substack{1<a<i \\ m \leq b < n-1 }} [1 a a+1;1 b b+1] \\
    &= \Omega_{m}(1i) + \Omega_n(1i,1m) \,,
\end{aligned}
\end{equation}
for any $m$ with $i \leq m <n$.

Lastly, we consider a simple compatible LS \eqref{LS_simple_formulae} associated to two arcs for $n \geq 5$. For $n=5$ we have up to cyclic shift only one LS of this type, given by $\Omega_5(13,14) = [123;145]$. This has the form of \eqref{special_conf_inv} and hence it is conformally invariant. For $n > 5$ consider two arcs $(1i), (jk)$ with $i \leq j < k \leq n$. Note that if $k<n$, then
\begin{equation}\label{trivial_recursion}
    \Omega_n(1i,jk) = \Omega_{n-1}(1i,jk) \,,
\end{equation}
so in this case conformal invariance follows directly by induction. So let us assume that $k=n$. Similarly, if $i+1<j$ we can use the cyclic symmetry and shift all the indices by $i+1$ to obtain a recursion as (\ref{trivial_recursion}) for the cyclically shifted configuration. Then, since (\ref{conf_gen}) commutes with cyclic shifts, (\ref{trivial_recursion}) still holds. Therefore, we additionally assume that $j \in \{i,i+1\}$ and rewrite (\ref{LS_simple_formulae}) 
with the inclusion-exclusion principle as 
\begin{equation}\label{LS_E_2_recursion}
\begin{aligned}
    \Omega_n(1i,jn) &= \Omega_n(1i) + \Omega_n(jn) -\Omega_n + \Omega_i + \widetilde{\Omega}_{n-j+1} + [1ijn] \ , \quad \text{for} \ j \in \{i,i+1 \} \,,
\end{aligned}
\end{equation}
where $\widetilde{\Omega}_{n-j+1}$ is equal to $\Omega_{n-j+1}$ with all indices cyclically shifted as $1 \rightarrow j, \, 2 \rightarrow j+1, \dots , \, n-j+1 \rightarrow n$. Note that $[1iin]=0$.

We are now ready to prove the claim by induction over $n$. The induction basis $n=4$ for \eqref{MHV_recursion} and $n=5$ for \eqref{LS_E_1_recursion} have been already discussed. Consider \eqref{LS_E_2_recursion}: for $j=i$ the induction basis is $n=5$ and it has been already discussed, while for $j=i+1$ the basis is $n=6$ and $\Omega_6(13,46)=[123;456]$ has the form of (\ref{special_conf_inv}), and hence it is conformally invariant. By combining \eqref{MHV_recursion}, \eqref{LS_E_1_recursion}, \eqref{trivial_recursion} and \eqref{LS_E_2_recursion} we can use the induction hypothesis to reduce the problem to showing conformal invariance of $[1ii+1n]$. For that, from \eqref{4_invariant_infty} we compute
\begin{equation}
    [1ii+1n] \Big|_{AB \rightarrow I_\infty} =  \frac{\langle 1ii+1n \rangle ^{2}}{\langle  1i \rangle \langle  ii+1 \rangle \langle  i+1n \rangle \langle  n1 \rangle } = \frac{ \ang{ii+1} \, \ang{1|x_{1,i+1}x_{i+1,n}|n}^2}{\langle  1i \rangle  \langle  i+1n \rangle \ang{n1}  } \,,
\end{equation}
where in the second equality we used the identity $\ang{ij-1jk}= \ang{j-1j} \, \ang{i|x_{ij}x_{jk}|k}$, which is a special case of (\ref{four_bracket_general}).
Therefore, we can write
\begin{equation}\label{missing_conf}
\begin{aligned}
    {\rm PT}_n \cdot \, &[1ii+1n] \Big|_{AB \rightarrow I_\infty}   
    = \frac{C^{(1)}_{\alpha \beta} \ C^{(2), \alpha \beta}}{\ang{n1}^2} \,,
\end{aligned}
\end{equation}
where
\begin{align}
C^{(1)}_{\alpha \beta} =  \frac{\bigl(\langle 1 | x_{1,i+1} \bigl)_{\alpha } \, \bigl(\langle 1 | x_{1,i+1}\bigl)_{\beta }}{\ang{12}\ang{23} \dots \ang{i-1 i} \ang{i1}} \,, \quad 
C^{(2), \alpha \beta} = \frac{\bigl(x_{i+1,n} |n\rangle \bigl)^{\alpha} \, \bigl(x_{i+1,n} |n\rangle \bigl)^{\beta}}{\ang{i+1i+2} \dots \ang{n-1n}\ang{ni+1}} \,.
\end{align}
Note that particles $\{1,2\}$ appear only in a holomorphic way. Also, the anti-holomorphic dependence of $C^{(1)}_{\alpha \beta}$ is only on the particles $\{2, \dots,i\}$, while that of $C^{(2),\alpha \beta}$ only on $\{i+1, \dots,n-1\}$. Therefore, the action of (\ref{conf_gen}) factorizes, and by the same computations presented below eq. (3.9) in \cite{Henn:2019mvc}, $C^{(1)}_{\alpha \beta}$ and $C^{(2),\alpha \beta}$ are individually conformally invariant. This shows that (\ref{missing_conf}) itself is conformal, which concludes the proof. 
   
\end{proof}

\subsection{Examples of different types of leading singularities}
\label{subsec:examplesofleadingsingularitiesvalues}

\begin{figure}[t]
\centering
\begin{minipage}[b][4cm][c]{0.3\textwidth}
\centering
    \subfigure[]{
\begin{tikzpicture}[scale = 1]
    \draw[thick] (90:1) 
        \foreach \x in {162,234,306,18} 
            { -- (\x:1) } -- cycle;
    \coordinate (D) at (18:1);
    \coordinate (C) at (90:1);
    \coordinate (B) at (162:1);
    \coordinate (A) at (234:1);
    \coordinate (E) at (306:1);
    \draw[red, thick] (A) -- (C);
    \node[below] at (A) {1};
    \node[left] at (B) {2};
    \node[above] at (C) {3};
    \node[right] at (D) {4};
    \node[below] at (E) {5};
 \end{tikzpicture}
}
\end{minipage}
\begin{minipage}[b][4cm][c]{0.3\textwidth}
\centering
        \subfigure[]{
\begin{tikzpicture}[scale = 1]
    \draw[thick] (90:1) 
    \foreach \x in {162,234,306,18} 
        { -- (\x:1) } -- cycle;
    \coordinate (D) at (18:1);
    \coordinate (C) at (90:1);
    \coordinate (B) at (162:1);
    \coordinate (A) at (234:1);
    \coordinate (E) at (306:1);
    \fill[black, opacity=0.3] (A) -- (B) -- (C) -- (D) -- cycle;
    \draw[red, dashed] (A) -- (D);
 \end{tikzpicture}
}
\end{minipage}
\begin{minipage}[b][4cm][c]{0.3\textwidth}
\centering
        \subfigure[]{
\begin{tikzpicture}[scale = 1]
    \draw[thick] (90:1) 
    \foreach \x in {162,234,306,18} 
        { -- (\x:1) } -- cycle;
    \coordinate (D) at (18:1);
    \coordinate (C) at (90:1);
    \coordinate (B) at (162:1);
    \coordinate (A) at (234:1);
    \coordinate (E) at (306:1);
    \fill[black, opacity=0.3] (A) -- (B) -- (C)  -- cycle;
    \fill[black, opacity=0.3] (A) -- (D) -- (E) -- cycle;
    \draw[red, thick] (A) -- (D);
    \draw[red, thick] (A) -- (C);
 \end{tikzpicture}
}
\end{minipage}
  \caption{(a): A simple, compatible LS configuration at $n=5$ and $L=1$. It can be decomposed in terms of the two Kermit geometries depicted in (b),(c).  The corresponding LS value, denoted by $\Omega_{5}(13)$, is given in eq. \eqref{omega_n5_13}. 
}
\label{fig:5pt simple 13}
\end{figure}
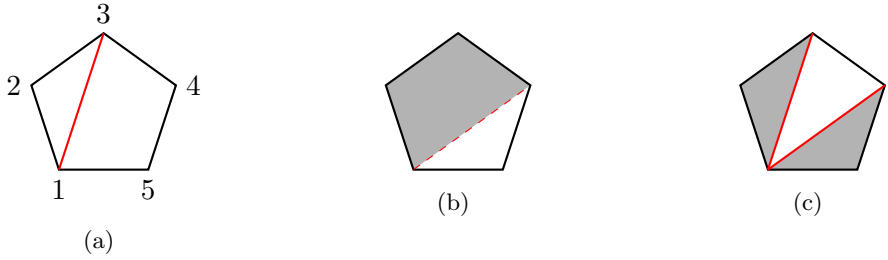

Here we provide explicit examples for each family of LS configurations defined in subsection~\ref{subsec: taxonomy of LS}, as well as the associated LS value. We also give examples of incompatible LS values, and provide evidence that in general they cannot be expressed in terms of Kermit forms, and that they are not conformally invariant in the frame $AB \mapsto I_\infty$. Finally, we illustrate that incompatible LS configurations can be very complicated, and involve loop lines localized to non-rational locations with respect to the external data $Z_i$ and $AB$.

\subsubsection{Examples of simple leading singularities}
\label{subsubsec: a simple LS}

For a concrete example of a simple LS, let $n=5$ and $L=1$. There is only one one-loop negative geometry, namely the ladder.
It turns out that there is only one non-zero LS configuration in this case, which corresponds to $AB_1= 13$. 
The geometry is then a one-loop Amplituhedron $\mathcal{A}^{(1)}_5$ in $AB$, with an extra condition $\la AB \, 13 \ra < 0$, cf. Fig.~\ref{fig:5pt simple 13}(a). Thanks to eq. \eqref{omega_in_Kermits_L1}, this can be expressed in terms of the
two Kermits shown in cf. Fig.~\ref{fig:5pt simple 13}(b),(c).
Hence the corresponding form is given by
\begin{equation}\label{omega_n5_13}
    \Omega_5(13) = [1234] + [123;145] \,.
\end{equation}
This is one of the LS appearing in $F^{(1)}_5$.

For an example of a simple, incompatible LS we have to move to $L=2$. Consider localizing $AB_1=13$ and $AB_2=24$.
The associated geometry is the one-loop Amplituhedron with extra conditions ${\la AB 13 \ra < 0}$ and ${\la AB 24 \ra < 0}$.
Its canonical form is given by 
\begin{equation}\label{LSvaluesimpleincompatibleexample}
{\tikz[baseline=.1ex]{
  \begin{scope}[xshift = 5 cm]
    \draw[thick] (90:1) 
        \foreach \x in {162,234,306,18} 
            { -- (\x:1) } -- cycle;
    \coordinate (D) at (18:1);
    \coordinate (C) at (90:1);
    \coordinate (B) at (162:1);
    \coordinate (A) at (234:1);
    \coordinate (E) at (306:1);
    \draw[red, thick] (A) -- (C);
    \draw[red, thick] (B) -- (D);
    \node[below] at (A) {1};
    \node[left] at (B) {2};
    \node[above] at (C) {3};
    \node[right] at (D) {4};
    \node[below] at (E) {5};
    \end{scope}
}}=
\frac{\ang{AB \,(123)\cap (451)}\ang{AB \,(234)\cap (451)}}{\ang{AB14}\ang{AB23}\ang{AB13}\ang{AB15}\ang{AB45}\ang{AB24}} + [1234] \,.
\end{equation}
Let us point out the following undesirable features of this LS value:
\begin{itemize}
    \item it cannot be expressed in terms of Kermit forms only; 
    \item in the frame $AB \rightarrow I_\infty$, it is not conformally invariant. Indeed, in this frame the first term in eq. (\ref{LSvaluesimpleincompatibleexample}) becomes 
\begin{equation}
    \frac{[25][35]}{\langle 13 \rangle \langle 1 4 \rangle \langle 2 4 \rangle} \,.
\end{equation}
One can show that this is gives a non-zero result when acting with the conformal generators of eq. (\ref{conf_gen}).
\end{itemize}
These features are characteristic of incompatible LS. However, thanks to the exclusion result given in Proposition \ref{proposition exclusion result}, these undesirable LS vanish for the full Wilson loop.

\subsubsection{An example of a {\LABELNONSIMPLE} leading singularity }

\begin{figure}[t]
    \centering
\begin{tikzpicture}[scale = 0.5]
    
    \begin{scope}

    \node at (0,-5) {$(a)$};

    \node at (5,-5) {\includegraphics[scale=0.5]{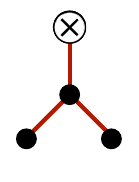}};

    \coordinate (k) at (1,-2);
    \coordinate (i) at (2.4,1.3);
    \coordinate (i_plus) at (0,2.5);
    \coordinate (l) at (-2,-0.77);
    \coordinate (C) at (-1,2);
    \coordinate (D) at (1.5,-3);
    \coordinate (E) at (-2.5,-1);
    \coordinate (F) at (2.8,1.5);
    \coordinate (G) at (-2.5,1);
    \coordinate (H) at (2.8,0.3);
    \coordinate (A) at (3,-2.5);
    \coordinate (B) at (5,-0.5);
    \coordinate (R) at (-0.09,0.14);
    \coordinate (S) at (0.7,-1.38);
    \coordinate (T) at (-0.35,0.7);
    \coordinate (U) at (-0.8,1.55);
    \coordinate (j) at (2.4,0.35);
    \coordinate (m) at (-2,0.92);



    \draw[red, thick] (C) -- (D);
    \node[above, red] at (C) {$AB_3$};

    \draw[red, thick] (E) -- (F);
    \node[left, red] at (E) {$AB_1$};

    \draw[red, thick] (G) -- (H);
    \node[left, red] at (G) {$AB_2$};

    \draw[teal, thick] (A) -- (B);
    \node[right, teal] at (B) {$AB$};

    \fill (i) circle (2pt);
    \fill (j) circle (2pt);
    \fill (l) circle (2pt);
    \fill (R) circle (2pt);
    \fill (T) circle (2pt);
    \fill (j) circle (2pt);
    \fill (m) circle (2pt);
    \fill (k) circle (2pt);

    \node[below left] at (k) {1};
    \node[above] at (i) {4};
    \node[below] at (l) {2};
    \node[below] at (j) {5};
    \node[below] at (m) {3};

    \end{scope}


    \begin{scope}[xshift = 15 cm]

    \node at (0,-5) {$(b)$};

    \node at (5,-5) {\includegraphics[scale=0.5]{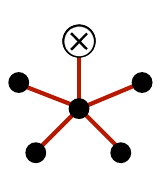}};

    \coordinate (k) at (1,-2);
    \coordinate (i) at (2.4,1.3);
    \coordinate (i_plus) at (0,2.5);
    \coordinate (l) at (-2,-0.77);
    \coordinate (C) at (-1,2);
    \coordinate (D) at (0.5,-4);
    \coordinate (E) at (-2.5,-1);
    \coordinate (F) at (2.8,1.5);
    \coordinate (G) at (-2.5,1);
    \coordinate (H) at (2.8,0.3);
    \coordinate (A) at (4,-2);
    \coordinate (B) at (6,-0);
    \coordinate (R) at (-0.5,-0.05);
    \coordinate (S) at (0.7,-1.38);
    \coordinate (T) at (-0.66,0.75);
    \coordinate (U) at (-0.8,1.55);
    \coordinate (j) at (2.4,0.35);
    \coordinate (m) at (-2,0.92);
    \coordinate (A3) at (-2.5,-4);
    \coordinate (B3) at (2.8,-1.5);
    \coordinate (A4) at (-2.5,-2);
    \coordinate (B4) at (2.8,-2.7);
    \coordinate (i1) at (-2,-3.8);
    \coordinate (i2) at (-2,-2.1);
    \coordinate (i3) at (2.3,-1.7);
    \coordinate (i4) at (2.3,-2.65);
    \coordinate (j1) at (0.1,-2.35);
    \coordinate (j2) at (0.2,-2.8);



    \draw[red, thick] (C) -- (D);
    \node[above, red] at (C) {$AB_5$};

    \draw[red, thick] (E) -- (F);
    \node[above left, red] at (E) {$AB_3$};

    \draw[red, thick] (G) -- (H);
    \node[left , red] at (G) {$AB_4$};

    \draw[teal, thick] (A) -- (B);
    \node[right, teal] at (B) {$AB$};

    \draw[red, thick] (A3) -- (B3);
    \node[left, red] at (A3) {$AB_1$};

    \draw[red, thick] (A4) -- (B4);
    \node[below left, red] at (A4) {$AB_2$};

    \fill (i) circle (2pt);
    \fill (j) circle (2pt);
    \fill (l) circle (2pt);
    \fill (R) circle (2pt);
    \fill (T) circle (2pt);
    \fill (j) circle (2pt);
    \fill (m) circle (2pt);
    \fill (i1) circle (2pt);
    \fill (i2) circle (2pt);
    \fill (i3) circle (2pt);
    \fill (i4) circle (2pt);
    \fill (j1) circle (2pt);
    \fill (j2) circle (2pt);

    \node[above] at (i) {7};
    \node[above] at (l) {5};
    \node[below] at (j) {8};
    \node[above] at (m) {6};
    \node[below] at (i1) {1};
    \node[below] at (i2) {2};
    \node[above] at (i3) {3};
    \node[below] at (i4) {4};

    \end{scope}

\end{tikzpicture}
    \caption{Two {\LABELNONSIMPLE} LS configurations, (a) at $L=3$ and $n \geq 5$, and (b) at $L=5$ and $n\geq 8$. Both are incompatible, as are all {\LABELNONSIMPLE} LS configurations. These are LS of the negative geometry topologies of an $L=3$ and $L=5$ star, respectively, which is also depicted.}
    \label{fig:non-simple LS}
\end{figure}

Consider the three-loop negative geometry whose underlying graph is a star with three nodes.
Moreover, we set $n=5$.
For the LS configuration we want to consider, let us localize $AB_1=24$, $AB_2=35$ and fix $AB_3$ by imposing
\begin{equation}\label{non-simple LS cuts}
  \la AB_3 \, AB_1\ra= \la AB_3 \, AB_2 \ra = \la AB_3 \, 51\ra = \la AB_3 \, 12\ra = 0 \,. 
\end{equation}
The Schubert problem \eqref{non-simple LS cuts} for $AB_3$ has two solutions (see e.g. \cite{Arkani-Hamed:2010pyv}), but only one lies inside the one-loop Amplituhedron, namely
\begin{equation}\label{loop_nonsimple nonintersecting}
    AB_3=(124){\cap}(135) \,.
\end{equation}
The resulting geometry in the unintegrated loop $AB$ is then a one-loop Amplituhedron with extra condition $\la AB \, AB_3 \ra < 0$. This space 
is a positive geometry, whose canonical function is given by
\begin{equation}\label{LS_star}
    \frac{\langle 1235 \rangle \langle 1245 \rangle \langle AB \, (123) \cap (451) \rangle }{\langle AB 12 \rangle \langle AB 23 \rangle \langle AB 45 \rangle \langle AB 51 \rangle  \langle AB (124) \cap (135) \rangle} \,.
\end{equation}
In other words, eq. (\ref{LS_star}) 
is the LS value associated to this configuration.
Note that the LS configuration is incompatible, since $\la AB_1 \, AB_2 \ra = \la 2435 \ra < 0$. In fact, the LS value given in eq. (\ref{LS_star}) has the same undesirable features as the incompatible simple LS of eq. (\ref{LSvaluesimpleincompatibleexample}). 

We remark that {\LABELNONSIMPLE} LS of individual negative geometries may be very complicated. In fact, there are LS configurations that involve loop lines that are localized to even non-rational locations. For instance, consider the negative geometry associated to the $L=5$ star for $n \geq 8$. 
Localizing $AB_1=13$, $AB_2=24$, $AB_3=57$, $AB_4=68$, and fixing $AB_5$ to be one of the two lines intersecting all other four $AB_\ell$, cf. Figure~\ref{fig:non-simple LS}(b).
We checked numerically that one solution for $AB_5$ lies in $\mathcal{A}^{(1)}_n$, and therefore the associated LS value is non-vanishing. Since the lines $AB_\ell$'s for $\ell = 1, \dots,4$ do not intersect each other, the parametrization of the solution for $AB_5$ is non-rational. This follows by the fact that solving this type of generic Schubert problem requires solving a non-trivial quadratic equation. 
The value of this LS is the canonical function of $\mathcal{A}^{(1)}_n$ in $AB$, with extra condition $\langle AB \, AB_5 \rangle < 0$.
At increasing loop order, one can iterate this procedure and e.g. use the solution just found for $AB_5$ to localize other loop lines. Therefore, {\LABELNONSIMPLE} LS configurations may become arbitrarily complicated.

\subsubsection{An example of a {\LABELREVERSE} leading singularity}
\label{subsubsec:example of reverse LS}

\begin{figure}[t]
\centering
\begin{tikzpicture}[scale = 0.5]

    \node at (0,-6) {$(a)$};

    \coordinate (i) at (1,-2);
    \coordinate (i_plus) at (0,0);
    \coordinate (j) at (-2,2);
    \coordinate (j_plus) at (0,2.5);
    \coordinate (C) at (-1.5,3);
    \coordinate (D) at (1.5,-3);
    \coordinate (A) at (-2.5,-0.5);
    \coordinate (B) at (2.5,0.5);
    \coordinate (R) at (-1.1,2.23);


    \draw[dashed, gray] (j) -- (j_plus);

    \draw[red, thick] (C) -- (D);
    \node[above, red] at (C) {$AB_1$};

    \draw[teal, thick] (A) -- (B);
    \node[above, teal] at (B) {$AB$};

    \fill (i) circle (2pt);
    \fill (i_plus) circle (2pt);
    \fill (j) circle (2pt);
    \fill (j_plus) circle (2pt);
    \fill (R) circle (2pt);

    \node[right] at (i) {$i$};
    \node[above left] at (j) {$j$};
    \node[above right] at (j_plus) {$j+1$};


    \begin{scope}[xshift = 9cm]
    \node at (0,-3.5) {$\langle ABij \rangle > 0 $ ,};
    \node at (0,-4.5) {$\langle ABij+1 \rangle < 0 $ ,};
    \node at (4,-6) {$(b)$};
    
    \draw[thick] (0,0) circle(2cm);

    \coordinate (i) at (-100:2cm);
    \coordinate (j) at (130:2cm);
    \coordinate (k) at (100:2cm);
    
    \node at (i) [below ] {$i$};
    \node at (j) [above left] {$j$};
    \node at (k) [above ] {$j+1$};


    \draw[line width=0.4mm, red, dashed] (i) -- (j);
    \draw[line width=0.4mm, red, thick] (i) -- (k);

     \foreach \p in {i,j,k} {
        \fill (\p) circle(2.5pt);
    }
    \end{scope}


    \begin{scope}[xshift = 16cm]
    \node at (0,-3.5) {$\langle ABij \rangle < 0 $ ,};
    \node at (0,-4.5) {$\langle ABij+1 \rangle > 0 $ ,};
    \node at (-3.5,0) {$\bigcup$};
    
    \draw[thick] (0,0) circle(2cm);

    \coordinate (i) at (-100:2cm);
    \coordinate (j) at (130:2cm);
    \coordinate (k) at (100:2cm);
    
    \node at (i) [below ] {$i$};
    \node at (j) [above left] {$j$};
    \node at (k) [above ] {$j+1$};


    \draw[line width=0.4mm, red, thick] (i) -- (j);
    \draw[line width=0.4mm, red, dashed] (i) -- (k);

     \foreach \p in {i,j,k} {
        \fill (\p) circle(2.5pt);
    }
    \end{scope}

\end{tikzpicture}
\caption{A {\LABELREVERSE} compatible LS configuration at $L=1$. Picture $(a)$ is the line configuration in $\mathbb{P}^3$, while $(b)$ represents the geometric space associated to the LS value, as a disjoint union of one-loop Amplituhedron geometries in $AB$ with extra sign constraints. The LS value is then the sum of the respective canonical functions, and it is given as a special case of the first row in Table~\ref{tab:all L2 config}. }
\label{reverse LS_L=1}
\end{figure}
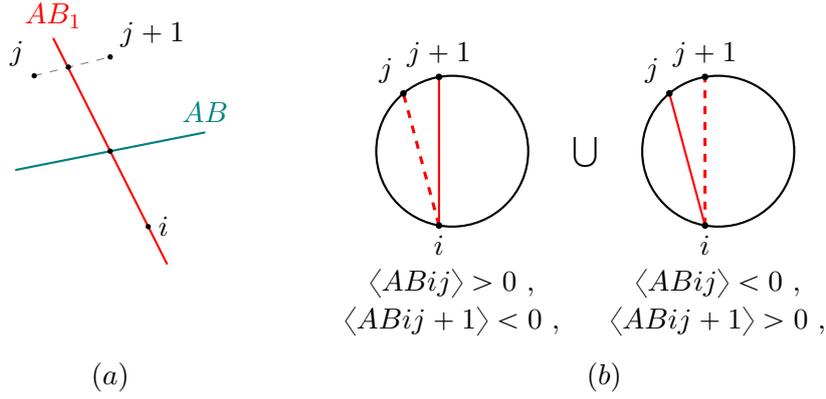

A {\LABELREVERSE} LS configuration involves at least one loop line that intersects $AB$. Let us consider the one-loop ladder geometry. 
In order for $AB_1$ to be fully localized, in addition to satisfying $\langle AB_1 AB \rangle =0$, it must lie on a one-dimensional boundary of $\mathcal{A}^{(1)}_n$. As shown in ref.~\cite{Ranestad:2024svp}, such boundaries are of the form
\begin{equation}
    AB_1 = i \bigl(j + \alpha \, (j+1) \bigl)\,,  \quad \text{for }  \alpha > 0 \,,
\end{equation}
where we can take $i<j$ without loss of generality. In particular, the line $AB_1$ lies in the plane $(ijj+1)$.
Requiring that it intersects $AB$ means that it must pass through the point $AB \cap (ijj+1)$, which fixes
\begin{equation}\label{alpha}
    \alpha =  \frac{\langle AB ij \rangle}{\langle AB j+1i \rangle} > 0 \,.
\end{equation}
We conclude from this that the geometry associated to this LS can be
decomposed into
two one-loop Amplituhedron geometries in $AB$ with extra constraints $\langle AB i j \rangle , \, \langle AB j+1 i \rangle > 0$ for one piece, and $\langle AB i j \rangle , \, \langle AB j+1 i \rangle < 0$ for the other, cf.
Figure \ref{reverse LS_L=1}(b) and (c). The LS value is a special case of that in the first row of Table \ref{tab:all L2 config}, for $k=j+1$.

The first example of a {\LABELREVERSE}, incompatible LS configuration occurs at $L=2$. We can consider the product of two $L=1$ configurations just described, with a specific ordering of the indices as $i<k<j<l<i$, where we take the indices modulo $n$. It turns out that the associated geometry in $AB$ can be triangulated into pieces associated to signed partial triangulations. The forms associated to these is analogous to that of eq. \eqref{LSvaluesimpleincompatibleexample}, and present in general the same undesirable properties: it cannot be expressed in terms of Kermits an it is not conformally invariant in the frame $AB \rightarrow I_\infty$. 

\begin{figure}[t]
\centering
\begin{tikzpicture}[scale = 0.5]

    \coordinate (i) at (1,-2);
    \coordinate (P) at (0,0);
    \coordinate (Q) at (-1.25,-0.25);
    \coordinate (R) at (-1.1,2.2);
    \coordinate (S) at (1.87,2.84);
    \coordinate (j) at (-2,2);
    \coordinate (j_plus) at (0,2.5);
    \coordinate (k) at (-3,-2);
    \coordinate (l) at (1,3);
    \coordinate (l_plus) at (2.7,2.7);
    \coordinate (C) at (-1.5,3);
    \coordinate (D) at (1.5,-3);
    \coordinate (A) at (-2.5,-0.5);
    \coordinate (B) at (2.5,0.5);
    \coordinate (E) at (2.5,3.5);
    \coordinate (F) at (-3.5,-2.5);


    \draw[dashed, gray] (j) -- (j_plus);

    \draw[dashed, gray] (l) -- (l_plus);

    \draw[red, thick] (C) -- (D);
    \node[above, red] at (C) {$AB_1$};

    \draw[red, thick] (E) -- (F);
    \node[above, red] at (E) {$AB_2$};

    \draw[teal, thick] (A) -- (B);
    \node[above, teal] at (B) {$AB$};

    \fill (i) circle (2pt);
    \fill (P) circle (2pt);
    \fill (Q) circle (2pt);
    \fill (R) circle (2pt);
    \fill (S) circle (2pt);
    \fill (j) circle (2pt);
    \fill (j_plus) circle (2pt);
    \fill (k) circle (2pt);
    \fill (l) circle (2pt);
    \fill (l_plus) circle (2pt);

    \node[right] at (i) {$i$};
    \node[above left] at (j) {$j$};
    \node[above] at (j_plus) {$j+1$};
    \node[above left] at (k) {$k$};
    \node[above] at (l) {$l$};
    \node[above right] at (l_plus) {$l+1$};


    \begin{scope}[xshift = 9cm]
    \node at (4.5,0) {$\bigcup \quad \dots$};
    
    \draw[thick] (0,0) circle(2cm);

    \coordinate (i) at (-100:2cm);
    \coordinate (j) at (130:2cm);
    \coordinate (k) at (100:2cm);
    \coordinate (l) at (50:2cm);
    \coordinate (m) at (20:2cm);
    \coordinate (n) at (200:2cm);
    
    \node at (i) [below ] {$i$};
    \node at (j) [above left] {$j$};
    \node at (k) [above ] {$j+1$};
    \node at (l) [above right] {$l$};
    \node at (m) [above right ] {$l+1$};
    \node at (n) [left] {$k$};

    \draw[line width=0.4mm, red, dashed] (i) -- (j);
    \draw[line width=0.4mm, red, thick] (i) -- (k);
    \draw[line width=0.4mm, red, thick] (n) -- (l);
    \draw[line width=0.4mm, red, dashed] (n) -- (m);

     \foreach \p in {i,j,k,l,m,n} {
        \fill (\p) circle(2.5pt);
    }
    \end{scope}

\end{tikzpicture}
\caption{A {\LABELREVERSE} incompatible LS configuration at $L=2$. }
\label{reverse_incomatible LS}
\end{figure}

\section{Outlook}
\label{sec:outlook}

In this paper, we used the negative geometry expansion for the logarithm of the amplitude, and calculated the leading singularities of the Wilson loop with a Lagrangian insertion $F_n^{(L)}$ to all multiplicities at all loop orders. Leading singularities are very important functions, they are defined as special contour integrals of the loop integrand, and appear as rational prefactors multiplied by pure transcendental functions in $F_n^{(L)}$. In the amplitude case, they are subject to further special properties -- they are Yangian invariant and are associated with $2n{-}4$ dimensional cells in the positive Grassmannian $G_{>0}(k,n)$. They also manifest fascinating cluster algebra properties which correlate with the analogous structures in the symbol \cite{Lukowski:2019sxw,Gurdogan:2020tip}. The leading singularities of the Wilson loop $F_n^{(L)}$ are also special: they are written using very simple building blocks -- kermits, which enjoy a hidden conformal symmetry property. This was conjectured in \cite{Chicherin:2022bov} and our paper proves the statement to all $n$ and $L$. This is in contrast with additional leading singularities for individual negative geometries, which are more complicated and in general are not conformal. 

There are further constraints that leading singularities can impose on the structure of the full function $F_n^{(L)}$. Because of the presence of spurious poles, leading singularities are individually not positive or continuously monotonic when evaluated inside the full Amplituhedron region. As already seen in ref. \cite{Henn:2024qwe} in other cases any complete monotonicity properties must follow from an intriguing conspiracy between leading singularities $\Omega_{n,s}$ and functions $f_{n,s}^{(L)}$ in eq. (\ref{F_decomposition}). To understand more concretely the link between leading singularities $\Omega_{n,s}$ and transcendental functions $f_{n,s}^{(L)}$ from this perspective is a fascinating question for further exploration.

Another interesting question is to go beyond the MHV degree, to NMHV and higher. Here the map between amplitudes logarithm, Wilson loop with a Lagrangian insertion and the negative geometries is not completely clear. The starting point could be the definition of the negative geometry where each loop lives in the NMHV one-loop Amplituhedron and the mutual conditions are now $\la YABCD\ra<0$. The canonical form for this geometry corresponds to a supersymmetric function which produces IR finite results when integrated, and could be compared to the supersymmetric Wilson loop calculations. 

We can also search for the positive geometry for the integrand of pure functions that multiply the leading singularities in $F_n^{(L)}$. This would be an important input in the symbol bootstrap for these functions and can dramatically simplify the size of the ansatz and provide additional constraints. It would be also an important step towards the positive geometry picture for functions which are not necessarily amplitudes but other objects of interest. Some preliminary steps along these lines will be made in ref. \cite{progress}.

\section*{Acknowledgments}
EM thanks Dmitry Chicherin, Matteo Parisi and Lauren Williams for useful discussions. 
This work received funding from the European Union (ERC, UNIVERSE PLUS, 101118787). Views and opinions expressed are however those of the authors only and do not necessarily reflect those of the European Union or the European Research Council Executive Agency. Neither the European Union nor the granting authority can be held responsible for them. J.~T. is supported by the U.S. Department of Energy, grant No. SC0009999 and the funds of the University of California. 
This research was supported in part by grant NSF PHY-2309135 to the Kavli Institute for Theoretical Physics (KITP).

\appendix

\addtocontents{toc}{\protect\setcounter{tocdepth}{1}}

\section{All two-loop leading singularity configurations}
\label{app: all two-loop LS config}

\begin{table}[t]
    \centering
    \begin{tabular}{c|c|c}
    Line configuration & Signed partial triangulations & Value \\
    \hline\hline
        \includegraphics[scale = 0.65]{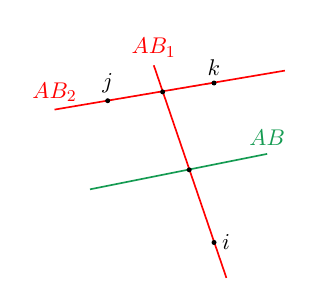} & \includegraphics[scale = 0.65]{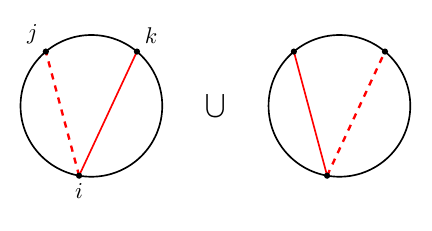} & \includegraphics[scale=0.65]{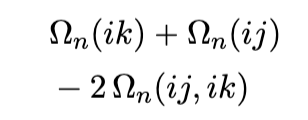}
        \\
        \hline
        \includegraphics[scale = 0.65]{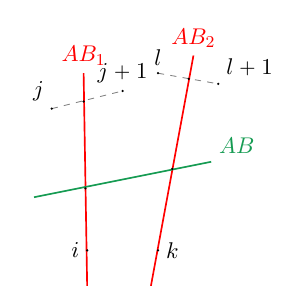} & \includegraphics[scale = 0.65]{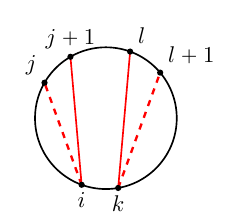} & \includegraphics[scale=0.65]{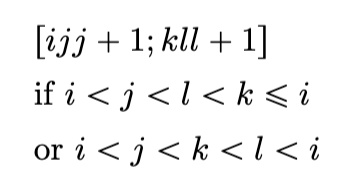} \\
        \hline
         \includegraphics[scale = 0.65]{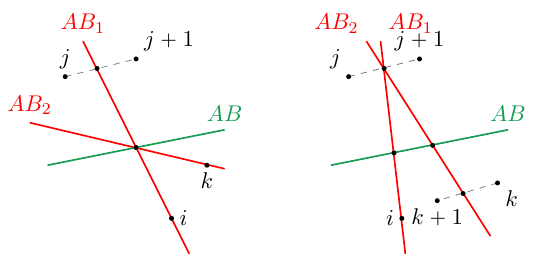} & \includegraphics[scale = 0.65]{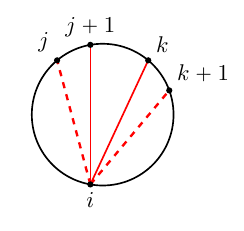} & \includegraphics[scale=0.65]{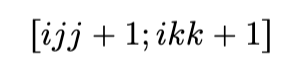}  \\
    \end{tabular}
    \caption{Leading singularity configurations of $F^{(2)}_n$. The line configurations are associated to signed partial triangulations, which in turn can be evaluated to simple compatible LS \eqref{LS_simple_formulae} and Kermit forms \eqref{six_invariant}. }
    \label{tab:all L2 config}
\end{table}

In this section we list all LS configurations of two-loop negative geometries, which appear in eq. \eqref{eq:2-loop}. 
For this result, we performed a case by case analysis starting from the knowledge of the boundaries of the one-loop Amplituhedron \cite{Ranestad:2024svp}, and imposed all possible multi-loop cut conditions, similarly to what we did in Subsubsection \ref{subsubsec:example of reverse LS} for the case in Fig. \ref{reverse LS_L=1}.

The result is the following. In addition to the configurations in Table~\ref{tab:all L2 config} we have all two-loop simple configurations, where $AB_1 = ij$ and $AB=kl$, and the configuration as in Figure \ref{reverse LS_L=1}, with an additional loop line $AB_2 = kl$.
Among all these, by Proposition \ref{proposition exclusion result} only compatible ones are relevant for the Wilson loop $F^{(2)}_n$. This in particular excludes the simple configurations with $(ij)$ and $(kl)$ crossing, and similarly for the other one, if $(kl)$ crosses $(ij)$ or $(ij)$. The only incompatible configurations in Table~\ref{tab:all L2 config} are those in the second row for indices such that the triangles $\{i,j,j+1\}$ and $\{k,l,l+1\}$ overlap in the $n$-gon, i.e. if $i<j<l<k\leq i$ or $i<j<k<l<i$. We evaluate the compatible LS in Table~\ref{tab:all L2 config}. The LS values are expressed either in terms of simple compatible LS \eqref{LS_simple_formulae}, or in terms of Kermit forms \eqref{six_invariant}.

We expect this classification result to be relevant for bootstrap approaches, either for individual two-loop negative geometries or for the full Wilson loop $F^{(2)}_n$.  Specifically, the first step in this method would be to determine of the integral's alphabet. 
The idea is that the Landau analysis is guided by the knowledge of the LS configurations, thanks to the geometric information provided by the Amplituhedron. We will report on progress along these lines in ref. \cite{progress}.

\section{Proof of Proposition \ref{lemma 1} (Classification Result)}
\label{app: proof lemma 1}

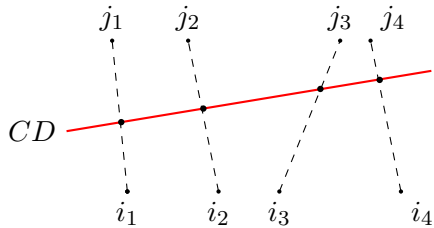
\begin{figure}[t]
    \centering
\begin{tikzpicture}[scale = 0.4]
        \begin{scope}    
    \coordinate (i1) at (-4,-3);
    \coordinate (i2) at (-1,-3);
    \coordinate (i3) at (1,-3);
    \coordinate (i4) at (5,-3);
    \coordinate (j1) at (-4.5,2);
    \coordinate (j2) at (-2,2);
    \coordinate (j3) at (3,2);
    \coordinate (j4) at (4,2);
    \coordinate (C) at (-6,-1);
    \coordinate (D) at (6,1);
    \coordinate (P) at (-4.2,-0.7);
    \coordinate (Q) at (-1.5,-0.25);
    \coordinate (R) at (2.35,0.4);
    \coordinate (S) at (4.3,0.7);

    \draw[red, thick] (C) -- (D);
    \node[left] at (C) {$CD$};

    \draw[dashed] (i1) -- (j1);
    \draw[dashed] (i2) -- (j2);
    \draw[dashed] (i3) -- (j3);
    \draw[dashed] (i4) -- (j4);
   
    \fill (P) circle (3pt);
    \fill (Q) circle (3pt);
    \fill (R) circle (3pt);
    \fill (S) circle (3pt);
    \fill (i1) circle (2pt);
    \fill (i1) circle (2pt);
    \fill (i1) circle (2pt);
    \fill (i2) circle (2pt);
    \fill (i3) circle (2pt);
    \fill (i4) circle (2pt);
    \fill (j1) circle (2pt);
    \fill (j2) circle (2pt);
    \fill (j3) circle (2pt);
    \fill (j4) circle (2pt);

    \node[below] at (i1) {$i_1$};
    \node[below] at (i2) {$i_2$};
    \node[below] at (i3) {$i_3$};
    \node[below right] at (i4) {$i_4$};
    \node[above] at (j1) {$j_1$};
    \node[above] at (j2) {$j_2$};
    \node[above] at (j3) {$j_3$};
    \node[above right] at (j4) {$j_4$};

    \end{scope}

\end{tikzpicture}
    \caption{The line $CD$ intersects other four lines, each of which is associated to an arc $(i_rj_r)$ of the $n$-gon.}
    \label{fig:CD int four lines}
\end{figure}

\subsection{Auxiliary analysis at $L=1$\label{subsec:aux}}

In a preliminary step, we consider an intersection of lines as shown in Fig.~\ref{fig:CD int four lines}.
In other words, we are interested in solutions to the following equations,
\begin{align}
\langle CD \, i_1 j_1 \rangle = 
\langle CD \, i_2 j_2 \rangle = 
\langle CD \, i_3 j_3 \rangle = 
\langle CD \, i_4 j_4 \rangle = 0 \,,
\end{align}
with $CD$ representing a loop line that is inside the one-loop Amplituhedron $\mathcal{A}^{(1)}_n$ and $i_r,j_r \in \{1,\dots,n\}$.
Without loss of generality, we take $i_1<i_2<i_3<i_4$, and $i_r < j_r$. We allow for the case $j_r = i_r +1$ for some $r$.

Consider the first two conditions that state that $CD$ intersects $(i_1j_1)$ and $(i_2j_2)$. We parametrize this solution as $C = i_1 + \alpha_1 \, j_1 $ and $D = i_2 + \alpha_2 \, j_2$, with real $\alpha_1,\alpha_2$. We are looking for a full-dimension (i.e., two-dimensional) solution. In this case, loop positivity \eqref{loop_positivty} forces an ordering on the indices, such that 
$i_1<j_1 \leq i_2<j_2$, which corresponds to $(i_1j_1),(i_2j_2)$ not crossing, and moreover it implies $\alpha_1, \alpha_2 > 0$. 

Intersecting $CD$ further with $(i_3j_3)$, i.e. setting $\langle CD \, i_3j_3 \rangle=0$, yields
    \begin{equation}\label{triple_inters}
        \langle i_1i_2 i_3j_3 \rangle + \alpha_1 \, \langle j_1i_2 i_3j_3 \rangle + \alpha_2 \, \langle i_1j_2 i_3j_3 \rangle + \alpha_1 \, \alpha_2 \, \langle j_1j_2 i_3j_3 \rangle  = 0   \,.
\end{equation}
For generic (distinct) indices $i_r,j_r$, 
we find that a solution to \eqref{triple_inters} exists only if $(i_3j_3)$ crosses $(i_1j_1)$ or $(i_2j_2)$, but not both. 
In addition, there are solutions to \eqref{triple_inters} in the case where some indices coincide.
Finally, we consider in addition the condition $\langle CD \, i_4 j_4 \rangle = 0$. 
We find that the allowed solutions fall into three classes:
\begin{enumerate}
    \item For general indices, i.e. for four disjoint arcs, it turns out that the Amplituhedron conditions on $CD$ imply that exactly two pairs of arcs cross, for a solution to exist, as in Fig. \ref{fig:non-simple LS}(b) with $AB_5=CD$.
     \item When two indices coincide, i.e. two arcs meet at a vertex $i$, then $CD$ passes through $i$ and and the other two arcs must cross in order for a solution to exist, as in Fig. \ref{fig:non-simple LS}(a) with $AB_3=CD$.
     \item If two pairs of indices coincide, i.e. if two pairs of arcs meet at vertices $i$ and $j$ respectively, then $CD = ij$, which is the usual Amplituhedron case.
\end{enumerate}

\subsection{Proof at all $L$}

We are now ready to prove Proposition \ref{lemma 1} (Classification Result).
By definition, {\LABELNONSIMPLE} LS do not involve any intersection condition with $AB$, i.e. ${\langle AB \, AB_\ell \rangle \neq 0}$ for every $\ell=1, \dots L$. Therefore, such configurations correspond to zero-dimensional boundaries of a (negative) loop-Amplituhedron geometry, in which we can ignore $AB$. 

Among these zero-dimensional boundaries there are configurations corresponding to simple LS, where all loop lines localize to arcs of the $n$-gon. 
Consider now a {\LABELNONSIMPLE} LS, i.e. for which there exists some loop line, which we denote by $CD$, that is not localized to an arc of the $n$-gon. We show presently that such a configuration must be incompatible. 
Since all zero-dimensional boundaries of the one-loop Amplituhedron $\mathcal{A}^{(1)}_n$ are of the form of arcs on an $n$-gon, it follows that $CD$ must intersect at least two other loop lines. Assume in fact that $CD$ intersects only one other loop line. By induction on the number of loops we can assume the that this is equal to an arc $kl$. Then, the only solution for $CD$ is $CD = i \, (kl) \cap (ijj+1) $, which violates the one-loop Amplituhedron conditions unless the indices degenerate such that $CD $ is equal to an arc of the $n$-gon. This is the case for e.g. $k=j$, in which case $CD=ij$.

We can therefore assume that $CD$ intersects other two loop lines. We can assume by induction on the number of loops that these are localized to arcs $(i_1j_1),(i_2j_2)$ of the $n$-gon. 
The auxiliary analysis of subsection \ref{subsec:aux} then shows that $CD$ localizes to an arc of the $n$-gon, unless $(i_1j_1)$ crosses $(i_2j_2)$, i.e. when $\langle i_1 j_1 i_2 j_2 \rangle < 0$. 
This proves that every {\LABELNONSIMPLE} LS is incompatible.

\section{Proof of Proposition \ref{proposition reverse compatible are triang} (Signed Partial Triangulations)}
\label{app: proof prop 2}

Every LS configuration corresponds to a boundary of a negative geometry. Such boundary corresponds to a collection of $L$ lines $AB_\ell$ in $\mathbb{P}^3$, completely localized by imposing intersection conditions among themselves, with $AB$, or with the fixed lines $(ii+1) := (Z_i Z_{i+1})$. These intersection conditions are expressed in equations as
\begin{equation}\label{cut_conditions}
    \la AB_\ell \, AB_{\ell'} \ra = 0 \,, \quad \la AB_\ell \, AB \ra = 0  \,, \quad \la AB_\ell \, ii+1 \ra = 0 \,,
\end{equation}
for $\ell \in \{1, \dots ,L\}$ and $i \in \{1, \dots,n\}$.
Every line $AB_\ell$ in $\mathbb{P}^3$ has four degrees of freedom.

\subsection{Classification of intersection points}
\label{subapp: ordering intersection points}

Note that any line is uniquely determined by two distinct points on it. Points obtained as intersections of $AB_\ell$, $(ii{+}1)$ and $AB$ must be determined by projective invariant equations depending on only $(ii{+}1)$ and $AB$. We therefore define the following.
\begin{definition}\label{def_int_pt}
    By \textit{intersection point} in an $L$-loop LS configuration, we mean all intersection points between the lines $AB_\ell$, $(ii+1)$ and $AB$.
\end{definition}
Generally, there exist intersection points which cannot be represented as iterated intersections of lines and planes. We can understand this as follows. A solution to a generic Schubert problem in $\rm{Gr}(2,4)$, as e.g. all lines intersecting four generic (pairwise non-intersecting) fixed lines in $\mathbb{P}^3$, requires solving a quadratic equation. Solutions to such problems are therefore not rational. On the other hand, all points represented by iterated intersections have rational parametrization. An example of a non-rational intersection point in a LS configuration is shown in Fig. \ref{fig:non-simple LS}(b). 

We now introduce a notion for measuring the complexity of rational intersection points. Note for each of these, there can be several equivalent expressions representing it as an intersection of different lines and planes. Nevertheless, we can define the following canonical quantity.
\begin{definition}\label{def_length}
    The \textit{length} of a rational intersection point in LS configuration is the minimal number of intersection symbols needed to represent the point as intersections of lines and planes in $\mathbb{P}^3$, constructed iteratively out of $i:=Z_i$ and $AB$. 
\end{definition}
For instance, the length of $i \in \{1,\dots,n\} $ is zero. The intersection points of length one are given by
\begin{equation}\label{simplest_invariants}
       P[jk,i] := (jk) \cap (iAB) \,, \quad    AB \cap (ijk) = AB \cap (iP[jk,i])  \,,
\end{equation}
as well as 
\begin{equation}\label{nonMHV_points}
    (ijk) \cap (ab)  \,,
\end{equation}
where all indices are in $\{1,\dots,n\}$. An example involving points as in eq. \eqref{simplest_invariants} is given in the first configuration in Table \ref{tab:all L2 config} in Appendix \ref{app: all two-loop LS config}. 

At length two, one can considers intersection points built out of length one points. For instance, we have
\begin{equation}\label{length_2_invariants}
    AB \cap (P[jk;i]cd)  \,, \quad (P[jk,i]c) \cap (dAB) \,,
\end{equation}
as well as
\begin{equation}\label{nonMHV_length2_inv}
     (P[jk,i]ab) \cap (cd)  \,.
\end{equation}

One can further iterate this process and generate an infinite number of intersection points of increasing length, starting from lines and planes involving points of lower length. 
In the following, we are only interested in the recursive structure of the allowed points in LS configurations of $F^{(L)}_n$, and not in their complete classification. Note that $A$ and $B$ cannot appear separately, i.e. $AB$ appears either as a line or as part of a plane. 
\begin{tcolorbox}[colback=white]
   The intersection points relevant for compatible LS configurations are the following:
   \begin{equation}\label{iteration}
   (PQ) \cap ( ABR) = P \, \langle AB QR \rangle - Q \, \langle AB PR \rangle \,,
   \end{equation}
   and 
   \begin{equation}\label{iteration_on_AB}
   \begin{aligned}
       AB \cap (PQR) &= P \, \la AB QR \ra + Q \, \la AB RP \ra + R \, \la AB PQ \ra  \\
       &= (PQ) \cap ( ABR) + \la AB PQ \ra \, R \,.
   \end{aligned}
   \end{equation}
The intersection points $P,Q,R$ are constructed iteratively via eq. \eqref{iteration} and \eqref{iteration_on_AB} starting from points among $ \{1,\dots,n\}$. 
\end{tcolorbox}
The second equality in eq. 
\eqref{iteration_on_AB} follows from the identity
\begin{equation}\label{identities_lines_planes}
\begin{aligned}
     (PQR) \cap (STU) &= \bigl((PQ)\cap (STU),(QR)\cap (STU) \bigr) \,.
\end{aligned}
\end{equation} 
Other than the intersection points in eqs. \eqref{iteration} and \eqref{iteration_on_AB}, there are those generalizing the points in eq. \eqref{nonMHV_points}, 
\begin{equation}\label{forbidden_points}
    (PQ) \cap (RST) \,.
\end{equation}
The intersection points $P,Q,R,S,T$ are constructed inductively out of eqs. \eqref{iteration}, \eqref{iteration_on_AB} and \eqref{forbidden_points}. In Subsection \ref{subapp: disappearance of non-MHV int pts} we show that the points in eq. \eqref{forbidden_points} do not appear in LS configurations for $F^{(L)}_n$.

\subsection{The allowed intersection points}
\label{subapp: disappearance of non-MHV int pts}

We control the {\LABELREVERSE} compatible LS configurations since they do not contain intersection points which are not rational or those as in eq. \eqref{forbidden_points}. 
\begin{tcolorbox}[colback=white]
    All intersection points in a {\LABELREVERSE} compatible LS configuration are rational and of the form as in eqs. \eqref{iteration} and \eqref{iteration_on_AB}.   
\end{tcolorbox}

\begin{proof} 
Let $\mathcal{L}$ be a {\LABELREVERSE} compatible LS.
We first argue that no loop line in $\mathcal{L}$ can intersect three other loop lines different from $AB$ in three distinct points. This implies in particular that non-rational points are not allowed. In fact, a quadratic equation appears in a Schubert problem only if the intersection conditions involve four generic loop lines, and a solution line to such a problem would therefore contain four intersection points. 

For that, let $PQ$, $RS$, $TU$ be lines built out of intersection points. Let $CD$ be a loop line intersecting all three of them. Let $l \geq 0$ denote the sum of the lengths of $P,Q,R,S,T,U$. If $l = 0$, then all points belong to $\{1, \dots,n\}$ and by the proof in Appendix~\ref{app: proof lemma 1}, the only compatible configuration allowed by the Amplituhedron conditions is when e.g. $C=P=R$ and $D \in TU$. In particular, the intersection points on $CD$ are only two. Let $l \geq 1$ and assume that the length of $P$ is greater or equal than one, so that we can write $P=P_1 + \alpha \, P_2$ for $P_i$ of length less than that of $P$. This follows from the twistor identities as in eqs. \eqref{iteration} and \eqref{iteration_on_AB}. Then, replacing $P$ by $P_1$ and by $P_2$, and using induction on the length, we conclude e.g. that either $P_1 = R$ and $P_2 = S$, or $P_2=T$. In the former case we are done, while in the latter one checks that the intersection points are also two and not more.

Consider now an intersection point as in eq. \eqref{forbidden_points}. We argue that such a point is not allowed in a {\LABELREVERSE} compatible LS configuration, by induction over the sum $l \geq 0$ of the lengths of $P,Q,R,S,T$. For $l=0$, we have $P,Q,R,S,T \in \{1,\dots,n\}$. By the proof in Appendix \ref{app: proof lemma 1}, such points are forbidden for distinct indices in compatible LS configurations. Assume that $l \geq 1$ and that the length of $P$ is greater than zero. Write $P=P_1 + \alpha \, P_2$ with $P_i$ intersection points of length less than that of $P$. Then,
\begin{equation}
    (PQ) \cap (RST) =  (P_1Q) \cap (RST) + \alpha \, (P_2Q) \cap (RST) \,.
\end{equation} 
By induction hypothesis, we must have that $(P_iQ) \cap (RST) \in \{R,S,T\}$. Assume e.g. that $P_1=R$ and $P_2=S$, then $(PQ) \cap (RST)$ is equal to the intersection point of the lines $PQ$ with $RS$, which we now argue to be either empty or not allowed for distinct $P,Q,R,S$.

We show that for distinct $P,Q,R,S$, an allowed intersection point between $PQ$ and $RS$ happens if $S=PQ \cap (RAB)$, in which case the intersection point is given by $AB \cap (PQR)$. Otherwise, the intersection of $PQ$ with $RS$ is non-empty only on a codimension one locus in $AB$ intersecting the interior of $\mathcal{A}^{(1)}_n$. In the latter case, the bracket $\la PQ \,  RS \ra$ changes sign for varying $AB$ inside $\mathcal{A}^{(1)}_n$, and in particular the configuration is incompatible. Assume that the length of $S$ is greater than zero, and write $S=S_1+ \alpha \, S_2$ with $S_i$ invariant points. Then, we can write $S=(S_1 S_2) \cap (PQR)$ and $T= (PQ) \cap (RS_1S_2)$. 
Since these intersection points have total length strictly smaller than those we started with, we can use induction to argue that either $\{S_1,S_2\} \cap \{P,Q\} \neq \emptyset$, which contradicts the fact that $P,Q,R,S$ are distinct, or that $S_1S_2=AB$. It follows that the the intersection of $PQ$ with $RS$ is non-empty only in the following cases. First, if $Q=U + \alpha \, V$ and e.g. $S=U$, so that $\langle PQ \, RS \rangle = 0$ for $\alpha = 0$. Second, if $Q=U + \alpha \, V$ and $S= U + \beta \, V$, so that $\langle PQ \, RS \rangle = 0$ on the codimension one subvariety in $AB$ given by $\alpha = \beta$, which intersects the interior of $\mathcal{A}^{(1)}_n$. In both cases the intersection of $PQ$ with $RS$ is empty for generic values of $AB$. This concludes the proof.
\end{proof}

\subsection{Composite residues are irrelevant}
\label{subapp: Composite residues are irrelevant}

In order to fully localize all lines in an $L$-loop LS configuration, we generally need $4L$ equations among eq. \eqref{cut_conditions}. However, leading singularities are residues of a rational function, whose denominator involves factors among those in eq. \eqref{cut_conditions}. Here is a subtle point. Some of the brackets in \eqref{cut_conditions} may factorize into non-trivial irreducible polynomials, when evaluated on the support of some cuts. 

A prototypical example of such factorization happens on the support of $\la AB_\ell \, i{-}1 i \ra = 0$, on which the bracket $\la AB_\ell \, ii{+}1 \ra$ factorizes into two terms. The vanishing locus of each term has a clear geometric meaning: the line $AB_\ell$ can intersect both $(i{-}1i)$ and $(ii{+}1)$, by either passing through the point $i$, or by lying in the plane $(i{-}1ii{+}1)$. This example captures this factorization feature, which is very general for lines in $\mathbb{P}^3$: every line intersecting two lines meeting in a point, either passes through this point, or it lies in the plane spanned by the two lines.
From the point of view of residues, whenever a polynomial in the denominator factorizes, one can take two residues, one for each factor. Algebraically, this means imposing the vanishing of both factors. We call such residues \textit{composite}.

Having to consider all possible composite residues, i.e. all possible factorizations of the brackets in eq. \eqref{cut_conditions}, would result in a very complicated task. However, the following result guarantees that we can ignore composite residues.
\begin{tcolorbox}[colback=white]
Every LS of a negative geometry that involves composite residues, either vanishes or it has the same value as a LS not involving any composite residues. Geometrically, every four-dimensional boundary of an $L$-loop negative geometry for which $AB$ lies in the interior of $\mathcal{A}^{(1)}_n$, is an irreducible component of a boundary obtained by imposing exactly $4L$ equations among those in eq. \eqref{cut_conditions}.
\end{tcolorbox}
The important corollary of this statement, is that we can always assume that every condition in eq. \eqref{cut_conditions} fixes only one degree of freedom, and in particular, we can ignore factorizations of twistor brackets in eq. \eqref{cut_conditions} when considering LS configurations.

\begin{proof}
    As a warm-up case, let us consider the case of $L=1$. The only possible composite residue in this case arises from the factorization of $\la AB_1 \, ii+1 \ra $ on the support of $\la AB_1 \, i{-}1i \ra = 0$. As we discussed previously, imposing the simultaneous vanishing of both factors forces $AB_1$ to pass through $i$ and lie in the plane $(i{-}1ii{+}1)$. In this configuration, the line $AB_1$ has one degree of freedom. Therefore, in order for $AB_1$ to be fully localized, we need to impose another one-loop Amplituhedron condition $\la AB_1 \, jj{+}1 \ra = 0$. The solution for $AB_1$ is then
    $AB_1 = i \, (i{-}1ii{+}1) \cap (jj{+}1)$, which violates the one-loop Amplituhedron conditions unless $j=i{+}1$ or $j{+}1 = i{-}1$, in which case $AB_1 = ii{+}1$ or $AB_1 = i{-}1i$, respectively. Note that the latter cases can be obtained by imposing only conditions of the form $\la AB_1 \, kk+1 \ra = 0$, instead of considering a composite singularity. For instance, the solution $AB_1 = ii+1$ is obtained by from the four cut conditions
    \begin{equation}\label{L1_four_cuts}
        \la AB_1 \, i{-}2i{-}1 \ra  = \la AB_1 \, i{-}1i \ra  = \la AB_1 \, ii{+}1 \ra  = \la AB_1 \, i{+}1i{+}2 \ra  = 0 \,.
    \end{equation}
    We have therefore proven that for $L=1$, every LS involving composite residues localizes the loop line $AB_1$ either outside the one-loop Amplituhedron or to a configuration associated to \eqref{L1_four_cuts}, which is free of composite residues.

    We now present the argument for $L>1$, which is completely analogous. Consider a LS configuration that involves a composite residue, i.e. a factorization of a multiloop cut $\la AB_1 \, AB_2 \ra  $. As we explained previously, such factorization arises geometrically only if there exists a line $RT$, where $R$ and $T$ are intersection points, intersecting $AB_1$ at $R$. We can then write $AB_1 = RS$, for some intersection point $S$. Then, taking the composite residue along $\la AB_1 \, AB_2 \ra  $ corresponds to localizing $AB_2$ to pass through $R$ and lie in the plane $(RST)$. Since after this condition $AB_2$ has still one degree of freedom, $AB_2$ must solve at least another intersection condition $\la AB_2 \, PQ \ra $, where $P$ and $Q$ are intersection points. The solution for $AB_2$ now reads $AB_2 = R \,  (PQ) \cap (RST)$, see Fig. \ref{fig:composite}. This solution involves the intersection point $(PQ) \cap (RST)$, which by Subsection \ref{subapp: disappearance of non-MHV int pts} it is forbidden, i.e. the associated LS value vanishes, unless $P=T$, in which case $AB_2 = RP$. In the latter case, we can localize $AB_2 = RP$, instead of using composite residue by imposing the following cut conditions
    \begin{equation}\label{general_four_cuts}
          \la AB_2 \, RP \ra  =  \la AB_2 \, PQ \ra  = \la AB_2 \, RS \ra =  \la AB_2 \, RQ \ra = 0   \,. 
    \end{equation}
    The last condition in eq. \eqref{general_four_cuts} corresponds to the intersection condition between $AB_2$ and the line $RQ$. It could be that $RQ$ is not part of the initial line configuration, in which case we can consider the $L+1$ configuration where the first $L$ loop lines are localized as in the previous configuration and $AB_{L+1} = RQ$. This concludes the proof.

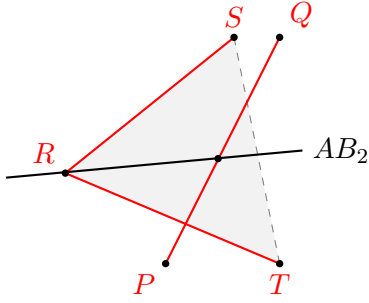
\begin{figure}[t]
\centering
\begin{tikzpicture}[scale = 0.6]

    \coordinate (P) at (0.5,-2);
    \coordinate (Q) at (3,3);
    \coordinate (R) at (-1.7,0);
    \coordinate (S) at (2,3);
    \coordinate (T) at (3,-2);
    \coordinate (A) at (-3,-0.1);
    \coordinate (B) at (3.5,0.5);
    \coordinate (C) at (1.65,0.32);

    \fill[gray!10] (R) -- (S) -- (T) -- cycle;
    
    \draw[dashed, gray] (S) -- (T);

    \draw[red, thick] (R) -- (S);
    \draw[red, thick] (R) -- (T);
    \draw[red, thick] (P) -- (Q);
    \draw[thick] (A) -- (B);
    \node[right] at (B) {$AB_2 $};

    \fill (P) circle (2.3pt);
    \fill (Q) circle (2.3pt);
    \fill (R) circle (2.3pt);
    \fill (S) circle (2.3pt);
    \fill (T) circle (2.3pt);
    \fill (C) circle (2.3pt);

    \node[above left, red] at (R) {$R$};
    \node[above, red] at (S) {$S$};
    \node[below, red] at (T) {$T$};
    \node[below left, red] at (P) {$P$};
    \node[above right, red] at (Q) {$Q$};

\end{tikzpicture}
\caption{The line $AB_2$ is fixed by a composite residue to pass through $R$ and lie in the plane $(RST)$, as well as by a further intersection condition with $PQ$. The resulting line is $AB_2 = R \, (PQ) \cap (RST)$.}
\label{fig:composite}
\end{figure}

\end{proof}

\subsection{The conditions on the LS geometry}
\label{subapp: conditions on AB geometry}

Imposing positivity or negativity of a twistor bracket $\langle CD  EF \rangle $ between two non-intersecting loop lines $CD$ and $EF$, yields non-trivial constraint on the geometry in $AB$, only if the bracket changes sign for $AB$ ranging in $\in \mathcal{A}^{(1)}_n$. However, if this is the case, the LS configuration is incompatible. Therefore, multiloop conditions are irrelevant when considering compatible configurations. We now determine all possible non-trivial conditions on the geometry in $AB$.

\begin{tcolorbox}[colback=white]
The LS value of any {\LABELREVERSE} compatible LS configuration is associated to a one-loop Amplituhedron geometry in $AB$ with extra conditions originating as follows.
\begin{itemize}
    \item {\it{\underline{Condition 1:}}} The mutual negativity condition $\langle AB \, CD \rangle<0$, for every loop line $CD$ not intersecting $AB$.
    \item {\it{\underline{Condition 2:}}} The one-loop Amplituhedron conditions for every other loop line~$CD$.
\end{itemize}
We show that these conditions take the form of imposing a fixed sign on products of twistor coordinates of the form $\langle AB \, ij \rangle$. 
\end{tcolorbox}

The proof is by induction on the sum of the length of $C$ and $D$, where $CD$ is represented by two intersection points. By Subsection \ref{subapp: disappearance of non-MHV int pts}, $C$ and $D$ are as in eqs. \eqref{iteration} and \eqref{iteration_on_AB}. Their structure, combined with the Plücker relation, directly yields the claim for condition~1. For condition~2, we use the definition of the Amplituhedron $\mathcal{A}_n^{(1)}$ as a projection of the positive Grassmannian ${\rm Gr}_{\geq 0}(2,n)$, and show that $CD$ being the image of a point in ${\rm Gr}_{\geq 0}(2,n)$ imposes the claimed constraints on~$AB$.

\begin{proof}

{\it{\underline{Condition 1:}}} 
Let $CD$ be a loop line such that $\la AB \, CD \ra \neq 0$. We argue by induction over the sum $l \geq 0$ of the lengths of $C$ and $D$, where we take $C$ and $D$ to be intersection points. If $\ell=0$, then $C,D \in \{1,\dots,n\}$ and the claim is obvious. Assume that $\ell \geq 1$ and that the length of $C$ is greater or equal than one. Since $CD$ does not intersect $AB$, we can write $C$ as in \eqref{iteration} with $P,Q,R$ intersection points that also do not lie on $AB$, and such that the sum of their length is less than $\ell $. We then compute
\begin{equation}
\begin{aligned}
    \langle AB \, CD \rangle =&\, 
    \langle  AB \, PD \rangle \langle  AB \, QR \rangle - \langle  AB \, QD \rangle \langle  AB \, PR \rangle \\
    =& \, \langle  AB \, PQ \rangle \langle  AB \, DR \rangle \,,
    \end{aligned}
\end{equation}
where the last equality is the Plücker relation.
By induction hypothesis, $\langle  AB \, PQ \rangle$ and $\langle  AB \, DR \rangle$ can be expressed as products of brackets of the form $\la AB \, ij \ra$, and consequently the same is true for $\langle AB \, CD \rangle$. \vspace{0.1in}

{\it{\underline{Condition 2:}}} 
We now consider the one-loop Amplituhedron conditions of a loop line $CD$. For that, we are going to use the alternative definition of the one-loop Amplituhedron. Namely, $\mathcal{A}^{(1)}_n$ is the image of the non-negative Grassmannian ${\rm Gr}_{\geq 0}(2,n)$ under the linear projection ${\rm Gr}(2,n) \rightarrow {\rm Gr}(2,4)$ induced by right multiplication by $Z$. The equivalence of this notion with the definition given in Subsection \ref{subsec:Amplituhedron and Kermits} has been proven in \cite{Parisi:2021oql}.
In particular, we represent $CD$ as $CD = M \cdot Z$, with a $2 \times n$ real matrix $M $. Then, $CD \in \mathcal{A}^{(1)}_n$ if and only if $M \in {\rm Gr}_{\geq 0}(2,n)$, i.e. if all Plücker coordinates of $M$ have the same sign. 

Let $l \geq 0$ denote the sum of the lengths of $C$ and $D$. If $l=0$, then $C < D$ belong to $\{1,\dots,n\}$ and $CD$ is a zero-dimensional boundary of the one-loop Amplituhedron geometry. In this case, $M$ has only non-zero columns 
\(\left(\begin{smallmatrix}
  1 \\
  0 
\end{smallmatrix}\right)\)
at $C$ and 
\(\left(\begin{smallmatrix}
  0 \\
  1 
\end{smallmatrix}\right)\)
at $D$, so it lies in ${\rm Gr}_{\geq 0}(2,n)$. 

Assume now that $l \geq 1$ and that the length of $C$ is greater or equal than one. If $C$ does not lie on $AB$, by eq. \eqref{iteration} we can write it as
\begin{equation}\label{C1}
    C=P + \alpha \, Q \, , \quad \text{with} \quad \alpha = \frac{\langle AB \, PR \rangle}{\langle AB \, RQ \rangle} \,.
\end{equation}
If instead $C$ lies on $AB$, by eq. \eqref{iteration_on_AB} we can write it as
\begin{equation}
    C=R + \beta \, S \, , \quad \text{with} \quad \beta = \frac{\langle AB \, RD \rangle}{\langle AB \, DS \rangle} 
\end{equation}
is fixed by $\langle AB \, CD\rangle =0$.
In both cases, it follows by induction on the length that $\alpha$ and $\beta$ can be expressed as products of brackets of the form $\langle AB \, ij \rangle$ and their inverses. We now show that in order for $CD$ to lie in the one-loop Amplituhedron, the sign of $\alpha$ or $\beta$ must be fixed, respectively. The following argument is the same for both cases, so let us consider the case where $C \notin AB$, i.e. eq. \eqref{C1}. We can then write $CD =(P + \alpha \, Q)D$ and take parametrisations $M_1$ of $PD$ and $M_2$ of $QD$, respectively, such that the first rows of $M_1$ and $M_2$ are the same. Then 
\begin{equation}\label{Plucker_sum}
    p_{ab}(M_1 + \alpha \, M_2) = p_{ab}(M_1) + \alpha \, p_{ab}(M_2) \,,
\end{equation}
where $p_{ab}(M)$ denotes the $(ab)$-minor of $M$. 
Moreover, the lines $PD$ and $QD$ must lie inside the one-loop Amplituhedron, 
which is equivalent to $p_{ab}(M_1)$ and $p_{ab}(M_2)$ having the same sign for every $1 \leq a<b \leq n$. By induction hypotheses, this requirement imposes fixed signs on products of brackets $\langle AB ij \rangle$. Moreover, we can fix the sign of $\alpha$ such that the sign of all $p_{ab}(M_1 + \alpha \, M_2)$ is the same for all $1 \leq a<b \leq n$. 
One can check that the converse is also true: if $\alpha$ in eq. \eqref{Plucker_sum} does not have a fixed sign, then all minors in \eqref{Plucker_sum} cannot have the same sign for every $AB \in \mathcal{A}^{(1)}_n$. We conclude that the sign of $\alpha$ must be fixed, which concludes the proof.
\end{proof}

\subsection{Compatibility implies non-crossing of arcs  }
\label{subapp: crossing diagrams are incompatible}

By Subsection \ref{subapp: conditions on AB geometry}, every {\LABELREVERSE} compatible LS is associated to a geometric space in $AB$, represented as a union of collections of signed arcs of the $n$-gon. We now prove the following.
\begin{tcolorbox}[colback=white]
    Every collection of signed arcs of the $n$-gon associated to a {\LABELREVERSE} compatible LS configuration involves only non-crossing arcs. In particular, it is a signed partial triangulation.
\end{tcolorbox}

The proof's idea is the following. Let $\mathcal{L}$ be a {\LABELREVERSE} compatible LS configuration, whose associated collection of signed arcs of the $n$-gon involves two arcs $(i_1j_1)$ and $(i_2j_2)$. We show that there exist two chains of loop lines in $\mathcal{L}$, starting from $i_t$ and ending at $j_t$ for $t=1,2$, respectively. Since $\mathcal{L}$ is compatible, every pair of lines, one in each chain, has non-negative mutual twistor bracket. We then show that this implies that $\la i_1j_1i_2j_2 \ra \geq 0$, which means that $(i_1j_1)$ and $(i_2j_2)$ do not cross.

\begin{proof}
We first prove the following auxiliary result. Let $P,Q,R,S,T$ be intersection points in $\mathcal{L}$. If $\langle PQST \rangle \geq 0$ and $\langle QRST \rangle \geq  0$ for every $AB \in \mathcal{A}^{(1)}_n$, then also $\langle PRST \rangle \geq 0$ for every $AB \in \mathcal{A}^{(1)}_n$. We prove this by induction over $l \geq  0$, the sum of the lengths of all six intersection points. 
If $l=0$, then $P,Q,R,S,T \in \{1,\dots,n\}$, and the claim follows directly from external positivity \eqref{external_positivity}. Let $l \geq 1$, and assume that the length of $P$ is greater or equal than one. We write $P=P_1 + \alpha \, P_2$ with $P_i$ invariant points of length less than that of $P$. Since $\langle PQST \rangle  \geq 0$ for every $AB \in \mathcal{A}^{(1)}_n$, we can assume that $\alpha>0$ for every $AB$ in the LS geometry associated to $\mathcal{L}$. If $\langle P_r QST \rangle <0$ for $r=1$ or $r=2$, for some $AB$, we could continuously reach the locus where $\alpha $ is equal to zero or infinity, simultaneously preserving the sign $\langle P_rQST \rangle <0$. This would contradict the fact that $\langle PQST \rangle \geq 0$ on the closure of the geometry in $AB$. It follows that $\langle P_rQST \rangle \geq 0$ for both $r=1,2$. By induction on the length, $\langle P_rRST \rangle > 0$ for $r=1,2$ and therefore $\langle PRST \rangle > 0$. The same argument works when considering $Q$, $S$ or $T$ instead of $P$ in the induction step.

We are now prove the main result. Assume that $\mathcal{L}$ involves a condition on the sign of $\langle AB i_1j_1 \rangle$ and $\langle AB i_1j_1 \rangle$. Then, there exist two chains of lines $(P_rP_{r+1})$ and $(Q_sQ_{s+1})$ in $\mathcal{L}$, of length $N$ and $M$ respectively, such that $P_0=i_1 $, $Q_0=i_2$, $P_{N}=j_1$ and $Q_M=j_2$. This follows from the fact that the bracket $\langle AB i_1j_1 \rangle$ can appear among the conditions on the geometry in $AB$, if and only if there exists an intersection point depending on both $i_1$ and $j_1$. This in turn can happen only if the two points are connected by a chain of lines in $\mathcal{L}$, which by Subsection \ref{subapp: disappearance of non-MHV int pts} must intersect each other in points $P_r$ of the form as in eqs. \eqref{iteration} and \eqref{iteration_on_AB}.
Since $\mathcal{L}$ is compatible, $\la P_{r-1} P_{r} Q_{s-1}Q_{s} \ra \geq 0$ for every $1 \leq r< \leq N$ and $1 \leq s \leq M$ and for any $AB \in \mathcal{A}_n^{(1)}$. We now use the auxiliary result above. In fact, from $\la P_0P_1Q_0Q_1 \ra \geq 0$ and $\la P_1P_2Q_0Q_1 \ra \geq 0$, it follows that $\la P_0P_2Q_0Q_1 \ra \geq 0$. Running this argument over $P_r$ for $r=1,\dots,N$ yields that $\la P_0P_NQ_0Q_1 \ra \geq 0$. Then, the analogous procedure on $Q_s$ for $s=1,\dots,M$ yields $\la P_0P_NQ_0Q_M \ra = \la i_1j_1i_2j_2 \ra  \geq 0$.

\end{proof}

\subsection{Example of a configuration involving higher length intersection points}

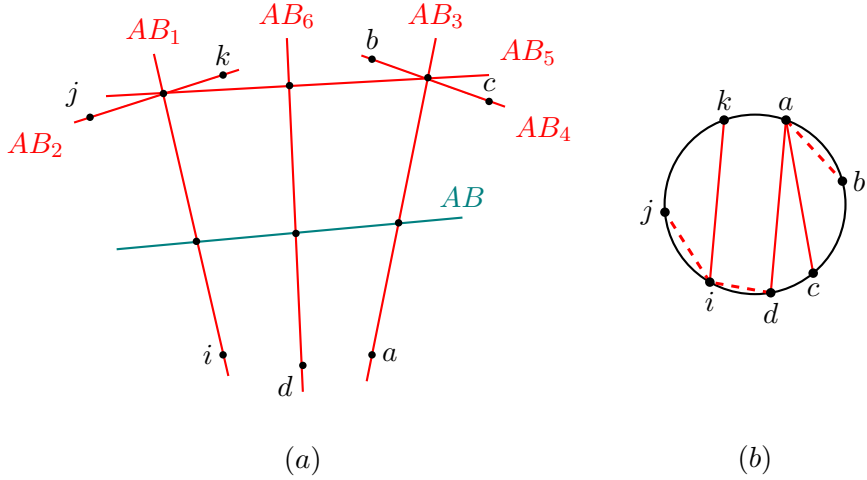
\begin{figure}[t]
\centering
\begin{tikzpicture}[scale = 0.7]

    \node at (0,-4) {$(a)$};

    \coordinate (i) at (-1.5,-2);
    \coordinate (J) at (-2,0.15);
    \coordinate (Y) at (1.8,0.5);
    \coordinate (R) at (-2.62,2.95);
    \coordinate (S) at (2.35,3.25);
    \coordinate (j) at (-4,2.5);
    \coordinate (k) at (-1.5,3.3);
    \coordinate (b) at (1.3,3.6);
    \coordinate (c) at (3.5,2.8);
    \coordinate (a) at (1.3,-2);
    \coordinate (d) at (0,-2.2);
    \coordinate (C) at (-2.8,3.7);
    \coordinate (D) at (-1.4,-2.4);
    \coordinate (A) at (-3.5,0);
    \coordinate (B) at (3,0.6);
    \coordinate (G) at (1.5,3.5);
    \coordinate (H) at (1.5,-2.5);
    \coordinate (E) at (-4.3,2.4);
    \coordinate (F) at (-1.2,3.4);
    \coordinate (L) at (1.1,3.67);
    \coordinate (M) at (3.8,2.7);
    \coordinate (G) at (2.5,4);
    \coordinate (H) at (1.2,-2.5);
    \coordinate (O) at (-3.7,2.9);
    \coordinate (N) at (3.5,3.3);
    \coordinate (P) at (0.0,-2.7);
    \coordinate (Q) at (-0.3,4);
    \coordinate (W) at (-0.13,0.3);
    \coordinate (Z) at (-0.25,3.1);


    \draw[dashed, gray] (j) -- (k);

    \draw[dashed, gray] (b) -- (c);


    \draw[red, thick] (C) -- (D);
    \node[above, red] at (C) {$AB_1$};

    \draw[red, thick] (G) -- (H);
    \node[above, red] at (G) {$AB_3$};

    \draw[red, thick] (E) -- (F);
    \node[below left, red] at (E) {$AB_2$};

    \draw[red, thick] (L) -- (M);
    \node[below right, red] at (M) {$AB_4$};

    \draw[red, thick] (N) -- (O);
    \node[above right, red] at (N) {$AB_5$};

    \draw[red, thick] (P) -- (Q);
    \node[above, red] at (Q) {$AB_6$};

    \draw[teal, thick] (A) -- (B);
    \node[above, teal] at (B) {$AB$};

    \fill (i) circle (2pt);
    \fill (J) circle (2pt);
    \fill (Y) circle (2pt);
    \fill (R) circle (2pt);
    \fill (S) circle (2pt);
    \fill (j) circle (2pt);
    \fill (k) circle (2pt);
    \fill (a) circle (2pt);
    \fill (b) circle (2pt);
    \fill (c) circle (2pt);
    \fill (d) circle (2pt);
    \fill (W) circle (2pt);
    \fill (Z) circle (2pt);

    \node[left] at (i) {$i$};
    \node[above left] at (j) {$j$};
    \node[above] at (k) {$k$};
    \node[right] at (a) {$a$};
    \node[above] at (b) {$b$};
    \node[above] at (c) {$c$};
    \node[below left] at (d) {$d$};


    \begin{scope}[scale = 0.85,xshift = 10 cm,yshift=1 cm]

    \node at (0,-5.7) {$(b)$};
    
    \draw[thick] (0,0) circle(2cm);

    \coordinate (i) at (-120:2cm);
    \coordinate (j) at (185:2cm);
    \coordinate (k) at (110:2cm);
    \coordinate (l) at (70:2cm);
    \coordinate (n) at (-50:2cm);
    \coordinate (m) at (15:2cm);
    \coordinate (o) at (-80:2cm);
    
    \node at (i) [below ] {$i$};
    \node at (j) [left] {$j$};
    \node at (k) [above ] {$k$};
    \node at (l) [above ] {$a$};
    \node at (m) [right ] {$b$};
    \node at (n) [below ] {$c$};
    \node at (o) [below ] {$d$};


    \draw[line width=0.4mm, red, dashed] (i) -- (j);
    \draw[line width=0.4mm, red, dashed] (l) -- (m);
    \draw[line width=0.4mm, red, thick] (i) -- (k);
    \draw[line width=0.4mm, red, thick] (l) -- (n);
    \draw[line width=0.4mm, red, dashed] (o) -- (i);
    \draw[line width=0.4mm, red, thick] (o) -- (l);

     \foreach \p in {i,j,k,l,m,n,o} {
        \fill (\p) circle(3pt);
    }

    \end{scope}

\end{tikzpicture}
\caption{A {\LABELREVERSE} LS configuration at $L=6$ involving intersection points up to length equal to three. The LS is compatible, if and only if the ordering of the indices is such that no arc in (b) crosses. If this is the case, the LS value is given by eq. \eqref{L6_LS_value}.}
\label{fig:L6}
\end{figure}

We give an example of a {\LABELREVERSE} LS configuration involving intersection points of length up to three. This example also illustrates the content of Subsection \ref{subapp: conditions on AB geometry} and Subsection \ref{subapp: crossing diagrams are incompatible}.

Let $L=6$ and localize the six loop lines as
\begin{equation}
\begin{aligned}
    &AB_1= i \, AB\cap(ijk) \ , \quad AB_2=jk \ , \quad  AB_3=a \, AB \cap(abc) \ , \\ 
    &AB_4=bc \ , \quad AB_5 = P[jk,i]P[bc,a] \ ,\quad AB_6= d \,  AB \cap (dAB_5) \ ,
\end{aligned}
\end{equation}
where all indices are in $\{1, \dots,n\}$ and the points $P[jk,i]$ were defined in eq. \eqref{simplest_invariants}. The line configuration, together with its associated signed collection of arcs, is depicted in Fig.~\ref{fig:L6}.
In particular, we can write $AB_6$ in terms of two intersection points $A_6=d$ and 
\begin{equation}\label{B6}
    B_6 = P[jk,i] + \gamma \, P[bc,a] =  i (j+ \alpha \, k) + \gamma \, (a (b + \beta \, c ))   \ ,
\end{equation}
with
\begin{equation}
    \alpha = \frac{\langle AB \, ij \rangle }{\langle AB \, ki \rangle} > 0  \ , \quad \beta = \frac{\langle AB \, ab \rangle}{\langle AB \, ca \rangle} >0 \ .
\end{equation}
Note that $B_6$ in eq. \eqref{B6} is an intersection point of length equal to three.
  The real constant $\gamma$ is fixed by imposing $\langle AB \, AB_6 \rangle = 0$, and it is equal to 
\begin{equation}\label{delta2}
    \gamma  =  \frac{\langle ABjk \rangle \langle ABca \rangle \langle ABdi \rangle}{\langle ABki \rangle \langle AB bc \rangle \langle ABad \rangle}  \ ,
\end{equation}
which can be verified using the Plücker relation. The one-loop Amplituhedron conditions on $AB_6$ forces an ordering on the indices: either as $j < k \leq d  \leq  b<c$, in which case $\gamma < 0$, or as $b<c \leq d \leq j$, in which case $\gamma > 0$.
Moreover, if present in the negative geometry, the mutual negativity condition between $AB$ and $AB_5$ reads
\begin{equation}\label{ABNO}
    \langle AB \, AB_5 \rangle = \frac{\langle AB jk \rangle\langle AB bc \rangle \langle AB ia \rangle}{\langle AB ki \rangle\langle AB ca \rangle} < 0 \ .
\end{equation}
Eqs. \eqref{delta2} and \eqref{ABNO} impose conditions on the geometry in $AB$, coming from the one-loop Amplituhedron conditions for $AB_6$ and the mutual negativity condition $\langle AB \, AB_5 \rangle < 0$, respectively. All these conditions are of the form of imposing a fixed sign on products of twistor brackets $\la AB ij \ra$, as claimed in Subsection \ref{subapp: conditions on AB geometry}.

The LS configuration is compatible if and only if the indices are ordered such that no arc in Fig. \ref{fig:L6}(b) crosses. If this is the case, the LS value is given by
\begin{equation}\label{L6_LS_value}
    \Omega_n(ik,ac) -  \Omega_n(ik,ab) - \Omega_n(ij,ac) +  \Omega_n(ij,ab) \ .
\end{equation}
Eq. \eqref{L6_LS_value} can be verified using a triangulation-based argument, similar to the proof of Proposition~\ref{proposition from triang to kermits}.

\section{Leading singularity count and basis}

\subsection{Leading singularity count and basis of Kermit forms}
\label{subapp:LS count}

By a simple combinatorial count the number of Kermits at fixed $n \geq 4$ is 
\begin{equation}
    3 \binom{n}{6} + 5 \binom{n}{5} + \binom{n}{4} \,,
\end{equation}
which is the ``$k=2$'' row in \cite[Table 1] {Lukowski:2019sxw}. 
However, the Kermit forms given in eq. \eqref{six_invariant}, as well as the simple compatible LS \eqref{LS_simple_formulae}, are not linearly independent for $n \geq 5$. 
The authors of ref. \cite{Chicherin:2022bov} conjectured the dimension of the linear space generated by LS of $F_n^{(L)}$ for $L \ge 2$ to be equal to 
\begin{equation}\label{number_lin_ind}
      \binom{n}{4} + \binom{n-1}{4} = \frac{(n-1)(n-2)^{2}(n-3)}{12} \,.
\end{equation}
In this section we prove that this count is indeed correct, and provide an explicit basis for the linear space generated by Kermit forms.
\begin{tcolorbox}[colback=white]
For every $n \geq 4$, the collection of Kermit forms given by
\begin{equation}\label{Kermit_basis}
\begin{aligned}
    &[abcd] \quad \quad \ \text{with} \quad 1 \leq a < b <c < d \leq n \ , \ \text{and}  \\
    &[1ab;1cd] \quad \text{with} \quad 1<a<b<c<d\leq n \ ,
\end{aligned}
\end{equation}
forms a basis of the linear space generated by all Kermit forms \eqref{six_invariant} and \eqref{four_invariant}. In particular, the dimension of this linear space is given by eq. \eqref{number_lin_ind}.
\end{tcolorbox}
\begin{proof}
{\it{\underline{Generating set: }}} 
We first show that the forms in eq. (\ref{Kermit_basis}) generate the full linear space of Kermit forms. 
For that, we consider the following linear relations between Kermit forms. For $n \geq 5$, we have the equality
\begin{equation}\label{linear_relations}
    \sum_{\Delta_1, \, \Delta_2 \, \subset \, T} [\Delta_1; \Delta_2] = \sum_{\Delta_1', \, \Delta_2' \, \subset \, T'} [\Delta_1'; \Delta_2'] \,.
\end{equation}
Here the sums run over all pairs of triangles in the triangulations $T, T'$ of a pair of polygons $P_1$ and $P_2$ inscribed in an $n$-gon, and $\Delta_i \subset P_i$. Eq. \eqref{linear_relations} is true for any non-overlapping pair of polygons $P_i$ inscribed in the $n$-gon; we allow the polygons to share some vertices, with the condition that either $P_1 = P_2$, or that the interiors of the polygons do not intersect. The proof of eq. \eqref{linear_relations} uses the same arguments introduced in Subsection \ref{subsec: formulae for simple compatible LS}. In particular, both sides of eq. \eqref{linear_relations} can be expressed as a linear combination of simple compatible LS \eqref{LS_simple_formulae} depending only on the vertex indices of $P_1$ and $P_2$.

Let us come back to the proof. We first show that all Kermit forms for which one index is equal to $1$ can be expressed as linear combinations of Kermit of the form as in eq. \eqref{Kermit_basis}. Consider first $[1bc;def]$ with $1<b<c\leq d<e<f \leq n $. Then, we consider the two polygons $P_1=\{1,b,c\}$ and $P_2=\{1,d,e,f\}$ and write eq. \eqref{linear_relations}, for the only two possible triangulations,  
\begin{equation}\label{reduction1}
    [1bc;1df] + [1bc;def] = [1bc;1de] + [1bc;1ef] \, .
\end{equation}
This shows that $[1bc;def]$ can be expressed only in terms of elements in eq. \eqref{Kermit_basis}. Consider now $[a1c;def]$ with $a<1<c\leq d<e<f \leq 1 $. Take the polygons $P_1=\{a,1,c,d\}$ and $P_2=\{d,e,f\}$ and write eq. \eqref{linear_relations}, for the only two possible triangulations,  
\begin{equation}\label{reduction2}
    [a1c;def] + [a1cd] = [1cd;def] + [a1d;def] \, .
\end{equation} 
We therefore need to show the claim for $[a1d;def]$. For that, take a single polygon $P_1=P-2=\{a,1,c,d,e\}$, with triangulations $T=\{ac,ce\}$ and $T'=\{1d,1e\}$. Then, eq. \eqref{linear_relations} reads
\begin{equation}\label{reductio22}
    [a1ce] + [acde]+ [a1c;ced] = [1cde] + [a1de] + [1cd;a1e] \, .
\end{equation} 
This concludes the proof for the case where one index is equal to $1$.

Let us now consider a general Kermit form $[abc;def]$ with $a<b<c\leq d<e<f \leq 1 $. If one or two indices are equal to $1$, then by above we are done. So let us assume that $1<a<b<c\leq d<e<f \leq n $. Thanks to the rotational symmetry of the problem in the indices, we can restrict to considering only two cases, specified by the position of $1$ relative to $a$. If $a<1<b$, then we set the polygons to be $P_1=\{a,1,b,c\}$ and $P_2=\{d,e,f\}$, and the possible triangulations are only two. Then, eq. \eqref{linear_relations} reads
\begin{equation}
\begin{aligned}
    [abc;def] + [a1b;def] = [1bc;def] + [a1c;def] \, ,
\end{aligned}
\end{equation}
which together with the previous analysis completes the proof for this case.
For the other case, in which $f < 1 < a$, we set the polygons to be $P_1=\{1,a,b,c\}$ and $P_2=\{1,d,e,f\}$, and the triangulations $T=\{(ac),(df)\}$, $T'=\{(1b),(1e)\}$. Then, eq. \eqref{linear_relations} becomes
\begin{equation}
\begin{aligned}
    &[abc;def] + [abc;1df] + [1ab;def] + [1ab;1df] = \\ &[1ab;1de] + [1ab;1ef] + [1bc;1de] + [1bc;1ef] \, ,
\end{aligned}
\end{equation}
which proves the claim also in this case. This concludes the proof that the set in eq. \eqref{Kermit_basis} generates the linear space of all Kermit forms. \vspace{0.1in}

{\it{\underline{Linear independence: }}} 
We are left to show that the set in eq. \eqref{Kermit_basis} is linearly independent. 
For that take any linear combination of the form 
\begin{equation}\label{lin_rel}
    \sum_{1 \leq a < b <c < d \leq n} \lambda(a,b,c,d) \, [abcd] + \sum_{1 < a < b <c < d \leq n} \mu(a,b,c,d) \, [1ab;1cd] = 0 \,,
\end{equation}
for complex coefficients $\lambda(a,b,c,d)$ and $\mu(a,b,c,d)$. We show that all coefficients must be equal to zero, by taking different residues of eq. \eqref{lin_rel} corresponding to higher codimension loci in $AB$. For indices $1<i<j<k<l \leq n$ consider the double residue along
\begin{equation}\label{residue}
    \la AB il \ra = \la AB jk \ra =  0 \,.
\end{equation}
Then, the only term in eq. \eqref{lin_rel} with non-zero residue is $ [ijkl]$. In fact, no other Kermit form among those in \eqref{Kermit_basis} involves both poles in eq. \eqref{residue}. It follows that $\lambda(a,b,c,d)$ vanish if $1<a<b<c<d \leq n$, and the only remaining Kermit forms involving four indices in eq. \eqref{lin_rel} are of the form $[1bcd]$. 
Consider now instead the residue of eq. \eqref{lin_rel} along
\begin{equation}\label{residue2}
    \la AB ij \ra = \la AB kl \ra =  0 \,,
\end{equation}
for $1<i<j<k<l \leq n$.
The only term with non-zero residue on eq. \eqref{residue2} is $ [1ij;1kl]$. In fact, no Kermit form $[1bcd]$ can have both poles in eq. \eqref{residue2}. This shows that all $\mu(a,b,c,d)$ vanish. The only remaining terms are of the form $[1bcd]$, which are individually isolated by taking e.g. the double residue along 
\begin{equation}\label{residue3}
    \la AB 1j \ra = \la AB kl \ra =  0 \,,
\end{equation}
for $1<j<k<l \leq n$. By the same argument as before, all coefficients $\lambda(1,b,c,d)$ vanish. This concludes the proof.
\end{proof}

\subsection{A leading singularity basis in terms of simple compatible LS}

From Claim \ref{claim 1} we know that for $L \geq 2$ the linear space generated by all Kermit forms and that of all LS of $F_n^{(L)}$ coincide. Moreover, by Subsection \ref{subsec: Kermit forms in terms of simple compatible LS} this space is also the same as the one generated by only simple compatible LS \eqref{LS_simple_formulae}.
In particular, using eqs. \eqref{six_kermit_in_LS} and \eqref{four_kermit_in_LS}, we can rewrite the basis \eqref{Kermit_basis} in terms of simple compatible LS. Note that this basis involves then non-trivial linear combinations of simple compatible LS. 
We can instead find a basis consisting of individual simple compatible LS.

As a warm-up, consider the one-loop case. From the result for $F^{(1)}_n$ from ref. \cite{Chicherin:2022bov}, we know that a basis of simple compatible LS at $L=1$ is given by $\Omega_n(ij)$. There are $n (n-3)/2$ such terms.

For $L \geq 2$,
 we conjecture the following.
\begin{tcolorbox}[colback=white]
A basis of simple compatible LS at $L \geq 2$ is given by 
\begin{equation}\label{LS_lin_dependent0}
    \Omega_n \,, \quad  \Omega_{n}(ij,kl) \quad \text{with} \quad i<j \leq k<l \,,
\end{equation}
without all those of the form
\begin{equation}\label{LS_lin_dependent}
    \Omega_n(ij,ik) \quad \text{with} \quad 1 \leq i < j <k \leq n \ \text{and} \ |j-i|,|k-j|,|i-k|>1 \,.
\end{equation}
\end{tcolorbox}
One may check that indeed the number of terms in eqs. \eqref{LS_lin_dependent0}  and \eqref{LS_lin_dependent} is equal to eq. \eqref{number_lin_ind}.

\subsection{All-loop leading singularity basis 
at five and six points}
\label{subapp:LS basis at n=5 and n=6}

Here we give the explicit basis of LS of $F^{(L)}_n$ in terms of Kermit forms, according to eq.~\eqref{Kermit_basis}.

Let $n=5$. At $L=1$ we have five linearly independent LS, given by $\Omega_{5}(13)$, cf. eq. (\ref{R5,13}),
and its four cyclic shifts.
For $L \geq 2$ there are six linearly independent LS. From the discussion we deduce that we can choose a
basis given by $\Omega_5$, as well as $[1234]$ and its five cyclic shifts. This confirms that the five-point basis choice considered in \cite{Chicherin:2022zxo} is indeed valid at all loop orders. 
We remark that if we replace $\Omega_5$ by $[123;145]$, we obtain a basis of only Kermit forms, as shown in Fig.~\ref{Fig:basisn=5intermsofkermitformsonly}.
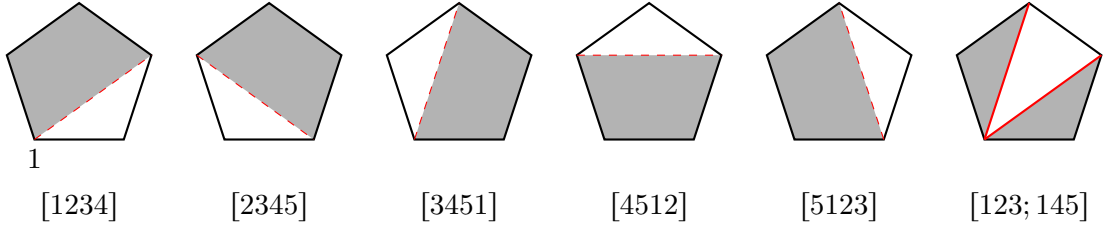
\begin{figure}[t]
\centering
\begin{tikzpicture}[scale = 1]

    \begin{scope}[xshift = 0cm]

    \node at (0,-1.7) {$[1234]$};

    \draw[thick] (90:1) 
    \foreach \x in {162,234,306,18} 
        { -- (\x:1) } -- cycle;

    \coordinate (D) at (18:1);
    \coordinate (C) at (90:1);
    \coordinate (B) at (162:1);
    \coordinate (A) at (234:1);
    \coordinate (E) at (306:1);

    \fill[black, opacity=0.3] (A) -- (B) -- (C) -- (D) -- cycle;

    \draw[red, dashed] (A) -- (D);

    \node[below] at (A) {1};
    
    \end{scope}


    \begin{scope}[xshift = 2.5cm]

    \node at (0,-1.7) {$[2345]$};

    \draw[thick] (90:1) 
    \foreach \x in {162,234,306,18} 
        { -- (\x:1) } -- cycle;

    \coordinate (D) at (18:1);
    \coordinate (C) at (90:1);
    \coordinate (B) at (162:1);
    \coordinate (A) at (234:1);
    \coordinate (E) at (306:1);

    \fill[black, opacity=0.3] (B) -- (C) -- (D) -- (E) -- cycle;

    \draw[red, dashed] (B) -- (E);

    
    \end{scope}


    \begin{scope}[xshift = 5cm]

    \node at (0,-1.7) {$[3451]$};

    \draw[thick] (90:1) 
    \foreach \x in {162,234,306,18} 
        { -- (\x:1) } -- cycle;

    \coordinate (D) at (18:1);
    \coordinate (C) at (90:1);
    \coordinate (B) at (162:1);
    \coordinate (A) at (234:1);
    \coordinate (E) at (306:1);

    \fill[black, opacity=0.3] (C) -- (D) -- (E) -- (A) -- cycle;

    \draw[red, dashed] (A) -- (C);

    
    \end{scope}


    \begin{scope}[xshift = 7.5cm]

    \node at (0,-1.7) {$[4512]$};

    \draw[thick] (90:1) 
    \foreach \x in {162,234,306,18} 
        { -- (\x:1) } -- cycle;

    \coordinate (D) at (18:1);
    \coordinate (C) at (90:1);
    \coordinate (B) at (162:1);
    \coordinate (A) at (234:1);
    \coordinate (E) at (306:1);

    \fill[black, opacity=0.3] (D) -- (E) -- (A) -- (B) -- cycle;

    \draw[red, dashed] (B) -- (D);

    
    \end{scope}

    \begin{scope}[xshift = 10cm]

    \node at (0,-1.7) {$[5123]$};

    \draw[thick] (90:1) 
    \foreach \x in {162,234,306,18} 
        { -- (\x:1) } -- cycle;

    \coordinate (D) at (18:1);
    \coordinate (C) at (90:1);
    \coordinate (B) at (162:1);
    \coordinate (A) at (234:1);
    \coordinate (E) at (306:1);

    \fill[black, opacity=0.3] (E) -- (A) -- (B) -- (C) -- cycle;

    \draw[red, dashed] (E) -- (C);

    
    \end{scope}

    \begin{scope}[xshift = 12.5cm]

    \node at (0,-1.7) {$[123;145] $};

    \draw[thick] (90:1) 
    \foreach \x in {162,234,306,18} 
        { -- (\x:1) } -- cycle;

    \coordinate (D) at (18:1);
    \coordinate (C) at (90:1);
    \coordinate (B) at (162:1);
    \coordinate (A) at (234:1);
    \coordinate (E) at (306:1);

    \fill[black, opacity=0.3] (A) -- (B) -- (C)  -- cycle;
    \fill[black, opacity=0.3] (A) -- (D) -- (E) -- cycle;

    \draw[red, thick] (A) -- (D);
    \draw[red, thick] (A) -- (C);

    
    \end{scope}

\end{tikzpicture}
\caption{A LS basis for $F_5^{(L)}$ at $L \geq 2$ that involves Kermit forms only.}
\label{Fig:basisn=5intermsofkermitformsonly}
\end{figure}

Let $n=6$. At $L=1$ we have $9$ leading singularities,
given by $\Omega_{6}(13)$ (plus five cyclic shifts) and $\Omega_{6}(14)$ (plus two cyclic shifts).
For $L \geq 2$, the number of independent LS is $20$. A basis of Kermit forms is shown in Fig.~\ref{Fig:basisn=6intermsofkermitformsonly}.

\begin{figure}[t]
\centering
\begin{tikzpicture}

    \node at (0,-1.7) {$[1234]$};
    
    \draw[thick] (0:1) 
        \foreach \x in {60,120,180,240,300} 
            { -- (\x:1) } -- cycle;

    \coordinate (E) at (0:1);
    \coordinate (D) at (60:1);
    \coordinate (C) at (120:1);
    \coordinate (B) at (180:1);
    \coordinate (A) at (240:1);
    \coordinate (F) at (300:1);

    \fill[black, opacity=0.3] (A) -- (B) -- (C) -- (D) -- cycle;

    \draw[red, dashed] (A) -- (D);
    
    \node[below] at (A) {1};

    
    \begin{scope}[xshift = 3 cm]
    
    \node at (0,-1.7) {$[1345]$};

    \draw[thick] (0:1) 
        \foreach \x in {60,120,180,240,300} 
            { -- (\x:1) } -- cycle;

    \coordinate (E) at (0:1);
    \coordinate (D) at (60:1);
    \coordinate (C) at (120:1);
    \coordinate (B) at (180:1);
    \coordinate (A) at (240:1);
    \coordinate (F) at (300:1);

    \fill[black, opacity=0.3] (A) -- (C) -- (D) -- (E) -- cycle;

    \draw[red, dashed] (A) -- (C);
    \draw[red, dashed] (A) -- (E);

    \end{scope}

    
    \begin{scope}[xshift = 6cm]
    
    \node at (0,-1.7) {$[1346]$};
    \node at (3,0) {$+$ cycl.};

    \draw[thick] (0:1) 
        \foreach \x in {60,120,180,240,300} 
            { -- (\x:1) } -- cycle;

    \coordinate (E) at (0:1);
    \coordinate (D) at (60:1);
    \coordinate (C) at (120:1);
    \coordinate (B) at (180:1);
    \coordinate (A) at (240:1);
    \coordinate (F) at (300:1);

    \fill[black, opacity=0.3] (A) -- (C) -- (D) -- (F) -- cycle;
    
    \draw[red, dashed] (A) -- (C);
    \draw[red, dashed] (D) -- (F);
 
    \end{scope}

    
    \begin{scope}[xshift = 0cm, yshift = -3.5 cm]
    
    \node at (0,-1.7) {$[123;145]$};

    \draw[thick] (0:1) 
        \foreach \x in {60,120,180,240,300} 
            { -- (\x:1) } -- cycle;

    \coordinate (E) at (0:1);
    \coordinate (D) at (60:1);
    \coordinate (C) at (120:1);
    \coordinate (B) at (180:1);
    \coordinate (A) at (240:1);
    \coordinate (F) at (300:1);

    \fill[black, opacity=0.3] (A) -- (B) -- (C) -- cycle;
    \fill[black, opacity=0.3] (A) -- (D) -- (E) -- cycle;

    \draw[red, thick] (A) -- (C);
    \draw[red, thick] (A) -- (D);
    \draw[red, dashed] (A) -- (E);
    
    \end{scope}
    
    
    \begin{scope}[xshift = 3 cm, yshift = -3.5 cm]
    
    \node at (0,-1.7) {$[123;156]$};

    \draw[thick] (0:1) 
        \foreach \x in {60,120,180,240,300} 
            { -- (\x:1) } -- cycle;

    \coordinate (E) at (0:1);
    \coordinate (D) at (60:1);
    \coordinate (C) at (120:1);
    \coordinate (B) at (180:1);
    \coordinate (A) at (240:1);
    \coordinate (F) at (300:1);

    \fill[black, opacity=0.3] (A) -- (B) -- (C) -- cycle;
    \fill[black, opacity=0.3] (A) -- (E) -- (F) -- cycle;

    \draw[red, thick] (A) -- (C);
    \draw[red, dashed] (A) -- (E);
    
    \end{scope}
    
    
    \begin{scope}[xshift = 6cm, yshift = -3.5 cm]
    
    \node at (0,-1.7) {$[123;146]$};

    \draw[thick] (0:1) 
        \foreach \x in {60,120,180,240,300} 
            { -- (\x:1) } -- cycle;

    \coordinate (E) at (0:1);
    \coordinate (D) at (60:1);
    \coordinate (C) at (120:1);
    \coordinate (B) at (180:1);
    \coordinate (A) at (240:1);
    \coordinate (F) at (300:1);

    \fill[black, opacity=0.3] (A) -- (B) -- (C) -- cycle;
    \fill[black, opacity=0.3] (A) -- (D) -- (F) -- cycle;

    \draw[red, thick] (A) -- (C);
    \draw[red, thick] (A) -- (D);
    \draw[red, dashed] (D) -- (F);
    
    \end{scope}
    
    \begin{scope}[xshift = 9cm, yshift = -3.5 cm]
    
    \node at (0,-1.7) {$[124;156]$};

    \draw[thick] (0:1) 
        \foreach \x in {60,120,180,240,300} 
            { -- (\x:1) } -- cycle;

    \coordinate (E) at (0:1);
    \coordinate (D) at (60:1);
    \coordinate (C) at (120:1);
    \coordinate (B) at (180:1);
    \coordinate (A) at (240:1);
    \coordinate (F) at (300:1);

    \fill[black, opacity=0.3] (A) -- (B) -- (D) -- cycle;
    \fill[black, opacity=0.3] (A) -- (E) -- (F) -- cycle;

    \draw[red, dashed] (B) -- (D);
    \draw[red, thick] (A) -- (D);
    \draw[red, thick] (A) -- (E);
    
    \end{scope}
    
    \begin{scope}[xshift = 12cm, yshift = -3.5 cm]
    
    \node at (0,-1.7) {$[134;156]$};

    \draw[thick] (0:1) 
        \foreach \x in {60,120,180,240,300} 
            { -- (\x:1) } -- cycle;

    \coordinate (E) at (0:1);
    \coordinate (D) at (60:1);
    \coordinate (C) at (120:1);
    \coordinate (B) at (180:1);
    \coordinate (A) at (240:1);
    \coordinate (F) at (300:1);

    \fill[black, opacity=0.3] (A) -- (C) -- (D) -- cycle;
    \fill[black, opacity=0.3] (A) -- (E) -- (F) -- cycle;

    \draw[red, dashed] (A) -- (C);
    \draw[red, thick] (A) -- (D);
    \draw[red, thick] (A) -- (E);
    
    \end{scope}

\end{tikzpicture}
\caption{A LS basis for $F_6^{(L)}$ at $L \geq 2$.
``+ cycl'' in the first row means that there are $6$ elements of [1234] type, $6$ of [1345] type, $3$ of [1346] type, so that together with the 5 elements from the second row we have $20$ basis elements in total.}
\label{Fig:basisn=6intermsofkermitformsonly}
\end{figure}
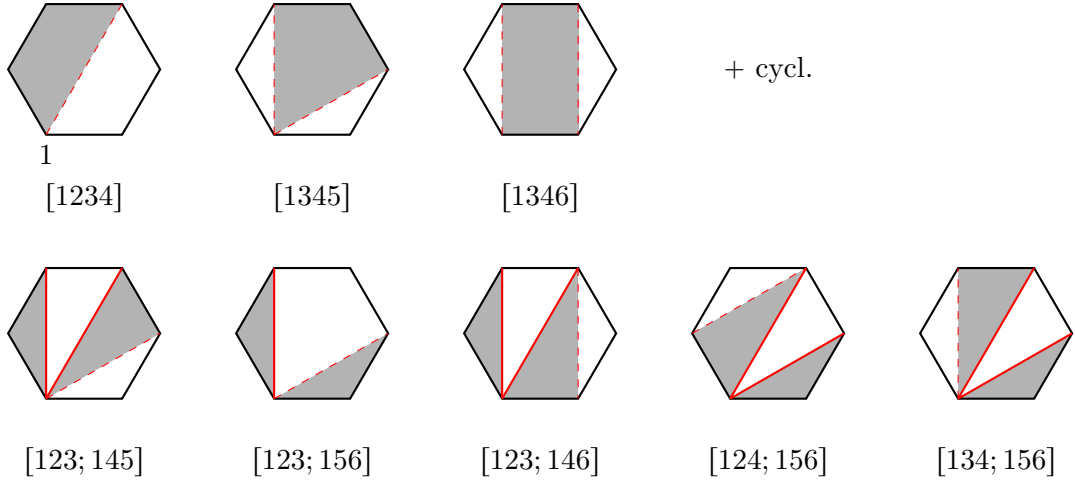

\bibliographystyle{JHEP}
\bibliography{main}

\providecommand{\href}[2]{#2}\begingroup\raggedright\begin{thebibliography}{10}

\bibitem{Henn:2020omi}
J.~M. Henn, \emph{{What Can We Learn About QCD and Collider Physics from N=4 Super Yang\textendash{}Mills?}}, \href{http://dx.doi.org/10.1146/annurev-nucl-102819-100428}{\emph{Ann. Rev. Nucl. Part. Sci.} {\bf 71} (2021) 87--112}, [\href{http://arxiv.org/abs/2006.00361}{{\tt 2006.00361}}].

\bibitem{Bern:2022jnl}
Z.~Bern and J.~Trnka, \emph{{Snowmass TF04 Report: Scattering Amplitudes and their Applications}},  \href{http://arxiv.org/abs/2210.03146}{{\tt 2210.03146}}.

\bibitem{Travaglini:2022uwo}
G.~Travaglini et~al., \emph{{The SAGEX review on scattering amplitudes}}, \href{http://dx.doi.org/10.1088/1751-8121/ac8380}{\emph{J. Phys. A} {\bf 55} (2022) 443001}, [\href{http://arxiv.org/abs/2203.13011}{{\tt 2203.13011}}].

\bibitem{Alday:2008yw}
L.~F. Alday and R.~Roiban, \emph{{Scattering Amplitudes, Wilson Loops and the String/Gauge Theory Correspondence}}, \href{http://dx.doi.org/10.1016/j.physrep.2008.08.002}{\emph{Phys. Rept.} {\bf 468} (2008) 153--211}, [\href{http://arxiv.org/abs/0807.1889}{{\tt 0807.1889}}].

\bibitem{Henn:2009bd}
J.~M. Henn, \emph{{Duality between Wilson loops and gluon amplitudes}}, \href{http://dx.doi.org/10.1002/prop.200900048}{\emph{Fortsch. Phys.} {\bf 57} (2009) 729--822}, [\href{http://arxiv.org/abs/0903.0522}{{\tt 0903.0522}}].

\bibitem{Drummond:2008vq}
J.~M. Drummond, J.~Henn, G.~P. Korchemsky and E.~Sokatchev, \emph{{Dual superconformal symmetry of scattering amplitudes in N=4 super-Yang-Mills theory}}, \href{http://dx.doi.org/10.1016/j.nuclphysb.2009.11.022}{\emph{Nucl. Phys. B} {\bf 828} (2010) 317--374}, [\href{http://arxiv.org/abs/0807.1095}{{\tt 0807.1095}}].

\bibitem{Berkovits:2008ic}
N.~Berkovits and J.~Maldacena, \emph{{Fermionic T-Duality, Dual Superconformal Symmetry, and the Amplitude/Wilson Loop Connection}}, \href{http://dx.doi.org/10.1088/1126-6708/2008/09/062}{\emph{JHEP} {\bf 09} (2008) 062}, [\href{http://arxiv.org/abs/0807.3196}{{\tt 0807.3196}}].

\bibitem{Drummond:2009fd}
J.~M. Drummond, J.~M. Henn and J.~Plefka, \emph{{Yangian symmetry of scattering amplitudes in N=4 super Yang-Mills theory}}, \href{http://dx.doi.org/10.1088/1126-6708/2009/05/046}{\emph{JHEP} {\bf 05} (2009) 046}, [\href{http://arxiv.org/abs/0902.2987}{{\tt 0902.2987}}].

\bibitem{Caron-Huot:2020bkp}
S.~Caron-Huot, L.~J. Dixon, J.~M. Drummond, F.~Dulat, J.~Foster, O.~G\"urdo\u{g}an, M.~von Hippel, A.~J. McLeod and G.~Papathanasiou, \emph{{The Steinmann Cluster Bootstrap for $N$ = 4 Super Yang-Mills Amplitudes}}, \href{http://dx.doi.org/10.22323/1.376.0003}{\emph{PoS} {\bf CORFU2019} (2020) 003}, [\href{http://arxiv.org/abs/2005.06735}{{\tt 2005.06735}}].

\bibitem{Goncharov:2010jf}
A.~B. Goncharov, M.~Spradlin, C.~Vergu and A.~Volovich, \emph{{Classical Polylogarithms for Amplitudes and Wilson Loops}}, \href{http://dx.doi.org/10.1103/PhysRevLett.105.151605}{\emph{Phys. Rev. Lett.} {\bf 105} (2010) 151605}, [\href{http://arxiv.org/abs/1006.5703}{{\tt 1006.5703}}].

\bibitem{Arkani-Hamed:2009ljj}
N.~Arkani-Hamed, F.~Cachazo, C.~Cheung and J.~Kaplan, \emph{{A Duality For The S Matrix}}, \href{http://dx.doi.org/10.1007/JHEP03(2010)020}{\emph{JHEP} {\bf 03} (2010) 020}, [\href{http://arxiv.org/abs/0907.5418}{{\tt 0907.5418}}].

\bibitem{Arkani-Hamed:2012zlh}
N.~Arkani-Hamed, J.~L. Bourjaily, F.~Cachazo, A.~B. Goncharov, A.~Postnikov and J.~Trnka, \emph{{Grassmannian Geometry of Scattering Amplitudes}}.
\newblock Cambridge University Press, 4, 2016.
\newblock 10.1017/CBO9781316091548.

\bibitem{Arkani-Hamed:2014bca}
N.~Arkani-Hamed, J.~L. Bourjaily, F.~Cachazo, A.~Postnikov and J.~Trnka, \emph{{On-Shell Structures of MHV Amplitudes Beyond the Planar Limit}}, \href{http://dx.doi.org/10.1007/JHEP06(2015)179}{\emph{JHEP} {\bf 06} (2015) 179}, [\href{http://arxiv.org/abs/1412.8475}{{\tt 1412.8475}}].

\bibitem{Paranjape:2022ymg}
S.~Paranjape, J.~Trnka and M.~Zheng, \emph{{Non-planar BCFW Grassmannian geometries}}, \href{http://dx.doi.org/10.1007/JHEP12(2022)084}{\emph{JHEP} {\bf 12} (2022) 084}, [\href{http://arxiv.org/abs/2208.02262}{{\tt 2208.02262}}].

\bibitem{Brown:2022wqr}
T.~V. Brown, U.~Oktem and J.~Trnka, \emph{{Poles at infinity in on-shell diagrams}}, \href{http://dx.doi.org/10.1007/JHEP02(2023)003}{\emph{JHEP} {\bf 02} (2023) 003}, [\href{http://arxiv.org/abs/2212.06840}{{\tt 2212.06840}}].

\bibitem{Arkani-Hamed:2013jha}
N.~Arkani-Hamed and J.~Trnka, \emph{{The Amplituhedron}}, \href{http://dx.doi.org/10.1007/JHEP10(2014)030}{\emph{JHEP} {\bf 10} (2014) 030}, [\href{http://arxiv.org/abs/1312.2007}{{\tt 1312.2007}}].

\bibitem{Arkani-Hamed:2013kca}
N.~Arkani-Hamed and J.~Trnka, \emph{{Into the Amplituhedron}}, \href{http://dx.doi.org/10.1007/JHEP12(2014)182}{\emph{JHEP} {\bf 12} (2014) 182}, [\href{http://arxiv.org/abs/1312.7878}{{\tt 1312.7878}}].

\bibitem{Arkani-Hamed:2017vfh}
N.~Arkani-Hamed, H.~Thomas and J.~Trnka, \emph{{Unwinding the Amplituhedron in Binary}}, \href{http://dx.doi.org/10.1007/JHEP01(2018)016}{\emph{JHEP} {\bf 01} (2018) 016}, [\href{http://arxiv.org/abs/1704.05069}{{\tt 1704.05069}}].

\bibitem{Damgaard:2019ztj}
D.~Damgaard, L.~Ferro, T.~Lukowski and M.~Parisi, \emph{{The Momentum Amplituhedron}}, \href{http://dx.doi.org/10.1007/JHEP08(2019)042}{\emph{JHEP} {\bf 08} (2019) 042}, [\href{http://arxiv.org/abs/1905.04216}{{\tt 1905.04216}}].

\bibitem{Ferro:2022abq}
L.~Ferro and T.~Lukowski, \emph{{The Loop Momentum Amplituhedron}}, \href{http://dx.doi.org/10.1007/JHEP05(2023)183}{\emph{JHEP} {\bf 05} (2023) 183}, [\href{http://arxiv.org/abs/2210.01127}{{\tt 2210.01127}}].

\bibitem{Herrmann:2022nkh}
E.~Herrmann and J.~Trnka, \emph{{The SAGEX review on scattering amplitudes Chapter 7: Positive geometry of scattering amplitudes}}, \href{http://dx.doi.org/10.1088/1751-8121/ac8709}{\emph{J. Phys. A} {\bf 55} (2022) 443008}, [\href{http://arxiv.org/abs/2203.13018}{{\tt 2203.13018}}].

\bibitem{De:2024bpk}
S.~De, D.~Pavlov, M.~Spradlin and A.~Volovich, \emph{{From Feynman diagrams to the amplituhedron: a gentle review}},  in \emph{{Positive Geometry}}, 10, 2024.
\newblock \href{http://arxiv.org/abs/2410.11757}{{\tt 2410.11757}}.

\bibitem{Ranestad:2025qay}
K.~Ranestad, B.~Sturmfels and S.~Telen, \emph{{What is Positive Geometry?}},  \href{http://arxiv.org/abs/2502.12815}{{\tt 2502.12815}}.

\bibitem{Franco:2014csa}
S.~Franco, D.~Galloni, A.~Mariotti and J.~Trnka, \emph{{Anatomy of the Amplituhedron}}, \href{http://dx.doi.org/10.1007/JHEP03(2015)128}{\emph{JHEP} {\bf 03} (2015) 128}, [\href{http://arxiv.org/abs/1408.3410}{{\tt 1408.3410}}].

\bibitem{Dennen:2016mdk}
T.~Dennen, I.~Prlina, M.~Spradlin, S.~Stanojevic and A.~Volovich, \emph{{Landau Singularities from the Amplituhedron}}, \href{http://dx.doi.org/10.1007/JHEP06(2017)152}{\emph{JHEP} {\bf 06} (2017) 152}, [\href{http://arxiv.org/abs/1612.02708}{{\tt 1612.02708}}].

\bibitem{Arkani-Hamed:2018rsk}
N.~Arkani-Hamed, C.~Langer, A.~Yelleshpur~Srikant and J.~Trnka, \emph{{Deep Into the Amplituhedron: Amplitude Singularities at All Loops and Legs}}, \href{http://dx.doi.org/10.1103/PhysRevLett.122.051601}{\emph{Phys. Rev. Lett.} {\bf 122} (2019) 051601}, [\href{http://arxiv.org/abs/1810.08208}{{\tt 1810.08208}}].

\bibitem{Herrmann:2020qlt}
E.~Herrmann, C.~Langer, J.~Trnka and M.~Zheng, \emph{{Positive geometry, local triangulations, and the dual of the Amplituhedron}}, \href{http://dx.doi.org/10.1007/JHEP01(2021)035}{\emph{JHEP} {\bf 01} (2021) 035}, [\href{http://arxiv.org/abs/2009.05607}{{\tt 2009.05607}}].

\bibitem{Dian:2022tpf}
G.~Dian, P.~Heslop and A.~Stewart, \emph{{Internal boundaries of the loop amplituhedron}}, \href{http://dx.doi.org/10.21468/SciPostPhys.15.3.098}{\emph{SciPost Phys.} {\bf 15} (2023) 098}, [\href{http://arxiv.org/abs/2207.12464}{{\tt 2207.12464}}].

\bibitem{He:2023rou}
S.~He, Y.-t. Huang and C.-K. Kuo, \emph{{The ABJM Amplituhedron}}, \href{http://dx.doi.org/10.1007/JHEP09(2023)165}{\emph{JHEP} {\bf 09} (2023) 165}, [\href{http://arxiv.org/abs/2306.00951}{{\tt 2306.00951}}].

\bibitem{Arkani-Hamed:2023epq}
N.~Arkani-Hamed, W.~Flieger, J.~M. Henn, A.~Schreiber and J.~Trnka, \emph{{Coulomb Branch Amplitudes from a Deformed Amplituhedron Geometry}}, \href{http://dx.doi.org/10.1103/PhysRevLett.132.211601}{\emph{Phys. Rev. Lett.} {\bf 132} (2024) 211601}, [\href{http://arxiv.org/abs/2311.10814}{{\tt 2311.10814}}].

\bibitem{Ferro:2024vwn}
L.~Ferro, R.~Glew, T.~Lukowski and J.~Stalknecht, \emph{{The Two-loop MHV Momentum Amplituhedron from Fibrations of Fibrations}},  \href{http://arxiv.org/abs/2407.12906}{{\tt 2407.12906}}.

\bibitem{Dian:2024hil}
G.~Dian, E.~Mazzucchelli and F.~Tellander, \emph{{The two-loop Amplituhedron}},  \href{http://arxiv.org/abs/2410.11501}{{\tt 2410.11501}}.

\bibitem{Galashin:2018fri}
P.~Galashin and T.~Lam, \emph{{Parity duality for the amplituhedron}}, \href{http://dx.doi.org/10.1112/S0010437X20007411}{\emph{Compos. Math.} {\bf 156} (2020) 2207--2262}, [\href{http://arxiv.org/abs/1805.00600}{{\tt 1805.00600}}].

\bibitem{Lukowski:2020dpn}
T.~Lukowski, M.~Parisi and L.~K. Williams, \emph{{The Positive Tropical Grassmannian, the Hypersimplex, and the m = 2 Amplituhedron}}, \href{http://dx.doi.org/10.1093/imrn/rnad010}{\emph{Int. Math. Res. Not.} {\bf 2023} (2023) 16778--16836}, [\href{http://arxiv.org/abs/2002.06164}{{\tt 2002.06164}}].

\bibitem{Parisi:2021oql}
M.~Parisi, M.~Sherman-Bennett and L.~K. Williams, \emph{{The m=2 amplituhedron and the hypersimplex: signs, clusters, tilings, Eulerian numbers.}}, \href{http://dx.doi.org/10.1090/cams/23}{\emph{Commun. Am. Math. Soc.} {\bf 3} (2023) 329--399}, [\href{http://arxiv.org/abs/2104.08254}{{\tt 2104.08254}}].

\bibitem{Even-Zohar:2023del}
C.~Even-Zohar, T.~Lakrec, M.~Parisi, R.~Tessler, M.~Sherman-Bennett and L.~Williams, \emph{{Cluster algebras and tilings for the m=4 amplituhedron}},  \href{http://arxiv.org/abs/2310.17727}{{\tt 2310.17727}}.

\bibitem{Akhmedova:2023wcf}
E.~Akhmedova and R.~J. Tessler, \emph{{The Tropical Amplituhedron}},  \href{http://arxiv.org/abs/2312.12319}{{\tt 2312.12319}}.

\bibitem{Even-Zohar:2024nvw}
C.~Even-Zohar, T.~Lakrec, M.~Parisi, M.~Sherman-Bennett, R.~Tessler and L.~Williams, \emph{{A cluster of results on amplituhedron tiles}}, \href{http://dx.doi.org/10.1007/s11005-024-01854-4}{\emph{Lett. Math. Phys.} {\bf 114} (2024) 111}, [\href{http://arxiv.org/abs/2402.15568}{{\tt 2402.15568}}].

\bibitem{Lam:2024gyg}
T.~Lam, \emph{{On the face stratification of the $m=2$ amplituhedron}},  \href{http://arxiv.org/abs/2403.06948}{{\tt 2403.06948}}.

\bibitem{Parisi:2024psm}
M.~Parisi, M.~Sherman-Bennett, R.~Tessler and L.~Williams, \emph{{The Magic Number Conjecture for the $m=2$ amplituhedron and Parke-Taylor identities}},  \href{http://arxiv.org/abs/2404.03026}{{\tt 2404.03026}}.

\bibitem{Galashin:2024ttp}
P.~Galashin, \emph{{Amplituhedra and origami}},  \href{http://arxiv.org/abs/2410.09574}{{\tt 2410.09574}}.

\bibitem{Beisert:2006ez}
N.~Beisert, B.~Eden and M.~Staudacher, \emph{{Transcendentality and Crossing}}, \href{http://dx.doi.org/10.1088/1742-5468/2007/01/P01021}{\emph{J. Stat. Mech.} {\bf 0701} (2007) P01021}, [\href{http://arxiv.org/abs/hep-th/0610251}{{\tt hep-th/0610251}}].

\bibitem{Alday:2007hr}
L.~F. Alday and J.~M. Maldacena, \emph{{Gluon scattering amplitudes at strong coupling}}, \href{http://dx.doi.org/10.1088/1126-6708/2007/06/064}{\emph{JHEP} {\bf 06} (2007) 064}, [\href{http://arxiv.org/abs/0705.0303}{{\tt 0705.0303}}].

\bibitem{Drummond:2007au}
J.~M. Drummond, J.~Henn, G.~P. Korchemsky and E.~Sokatchev, \emph{{Conformal Ward identities for Wilson loops and a test of the duality with gluon amplitudes}}, \href{http://dx.doi.org/10.1016/j.nuclphysb.2009.10.013}{\emph{Nucl. Phys. B} {\bf 826} (2010) 337--364}, [\href{http://arxiv.org/abs/0712.1223}{{\tt 0712.1223}}].

\bibitem{Bern:2005iz}
Z.~Bern, L.~J. Dixon and V.~A. Smirnov, \emph{{Iteration of planar amplitudes in maximally supersymmetric Yang-Mills theory at three loops and beyond}}, \href{http://dx.doi.org/10.1103/PhysRevD.72.085001}{\emph{Phys. Rev. D} {\bf 72} (2005) 085001}, [\href{http://arxiv.org/abs/hep-th/0505205}{{\tt hep-th/0505205}}].

\bibitem{Alday:2009zm}
L.~F. Alday, J.~M. Henn, J.~Plefka and T.~Schuster, \emph{{Scattering into the fifth dimension of N=4 super Yang-Mills}}, \href{http://dx.doi.org/10.1007/JHEP01(2010)077}{\emph{JHEP} {\bf 01} (2010) 077}, [\href{http://arxiv.org/abs/0908.0684}{{\tt 0908.0684}}].

\bibitem{Flieger:2025ekn}
W.~Flieger, J.~Henn, A.~Schreiber and J.~Trnka, \emph{{Two-loop four-point amplitudes on the Coulomb branch of ${\mathcal{N}}=4$ super Yang-Mills}},  \href{http://arxiv.org/abs/2501.09454}{{\tt 2501.09454}}.

\bibitem{Alday:2011pf}
L.~F. Alday and A.~A. Tseytlin, \emph{{On strong-coupling correlation functions of circular Wilson loops and local operators}}, \href{http://dx.doi.org/10.1088/1751-8113/44/39/395401}{\emph{J. Phys. A} {\bf 44} (2011) 395401}, [\href{http://arxiv.org/abs/1105.1537}{{\tt 1105.1537}}].

\bibitem{Alday:2011ga}
L.~F. Alday, E.~I. Buchbinder and A.~A. Tseytlin, \emph{{Correlation function of null polygonal Wilson loops with local operators}}, \href{http://dx.doi.org/10.1007/JHEP09(2011)034}{\emph{JHEP} {\bf 09} (2011) 034}, [\href{http://arxiv.org/abs/1107.5702}{{\tt 1107.5702}}].

\bibitem{Henn:2019swt}
J.~M. Henn, G.~P. Korchemsky and B.~Mistlberger, \emph{{The full four-loop cusp anomalous dimension in $\mathcal{N}=4$ super Yang-Mills and QCD}}, \href{http://dx.doi.org/10.1007/JHEP04(2020)018}{\emph{JHEP} {\bf 04} (2020) 018}, [\href{http://arxiv.org/abs/1911.10174}{{\tt 1911.10174}}].

\bibitem{Alday:2012hy}
L.~F. Alday, P.~Heslop and J.~Sikorowski, \emph{{Perturbative correlation functions of null Wilson loops and local operators}}, \href{http://dx.doi.org/10.1007/JHEP03(2013)074}{\emph{JHEP} {\bf 03} (2013) 074}, [\href{http://arxiv.org/abs/1207.4316}{{\tt 1207.4316}}].

\bibitem{Alday:2013ip}
L.~F. Alday, J.~M. Henn and J.~Sikorowski, \emph{{Higher loop mixed correlators in N=4 SYM}}, \href{http://dx.doi.org/10.1007/JHEP03(2013)058}{\emph{JHEP} {\bf 03} (2013) 058}, [\href{http://arxiv.org/abs/1301.0149}{{\tt 1301.0149}}].

\bibitem{Chicherin:2022zxo}
D.~Chicherin and J.~Henn, \emph{{Pentagon Wilson loop with Lagrangian insertion at two loops in $ \mathcal{N} $ = 4 super Yang-Mills theory}}, \href{http://dx.doi.org/10.1007/JHEP07(2022)038}{\emph{JHEP} {\bf 07} (2022) 038}, [\href{http://arxiv.org/abs/2204.00329}{{\tt 2204.00329}}].

\bibitem{Arkani-Hamed:2021iya}
N.~Arkani-Hamed, J.~Henn and J.~Trnka, \emph{{Nonperturbative negative geometries: amplitudes at strong coupling and the amplituhedron}}, \href{http://dx.doi.org/10.1007/JHEP03(2022)108}{\emph{JHEP} {\bf 03} (2022) 108}, [\href{http://arxiv.org/abs/2112.06956}{{\tt 2112.06956}}].

\bibitem{Brown:2023mqi}
T.~V. Brown, U.~Oktem, S.~Paranjape and J.~Trnka, \emph{{Loops of Loops Expansion in the Amplituhedron}},  \href{http://arxiv.org/abs/2312.17736}{{\tt 2312.17736}}.

\bibitem{Chicherin:2024hes}
D.~Chicherin, J.~Henn, J.~Trnka and S.-Q. Zhang, \emph{{Positivity properties of five-point two-loop Wilson loops with Lagrangian insertion}},  \href{http://arxiv.org/abs/2410.11456}{{\tt 2410.11456}}.

\bibitem{Glew:2024zoh}
R.~Glew and T.~Lukowski, \emph{{Positive and Negative Ladders in Loop Space}},  \href{http://arxiv.org/abs/2411.14989}{{\tt 2411.14989}}.

\bibitem{He:2022cup}
S.~He, C.-K. Kuo, Z.~Li and Y.-Q. Zhang, \emph{{All-Loop Four-Point Aharony-Bergman-Jafferis-Maldacena Amplitudes from Dimensional Reduction of the Amplituhedron}}, \href{http://dx.doi.org/10.1103/PhysRevLett.129.221604}{\emph{Phys. Rev. Lett.} {\bf 129} (2022) 221604}, [\href{http://arxiv.org/abs/2204.08297}{{\tt 2204.08297}}].

\bibitem{Henn:2023pkc}
J.~M. Henn, M.~Lagares and S.-Q. Zhang, \emph{{Integrated negative geometries in ABJM}}, \href{http://dx.doi.org/10.1007/JHEP05(2023)112}{\emph{JHEP} {\bf 05} (2023) 112}, [\href{http://arxiv.org/abs/2303.02996}{{\tt 2303.02996}}].

\bibitem{Lagares:2024epo}
M.~Lagares and S.-Q. Zhang, \emph{{Higher-loop integrated negative geometries in ABJM}}, \href{http://dx.doi.org/10.1007/JHEP05(2024)142}{\emph{JHEP} {\bf 05} (2024) 142}, [\href{http://arxiv.org/abs/2402.17432}{{\tt 2402.17432}}].

\bibitem{Arkani-Hamed:2010pyv}
N.~Arkani-Hamed, J.~L. Bourjaily, F.~Cachazo and J.~Trnka, \emph{{Local Integrals for Planar Scattering Amplitudes}}, \href{http://dx.doi.org/10.1007/JHEP06(2012)125}{\emph{JHEP} {\bf 06} (2012) 125}, [\href{http://arxiv.org/abs/1012.6032}{{\tt 1012.6032}}].

\bibitem{Henn:2013pwa}
J.~M. Henn, \emph{{Multiloop integrals in dimensional regularization made simple}}, \href{http://dx.doi.org/10.1103/PhysRevLett.110.251601}{\emph{Phys. Rev. Lett.} {\bf 110} (2013) 251601}, [\href{http://arxiv.org/abs/1304.1806}{{\tt 1304.1806}}].

\bibitem{Cachazo:2008vp}
F.~Cachazo, \emph{{Sharpening The Leading Singularity}},  \href{http://arxiv.org/abs/0803.1988}{{\tt 0803.1988}}.

\bibitem{Chicherin:2022bov}
D.~Chicherin and J.~M. Henn, \emph{{Symmetry properties of Wilson loops with a Lagrangian insertion}}, \href{http://dx.doi.org/10.1007/JHEP07(2022)057}{\emph{JHEP} {\bf 07} (2022) 057}, [\href{http://arxiv.org/abs/2202.05596}{{\tt 2202.05596}}].

\bibitem{Arkani-Hamed:2014dca}
N.~Arkani-Hamed, A.~Hodges and J.~Trnka, \emph{{Positive Amplitudes In The Amplituhedron}}, \href{http://dx.doi.org/10.1007/JHEP08(2015)030}{\emph{JHEP} {\bf 08} (2015) 030}, [\href{http://arxiv.org/abs/1412.8478}{{\tt 1412.8478}}].

\bibitem{Dixon:2016apl}
L.~J. Dixon, M.~von Hippel, A.~J. McLeod and J.~Trnka, \emph{{Multi-loop positivity of the planar $ \mathcal{N} $ = 4 SYM six-point amplitude}}, \href{http://dx.doi.org/10.1007/JHEP02(2017)112}{\emph{JHEP} {\bf 02} (2017) 112}, [\href{http://arxiv.org/abs/1611.08325}{{\tt 1611.08325}}].

\bibitem{Henn:2024qwe}
J.~Henn and P.~Raman, \emph{{Positivity properties of scattering amplitudes}},  \href{http://arxiv.org/abs/2407.05755}{{\tt 2407.05755}}.

\bibitem{Arkani-Hamed:2009kmp}
N.~Arkani-Hamed, J.~Bourjaily, F.~Cachazo and J.~Trnka, \emph{{Unification of Residues and Grassmannian Dualities}}, \href{http://dx.doi.org/10.1007/JHEP01(2011)049}{\emph{JHEP} {\bf 01} (2011) 049}, [\href{http://arxiv.org/abs/0912.4912}{{\tt 0912.4912}}].

\bibitem{Dixon:2011pw}
L.~J. Dixon, J.~M. Drummond and J.~M. Henn, \emph{{Bootstrapping the three-loop hexagon}}, \href{http://dx.doi.org/10.1007/JHEP11(2011)023}{\emph{JHEP} {\bf 11} (2011) 023}, [\href{http://arxiv.org/abs/1108.4461}{{\tt 1108.4461}}].

\bibitem{Dixon:2011nj}
L.~J. Dixon, J.~M. Drummond and J.~M. Henn, \emph{{Analytic result for the two-loop six-point NMHV amplitude in N=4 super Yang-Mills theory}}, \href{http://dx.doi.org/10.1007/JHEP01(2012)024}{\emph{JHEP} {\bf 01} (2012) 024}, [\href{http://arxiv.org/abs/1111.1704}{{\tt 1111.1704}}].

\bibitem{Caron-Huot:2019vjl}
S.~Caron-Huot, L.~J. Dixon, F.~Dulat, M.~von Hippel, A.~J. McLeod and G.~Papathanasiou, \emph{{Six-Gluon amplitudes in planar $ \mathcal{N} $ = 4 super-Yang-Mills theory at six and seven loops}}, \href{http://dx.doi.org/10.1007/JHEP08(2019)016}{\emph{JHEP} {\bf 08} (2019) 016}, [\href{http://arxiv.org/abs/1903.10890}{{\tt 1903.10890}}].

\bibitem{Prlina:2017azl}
I.~Prlina, M.~Spradlin, J.~Stankowicz, S.~Stanojevic and A.~Volovich, \emph{{All-Helicity Symbol Alphabets from Unwound Amplituhedra}}, \href{http://dx.doi.org/10.1007/JHEP05(2018)159}{\emph{JHEP} {\bf 05} (2018) 159}, [\href{http://arxiv.org/abs/1711.11507}{{\tt 1711.11507}}].

\bibitem{Prlina:2017tvx}
I.~Prlina, M.~Spradlin, J.~Stankowicz and S.~Stanojevic, \emph{{Boundaries of Amplituhedra and NMHV Symbol Alphabets at Two Loops}}, \href{http://dx.doi.org/10.1007/JHEP04(2018)049}{\emph{JHEP} {\bf 04} (2018) 049}, [\href{http://arxiv.org/abs/1712.08049}{{\tt 1712.08049}}].

\bibitem{Penrose}
R.~Penrose, \emph{{Twistor Algebra}}, {\emph{Journal of Mathematical Physics} {\bf 8} (02, 1967) 345--366}.

\bibitem{Parke:1986gb}
S.~J. Parke and T.~R. Taylor, \emph{{An Amplitude for $n$ Gluon Scattering}}, \href{http://dx.doi.org/10.1103/PhysRevLett.56.2459}{\emph{Phys. Rev. Lett.} {\bf 56} (1986) 2459}.

\bibitem{Bourjaily:2015jna}
J.~L. Bourjaily and J.~Trnka, \emph{{Local Integrand Representations of All Two-Loop Amplitudes in Planar SYM}}, \href{http://dx.doi.org/10.1007/JHEP08(2015)119}{\emph{JHEP} {\bf 08} (2015) 119}, [\href{http://arxiv.org/abs/1505.05886}{{\tt 1505.05886}}].

\bibitem{Arkani-Hamed:2010zjl}
N.~Arkani-Hamed, J.~L. Bourjaily, F.~Cachazo, S.~Caron-Huot and J.~Trnka, \emph{{The All-Loop Integrand For Scattering Amplitudes in Planar N=4 SYM}}, \href{http://dx.doi.org/10.1007/JHEP01(2011)041}{\emph{JHEP} {\bf 01} (2011) 041}, [\href{http://arxiv.org/abs/1008.2958}{{\tt 1008.2958}}].

\bibitem{Hodges:2010kq}
A.~Hodges, \emph{{The Box Integrals in Momentum-Twistor Geometry}}, \href{http://dx.doi.org/10.1007/JHEP08(2013)051}{\emph{JHEP} {\bf 08} (2013) 051}, [\href{http://arxiv.org/abs/1004.3323}{{\tt 1004.3323}}].

\bibitem{Arkani_Hamed_2017}
N.~Arkani-Hamed, Y.~Bai and T.~Lam, \emph{Positive geometries and canonical forms}, \href{http://dx.doi.org/10.1007/jhep11(2017)039}{\emph{Journal of High Energy Physics} {\bf 2017} (Nov., 2017) }.

\bibitem{Ranestad:2024svp}
K.~Ranestad, R.~Sinn and S.~Telen, \emph{{Adjoints and Canonical Forms of Tree Amplituhedra}},  \href{http://arxiv.org/abs/2402.06527}{{\tt 2402.06527}}.

\bibitem{Arkani_Hamed_2018}
N.~Arkani-Hamed, H.~Thomas and J.~Trnka, \emph{Unwinding the amplituhedron in binary}, \href{http://dx.doi.org/10.1007/jhep01(2018)016}{\emph{Journal of High Energy Physics} {\bf 2018} (Jan., 2018) }.

\bibitem{Eden:2010ce}
B.~Eden, G.~P. Korchemsky and E.~Sokatchev, \emph{{More on the duality correlators/amplitudes}}, \href{http://dx.doi.org/10.1016/j.physletb.2012.02.014}{\emph{Phys. Lett. B} {\bf 709} (2012) 247--253}, [\href{http://arxiv.org/abs/1009.2488}{{\tt 1009.2488}}].

\bibitem{Henn:2013nsa}
J.~M. Henn, A.~V. Smirnov and V.~A. Smirnov, \emph{{Evaluating single-scale and/or non-planar diagrams by differential equations}}, \href{http://dx.doi.org/10.1007/JHEP03(2014)088}{\emph{JHEP} {\bf 03} (2014) 088}, [\href{http://arxiv.org/abs/1312.2588}{{\tt 1312.2588}}].

\bibitem{Lukowski:2019sxw}
T.~\L{}ukowski, M.~Parisi, M.~Spradlin and A.~Volovich, \emph{{Cluster Adjacency for $m=2$ Yangian Invariants}}, \href{http://dx.doi.org/10.1007/JHEP10(2019)158}{\emph{JHEP} {\bf 10} (2019) 158}, [\href{http://arxiv.org/abs/1908.07618}{{\tt 1908.07618}}].

\bibitem{Dian_2023}
G.~Dian, P.~Heslop and A.~Stewart, \emph{Internal boundaries of the loop amplituhedron}, \href{http://dx.doi.org/10.21468/scipostphys.15.3.098}{\emph{SciPost Physics} {\bf 15} (Sept., 2023) }.

\bibitem{Henn:2019mvc}
J.~Henn, B.~Power and S.~Zoia, \emph{{Conformal Invariance of the One-Loop All-Plus Helicity Scattering Amplitudes}}, \href{http://dx.doi.org/10.1007/JHEP02(2020)019}{\emph{JHEP} {\bf 02} (2020) 019}, [\href{http://arxiv.org/abs/1911.12142}{{\tt 1911.12142}}].

\bibitem{Gurdogan:2020tip}
O.~G\"urdo\u{g}an and M.~Parisi, \emph{{Cluster patterns in Landau and leading singularities via the amplituhedron}}, \href{http://dx.doi.org/10.4171/aihpd/155}{\emph{Ann. Inst. H. Poincare D Comb. Phys. Interact.} {\bf 10} (2023) 299--336}, [\href{http://arxiv.org/abs/2005.07154}{{\tt 2005.07154}}].

\bibitem{progress}
D.~Chicherin, J.~Henn, E.~Mazzucchelli, J.~Trnka, Q.~Yang and S.-Q. Zhang, \emph{{Integrated Negative Geometries from Landau Analysis}},  \href{http://arxiv.org/abs/in progress}{{\tt in progress}}.

\end{thebibliography}\endgroup

\end{document}